\documentclass[11pt]{article}
\usepackage[margin=1.0in]{geometry}
\usepackage[T1]{fontenc}
\usepackage{microtype}
\usepackage{xspace}
\usepackage{amsmath}
\usepackage{amsfonts}
\usepackage{amssymb}
\usepackage{amsthm}
\usepackage{mathtools}
\usepackage{mathrsfs}
\usepackage{parskip}
\usepackage{changepage}
\usepackage{tcolorbox}
\usepackage{bbm}
\usepackage{tikz-cd}
\definecolor{darkblue}{RGB}{40,100,160}
\usepackage[colorlinks=true,
    linkcolor=darkblue,
    citecolor=darkblue,
    urlcolor=darkblue]{hyperref}
\usepackage[nameinlink]{cleveref}
\usepackage{aliascnt}
\usepackage{enumitem}
\usepackage{stmaryrd}

\usepackage{tabularx}
\usepackage{booktabs}

\numberwithin{equation}{section}

\theoremstyle{plain}
\newtheorem{theorem}{Theorem}[section]
\newtheorem*{theorem*}{Theorem}
\newaliascnt{lemma}{theorem}
\newtheorem{lemma}[lemma]{Lemma}
\aliascntresetthe{lemma}
\Crefname{lemma}{Lemma}{Lemmas}
\newaliascnt{corollary}{theorem}

\aliascntresetthe{corollary}
\Crefname{corollary}{Corollary}{Corollaries}
\newaliascnt{proposition}{theorem}
\newtheorem{proposition}[proposition]{Proposition}
\aliascntresetthe{proposition}
\Crefname{proposition}{Proposition}{Propositions}
\newaliascnt{claim}{theorem}

\aliascntresetthe{claim}
\Crefname{claim}{Claim}{Claims}
\newaliascnt{definition}{theorem}
\newtheorem{definition}[definition]{Definition}
\aliascntresetthe{definition}
\Crefname{definition}{Definition}{Definitions}
\theoremstyle{definition}
\newaliascnt{assumption}{theorem}

\aliascntresetthe{assumption}
\Crefname{assumption}{Assumption}{Assumptions}
\theoremstyle{remark}
\newaliascnt{remark}{theorem}
\newtheorem{remark}[remark]{Remark}
\aliascntresetthe{remark}
\Crefname{remark}{Remark}{Remarks}

\newcommand{\bbN}{\mathbb{N}}

\newcommand{\bbR}{\mathbb{R}}

\newcommand{\mcB}{\mathcal{B}}
\newcommand{\mcC}{\mathcal{C}}
\newcommand{\mcD}{\mathcal{D}}
\newcommand{\mcE}{\mathcal{E}}
\newcommand{\mcF}{\mathcal{F}}

\newcommand{\mcL}{\mathcal{L}}

\newcommand{\mcQ}{\mathcal{Q}}

\newcommand{\mcW}{\mathcal{W}}
\newcommand{\mcX}{\mathcal{X}}

\DeclarePairedDelimiterX{\cbrak}[2]{[}{]}{{#1}\,\delimsize|\,{#2}}

\DeclarePairedDelimiterX{\ip}[2]{\langle}{\rangle}{{#1},{#2}}

\DeclarePairedDelimiterX{\braket}[2]{\langle}{\rangle}{{#1}\delimsize\vert{#2}}
\DeclarePairedDelimiterX{\ketbra}[2]{\vert}{\vert}{{#1}\delimsize\rangle\!\delimsize\langle{#2}}
\DeclarePairedDelimiterX{\matrixel}[3]{\langle}{\rangle}{{#1}\delimsize\vert{#2}\delimsize\vert{#3}}

\newcommand{\F}{\mathbb F}
\newcommand{\E}{\mathbb E}
\newcommand{\1}{\mathbf 1}
\newcommand{\eps}{\varepsilon}
\newcommand{\wt}{\operatorname{wt}}

\newcommand{\Tr}{\operatorname{Tr}}

\newcommand{\Fold}{\operatorname{Fold}}

\newcommand{\poly}{\operatorname{poly}}
\newcommand{\calB}{\mathcal B}

\newcommand{\val}{\mathrm{val}}

\title{Explicit Capacity-Achieving Quantum LDPC Codes \\
        List Decodable in Near-linear Time}
\author{}

\author{William Gay\thanks{{\tt University of Illinois, Urbana-Champaign}. {\tt whgay2@illinois.edu}. }  
   \and Fernando Granha Jeronimo\thanks{{\tt University of Illinois, Urbana-Champaign}. {\tt granha@illinois.edu}. } 
   \and Abhi Shukul\thanks{{\tt University of Illinois, Urbana-Champaign}. {\tt ashukul2@illinois.edu}. }}

\date{\today}

\begin{document}
\emergencystretch=2em
\maketitle

\begin{abstract}
  In classical coding theory, the quest for explicit codes achieving list decoding capacity has been an important driving force. While random codes are easily shown to achieve capacity, an explicit construction was only discovered decades later in the seminal work [Guruswami and Rudra, STOC 2006]. In quantum coding theory, it is highly desirable that a code family be LDPC. While good quantum codes were known for decades, obtaining the additional LDPC property was elusive. In fact, only very recently that good quantum LDPC codes were discovered in a breakthrough work [Panteleev and Kalachev, STOC 2022]. In this context, a natural question is to ask for an explicit family of
  quantum codes achieving list decoding capacity while also possessing the important LDPC property.

 In this work, we provide (to the best of our knowledge) the first explicit constructions of quantum LDPC codes achieving list decoding capacity, namely, with a list decoding radius approaching the quantum Singleton bound with constant list sizes. Furthermore, we provide near-linear time (in the block-length) list decoding algorithms approaching capacity.

 Our explicit code construction are based on expander graphs via the quantum analogue of Alon--Edmonds--Luby (AEL) amplification [Bergamaschi, Golowich and Gunn, STOC 2024], and this enables the important LDPC property. Our efficient list decoding algorithms are obtained by generalizing the classical expander-based weak-regularity list decoding algorithms [Srivastava and Tulsiani, FOCS 2025] [Jeronimo and Singh, 2025] to suitable instantiations of quantum AEL.
\end{abstract}

\clearpage
\setcounter{tocdepth}{2}
\tableofcontents
\clearpage

\section{Introduction}\label{sec:introduction}

Error-correcting codes protect information by adding redundancy, and the central quantitative question of coding theory is how much protection a given amount of redundancy can buy. For a classical code of block length $n$, dimension $k$, and minimum distance $d$, the Singleton bound states that $d\le n-k+1$: writing $R=k/n$ for the rate and $\delta=d/n$ for the relative distance, no code can beat $\delta\le 1-R$. A decoder required to identify the transmitted codeword uniquely corrects fewer than $d/2$ errors, so unique decoding stops at the relative radius $(1-R)/2$ even for optimal codes. \emph{List decoding} \cite{Elias57,Wozencraft58} removes this ceiling: the decoder outputs a short list guaranteed to contain every codeword near the received word, and the correctable radius can then be pushed to the distance scale $\delta$. 

Up to the information-theoretic limit $1-R$, explicit capacity-achieving codes and efficient decoders were discovered in the seminal work of Guruswami and Rudra \cite{GuruswamiR06} via beautiful algebraic construction of folded-Reed Solomon codes. Despite many amazing list decoding properties~\cite{AGGLZ25}, several algebraic constructions are not LDPC. 
On the combinatorial side, we have the expander-based code constructions: Tanner codes \cite{Tanner81,SS96} and the Alon--Edmonds--Luby (AEL) distance amplification \cite{AEL95}. 
These expander-based constructions can lead to LDPC codes and also to efficient decoding \cite{GI01,GI02,GI03}. However, it was not known whether these combinatorial construction could match the list decoding properties of algebraic codes.

A complementary line of work has recently unveiled that these combinatorial expander-based constructions can also provide amazing list decoding properties while at the same time being LDPC and constant alphabet size. Initially, Sum-of-Squares list decoding algorithms were obtained up 
to the Johnson bound \cite{JST23}. In~\cite{JMST25}, it was shown that AEL can achieve near optimal list sizes analogous to folded-Reed Solomon codes from the breakthrough result of Chen and Zhang~\cite{CZ25}. In \cite{JeronimoSingh25,ST25}, near-linear time list decoding algorithms approaching capacity based on Frieze and Kannan \cite{FK96} weak regularity were shown for these expander-based constructions. Very recently, in \cite{JS26}, it was shown that AEL can be used to derandomize a broad class of properties (LCL) of linear codes \cite{LMS25}. Concurrently, \cite{BCDZ25} showed an analogous result for folded-Reed Solomon codes. These results reveal the beautiful complementarity of the combinatorial and algebraic sides of coding theory and how both can approach optimal trade-offs.

Quantum error correction inherits this landscape with two important changes. A quantum CSS code \cite{CalderbankShor96,Steane96} on $n$ qudits over $\F_q$ is specified by a pair of classical codes $C_X,C_Z\subseteq\F_q^n$ with $C_Z^\perp\subseteq C_X$; the space $C_Z^\perp$ (respectively $C_X^\perp$) consists of the $X$-type (respectively $Z$-type) \emph{stabilizers}, which act trivially on the encoded information. The first change is quantitative: the quantum Singleton bound \cite{KnillLaflamme97,Rains99Nonbinary} states that $k\le n-2(d-1)$, so the relative distance of a quantum code is capped at $\delta\le (1-R)/2+o(1)$, namely, half the classical bound, and unique decoding is correspondingly capped near the $(1-R)/4$ scale. The second change is structural: because stabilizers act trivially, an error is physically meaningful only \emph{modulo stabilizers}. Every sound notion of decoding for quantum codes is therefore a statement about equivalence classes: a unique decoder must return \emph{some} error in the correct class, and a list decoder must return a short list of classes. In the language of CSS codes, these are cosets $h_X+C_Z^\perp$ containing every class consistent with a nearby corruption. 

A quantum code is LDPC if every stabilizer check acts on $O(1)$ qudits and every qudit participates in $O(1)$ checks. This is the regime in which syndrome extraction is local and plays a crucial role in quantum fault tolerance. After a long search, explicit asymptotically good\footnote{Constant rate and linear distance.} quantum LDPC codes were constructed in the breakthrough work of Panteleev and Kalachev \cite{PK22} using expander 
graphs. Subsequently, an alternative explicit construction, the quantum Tanner codes, was given by Leverrier and Z\'emor \cite{LeverrierZemor22QuantumTanner}. Efficient decoders followed \cite{PK22} quickly:  Dinur--Hsieh--Lin--Vidick, Gu--Pattison--Tang, and Leverrier--Z\'emor 
gave decoders that correct a constant fraction of adversarial errors in linear time \cite{DinurHsiehLinVidick23, GuPattisonTang23, LeverrierZemor22DecodingQTanner}. All of these are \emph{unique} decoders: they operate below half the minimum distance, hence below the $(1-R)/4$ scale. 

Going much beyond the unique decoding radius, Bergamaschi, Golowich, and Gunn \cite{BergamaschiGolowichGunn22AQEC} showed that \emph{approximate} quantum error correction, a notion going back to Cr{\'e}peau, Gottesman, and Smith~\cite{CGS05} and list coding of Leung and Smith \cite{LeungSmith08}, can approach the capacity $(1-R)/2$. They ported the AEL distance-amplification scheme to CSS codes (we call the result qAEL codes) and recovered the encoded state approximately after a near-Singleton fraction of errors. Moreover, they obtained non-LDPC quantum codes list decodable up to capacity. More recently, Shapiro and Matthews \cite{MS26} constructed non-LDPC quantum codes list decodable up to capacity over constant size alphabets.

For \emph{exact} list decoding of LDPC codes, Bergamaschi, Jeronimo, Mittal, Srivastava, and Tulsiani \cite{BergamaschiJeronimoMittalSrivastavaTulsiani24ListDecodableQLDPC} showed that qAEL codes can be list decoded in polynomial time up to the \emph{Johnson radius} $J(\delta)=1-\sqrt{1-\delta}$ via Sum-of-Squares relaxations. The Johnson radius, however, is far from the capacity, e.g. for $\delta\approx 1/2$, $J(\delta)\approx 1-1/\sqrt 2\approx 0.293$, well short of the target radius near $1/2$. Moreover, the Sum-of-Squares route incurs a large polynomial running time\footnote{$N^{\textrm{poly}(1/\epsilon)}$ where $N$ is the block-length and $\epsilon$ is the slack to the Johnson bound.}. 

In this context, the first motivating question of our work is the following:

\begin{center}
   \emph{Can we construct explict quantum LDPC codes combinatorially list decodable up to capacity?}
\end{center}

If such families are shown, then our next question is algorithmic:

\begin{center}
   \emph{Can we efficiently list decode explicit quantum LDPC codes up to capacity?}
\end{center}

\subsection{Main result}

We answer the both questions affirmatively. In fact, we show, to the best of our knowledge, 
the first explicit construction of quantum LDPC codes approaching capacity. 
Furthermore, we provide near-linear time (in the block-length) list decoding algorithms up to 
capacity for these codes.

\begin{theorem}[Main theorem]\label{thm:main}
For every $R\in(0,1)$ and every $\zeta,\xi>0$, there exist constants $b,\ell,w\in\bbN$,
an infinite set of block lengths $\mathcal N\subseteq\bbN$, and an explicit family
\[
    \{\mcQ_N:N\in\mathcal N\}
\]
of $\F_2$-linear vector-space CSS codes on $N$ folded blocks of size $b$ with the following properties for every $N\in\mathcal N$.
\begin{enumerate}[label=(\roman*),leftmargin=2.7em]
    \item Each $\mcQ_N$ has rate $R_N\ge R$.  Its $X$- and $Z$-check matrices have row and column weights at most $w$ on the underlying binary coordinates.
    \item There is an explicit lower bound $\delta_N\le\delta(\mcQ_N)$ satisfying
    \[
        \delta_N\ge \frac{1-R_N}{2}-\zeta.
    \]
    \item For some radius $\tau_N\ge\delta_N-\xi$, the code is $(\tau_N,\ell)$-list decodable: for every $X$- or $Z$-syndrome, at most $\ell = O_{R,\zeta,\xi}(1)$ stabilizer cosets contain a representative of block weight at most $\tau_NN$.\footnote{The list size $\ell$ scales like $\exp(\exp(\poly(1/\xi)))$. We did not attempt to optimize in this work.}
    \item There is a randomized syndrome-input decoder running in time
    \[
        \widetilde O_{R,\zeta,\xi}(N),
    \]
    which, with probability $1-o(1)$, outputs a list of at most $\ell$ cosets containing every such low-weight error class.  Every output representative is verified to have the prescribed syndrome.
\end{enumerate}
\end{theorem}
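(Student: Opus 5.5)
We will build $\mcQ_N$ by the quantum AEL (qAEL) amplification of Bergamaschi--Golowich--Gunn and then prove (i)--(iv) in order.

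\textbf{Construction.} Start from an explicit asymptotically good quantum LDPC CSS code $\mcQ_{\mathrm{out}}=(C_X^{\mathrm{out}},C_Z^{\mathrm{out}})$ on $N$ qudits over $\F_2^{k}$, for instance quantum Tanner codes, with rate $1-\eps$ and some constant distance $\delta_{\mathrm{out}}>0$. Take a constant-size inner CSS code $\mcQ_{\mathrm{in}}$ of length $d$ over $\F_2^{b_0}$ with rate $R+\eps$ and distance $(1-R)/2-\eps$; such a code can be found by brute force, since random CSS codes approach the quantum Singleton bound over large alphabets. Take a $d$-regular bipartite spectral expander $G=(L,R,E)$ with $\lambda\le\eps$, for example Ramanujan or zig-zag families. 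This family has block lengths only in an infinite set $\mathcal N$. Concatenate $\mcQ_{\mathrm{out}}$ with $\mcQ_{\mathrm{in}}$ on the left vertices. Then redistribute the symbols along $E$ and fold them at the right vertices into blocks of size $b=d b_0$. The rate is at least $(1-\eps)(R+\eps)\ge R$ after adjusting $\eps$. The checks are the outer checks pulled back through the inner encoders together with the inner checks. Both kinds have constant weight because $d$, $b_0$ and the outer LDPC weights are constants, which gives (i).

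\textbf{Distance.} We invoke the qAEL distance analysis of Bergamaschi--Golowich--Gunn, following the expander mixing lemma argument of classical AEL. Let $e$ be a nontrivial logical error of block weight $\tau N$. Either it corrupts at least $\delta_{\mathrm{out}}$ of the left vertices beyond the inner decoding radius, and then the mixing lemma forces $\tau\ge\delta_{\mathrm{in}}-\lambda/\delta_{\mathrm{out}}$. Or it is inner-correctable almost everywhere, and then it reduces to a stabilizer-equivalent outer error, which is trivial. So $\delta_N\ge(1-R)/2-\zeta$ once $\eps$ and $\lambda$ are small, which gives (ii).

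\textbf{List decoding, (iii)--(iv).} This is the main obstacle. We adapt the weak-regularity list decoder of Srivastava--Tulsiani and Jeronimo--Singh to syndromes and cosets. The key difficulty is that the objects are cosets $h+C_Z^\perp$ rather than codewords. The plan is to fix an arbitrary representative $h_0$ with the given syndrome, computed in near-linear time by solving the sparse system through the inner codes and the outer structure. Then list decode the coset $h_0+C_Z$, modulo $C_X^\perp$, as an affine classical code.

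The steps are as follows.
\begin{enumerate}[leftmargin=2em]
\item Apply Frieze--Kannan weak regularity on the expander. This approximates the agreement function of any candidate by $O_\xi(1)$ cut functions.
\item Enumerate the $\exp(\poly(1/\xi))$ regularity patterns. For each pattern, determine at every left vertex a constant-size list of local inner cosets. Inner list decodability up to $\delta_{\mathrm{in}}-\xi$ holds combinatorially for a random-like constant inner code.
\item Every close error class induces local choices that agree with the true outer class on at least a $1-\delta_{\mathrm{out}}/2$ fraction of left vertices. Run a linear-time unique decoder for $\mcQ_{\mathrm{out}}$ (for example Leverrier--Z\'emor) on randomly sampled local choices to recover the class. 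The randomization gives success probability $1-o(1)$.
\end{enumerate}

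The combinatorial bound $\ell$ follows from the same counting. Each low-weight class is pinned down by its regularity pattern together with local choices consistent with one outer decoding. This gives a doubly exponential $\ell$, as stated. Finally, verify each output's syndrome and discard duplicates modulo stabilizers using the stabilizer-membership test, which is linear-time through the LDPC structure. The delicate point is ensuring that the "agreement" used in weak regularity is stabilizer-invariant. We handle this by measuring, at each left vertex, distance to the local inner coset rather than to a fixed vector.
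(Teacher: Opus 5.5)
Your construction and your treatment of (i)--(ii) match the paper: qAEL with a high-rate quantum Tanner outer code, a brute-forced near-Singleton inner code, a near-Ramanujan expander, and the cited qAEL distance theorem. The gap is in (iii)--(iv). Your step 2 adds nothing, because the constant-size local list at each left vertex is already fixed by the local received word, independently of any regularity pattern. Nothing in steps 1--3 explains how a pattern selects a single label per vertex. Step 3 then samples entries at random, which agrees with the true class on only about a $1/\ell_{\rm in}$ fraction of vertices, not the $1-\rho_{\rm out}$ fraction the outer unique decoder needs. So the claimed $1-o(1)$ success does not follow.

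The missing idea is what makes weak regularity useful here:
\begin{enumerate}
\item Encode local agreement as a left-oriented CSP with edge-local constraints: label $t$ at $u$ satisfies $(u,v)$ iff the $t$-th list entry agrees with $r_u$ on that port.
\item Regularize the $\ell_{\rm in}$ indicators $g_t$ once, obtaining a constant-size factor $\calB_L$ on $L$.
\item For every low-weight error, the planted assignment $a_w$ on $S_w$ (with $|S_w|\ge(1-\alpha)n$) fully satisfies $E(S_w,T_w)$ and is rigid. Value stability under equal atomwise label counts, plus the count-preserving relabeling argument, then forces $a_w$ to be $\eta$-concentrated on $\calB_L|_{S_w}$.
\item Among the $\ell_{\rm in}^{|\calB_L|}$ factor-measurable assignments, one therefore agrees with the truth off $(\alpha+\eta_{\rm conc})n\le\rho_{\rm out}n$ vertices.
\end{enumerate}
The only randomness is in the regularity algorithm, and its single success event covers all target errors simultaneously. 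That is also what yields the combinatorial bound $\ell\le L_{\max}$; your ``counting'' does not establish it.

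Your proposed fix for stabilizer invariance points the wrong way. A coset $c+\mcC_Z^\perp$ has no well-defined value on a port. ``Distance to the local coset'' is a vertex-level quantity that does not decompose over edges, so cut-norm regularity on $G$ cannot control it. The paper instead keeps \emph{genuine} inner codewords as labels, so the planted assignment is exactly satisfying on $E(S_w,T_w)$. Rigidity then comes from two hypotheses: quotient distance when two labels differ by a non-stabilizer, and a separate bound $\delta_{\rm stab}(\mcC_X)\ge\delta_R$ on nonzero inner stabilizers when they differ by a stabilizer. Your inner code has only quantum distance, which does not exclude two list entries differing by a low-weight stabilizer.

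In the paper, working modulo stabilizers happens only at the end. The candidate is projected by $\pi_X$, and the outer decoder returns a word correct up to $\mcD_Z^\perp$, which lifts into $\Phi_X(\mcD_Z^\perp)+(\mcC_Z^\perp)^L=\mcE_Z^\perp$. Finally, your near-linear computation of a global representative $h_0$ is unjustified, since it requires solving the outer syndrome equations for an arbitrary syndrome. The paper avoids any global solve: it computes only the local lifts $r_u$ and carries the outer syndrome as an affine target $\sigma_s$. The outer decoder then only ever sees the syndrome of the low-weight difference $\widehat y_a-y$.
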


We note that the guarantee is only that every nearby coset is contained in the output list, so extra syndrome-consistent cosets may also be returned.  The theorem is stated for both $X$- and $Z$-errors, but in this paper, we will only give the $X$-decoder, since the $Z$-decoder follows simply by swapping $X$ and $Z$ due to their symmetry.

\begin{remark}
  Instantiating the construction with suitable choices of quantum AEL leads to explicit qLDPC codes with optimal list sizes $O(1/\epsilon)$ \cite{JMS26}. More precisely, we obtain list sizes at the quantum generalized Singleton bound and also optimal list sizes for list recovery \cite{JM26,JMS26}.
\end{remark}

\section{Proof Strategy}

The construction is a quantum AEL code built from three objects: a high-rate outer CSS code $\mcD$, a constant-size inner CSS code $\mcC$, and a regular bipartite expander $G=(L,R,E)$.  An outer symbol is encoded by $\mcC$, the resulting local coordinates are placed on the edges of $G$, and the edges are folded into blocks indexed by $R$.  The qAEL distance theorem~\cite{BergamaschiGolowichGunn22AQEC,BergamaschiJeronimoMittalSrivastavaTulsiani24ListDecodableQLDPC} tells us that the folded qAEL code inherits the distance of the inner code, with some loss depending on the expansion.  Our contribution is an algorithm that list decodes up to this distance in near-linear time.

\paragraph{Decoding from the syndrome and local lifting.}
The decoder receives a syndrome rather than a corrupted word. If we only consider the $X$ case (the $Z$ case follows in the same way), what this means is that if $H_X$ is the parity check matrix for $C_X$, then if a codeword $c_X \in C_X$ is corrupted by an error $e$ to get $c_X + e$, the list decoder receives the syndrome $s = H_X(c_X + e)$. A standard approach to list decoding involves solving Gaussian elimination to get some $w$ such that $H_Xw=s$. However, even when the parity-check matrix is sparse, the elimination can create dense intermediate matrices, which would lead to an algorithm that no longer takes near-linear time. We instead exploit some additional structure of the qAEL checks, namely that it consists of constant-size inner checks around each $u\in L$ and sparse checks inherited from the outer code.  From the local part of the syndrome, the decoder solves one constant-size system per left vertex and obtains a lift $r_u$.  If $e^R$ is any physical error with the given syndrome and $e$ is its unfolded edge word, then
\[
    w:=e+r
\]
has $w_u\in\mcC_X$ at every left vertex. Thus, the unknown error has been converted, without a global solve, into an unknown collection of inner codewords whose folding is close to the known folded lift $r^R$.  The outer part of the syndrome is carried forward as an \emph{affine outer syndrome} and never converted into a global received word.

\paragraph{The agreement CSP.}
We assume that the inner code is classically list decodable, so for each $u\in L$, our algorithm lists the inner codewords within radius $\rho_{\rm in}\Delta$ of $r_u$.  The number of such codewords will be at most a constant $\ell_{\rm in}$.  Introduce a variable $A_u\in[\ell_{\rm in}]$ selecting one local list entry.  For an edge $(u,v)$, the constraint asks whether the selected inner word agrees with $r$ on the port corresponding to that edge. This is a left-oriented agreement CSP, where variables live only on $L$, while the right vertices identify groups of edge constraints.

Why does this CSP encode list decoding?  Fix a low-weight error $e^R$ and the corresponding word $w=e+r$.  Let $T\subseteq R$ be the right blocks on which $e^R$ is zero.  Since the error has small folded weight, $T$ is large.  Expansion implies that for all but a small fraction of left vertices, most incident edges enter $T$; call this good set $S\subseteq L$.  On every $u\in S$, the true inner block $w_u$ lies in the local list.  Selecting it gives an assignment that satisfies every constraint in the rectangle $E(S,T)$.  Therefore, finding a short family of assignments covering all such assignments is exactly the candidate-generation part of quantum list decoding.

\paragraph{Rigidity, weak regularity, and the quantum difference.}
A satisfying assignment is useful only if it is rigid.  Changing its label at a good vertex must violate many constraints into $T$.  In a classical code, this follows from the distance between two distinct local codewords.  Quantumly there are two cases.  Two local candidates may represent different logical cosets, in which case the quotient distance of $\mcC_X/\mcC_Z^\perp$ gives the required separation.  Or they may differ by a nonzero stabilizer in $\mcC_Z^\perp$, in which case quotient distance sees no difference at all, so a separate lower bound on the weight of nonzero inner stabilizers is needed.  

We then apply the efficient weak-regularity decomposition for expanding CSPs~\cite{JST21,Jer23,JeronimoSingh25} to the constant number of edge-indicator functions describing the agreement instance.  The decomposition partitions $L$ into a constant number of atoms, depending only on the decoding slack and the local list size.  Constraint values are approximately determined by the label frequencies inside these atoms.  As in~\cite{JeronimoSingh25, ST25},
a count-preserving relabeling argument shows that a fully satisfying rigid assignment must be close to an assignment that is constant on every atom.  Enumerating all atom-constant assignments therefore produces a blocklength-independent candidate family, simultaneously covering every low-weight error coset on the success event of the regularity algorithm.

\paragraph{Outer stitching modulo stabilizers.}
Each surviving assignment is projected from an inner codeword to its logical component in $\mcC_X/\mcC_Z^\perp$, producing a noisy outer word.  The candidate-generation bound places this word within the unique-decoding radius of the outer quantum Tanner code.  The outer decoder returns a word correct only up to an outer stabilizer, which is exactly sufficient: after re-encoding and folding, changing the outer word by an element of $\mcD_Z^\perp$ changes the qAEL word only by an element of $\mcF_Z^\perp$.  Hence the stitching stage recovers the correct global error coset even though no canonical representative is ever identified.

\paragraph{Final Instantiation} Finally, in order to instantiate our code, we choose a constant-size inner code approaching the quantum singleton bound, a high-rate quantum Tanner outer code with a near-linear syndrome decoder, and a sufficiently strong constant-degree expander.  The outer rate is chosen close enough to one that the final rate has the desired value, and the expander degree is chosen large enough to make both the distance loss and the candidate-generation error fit within the prescribed gaps.

\section{Preliminaries}\label{sec:preliminaries}
This section fixes notation and states the algebraic and combinatorial facts used by the decoder.  Proofs of short linear algebra and expander facts are deferred to \Cref{app:prelim-proofs}. Advanced theorems are quoted from the literature.

Throughout, $\widetilde O(\cdot)$ suppresses polylogarithmic factors in the growing block length.  A randomized statement that holds \emph{with high probability} has success probability $1-o(1)$ as that block length tends to infinity, with all target and local parameters held fixed.  Decoding times assume that the code, graph, and fixed local linear maps have already been constructed; all field sizes and local block sizes are treated as constants.

\subsection{CSS codes and Quantum List Decoding}\label{subsec:css-list}

Fix a prime power $q$.  For a positive integer $b$, a block symbol is an element of $\F_q^b$.  All linearity is over $\F_q$, even when $\F_q^b$ is identified set-theoretically with a larger field.  For $x,y\in(\F_q^b)^N$, define the block Hamming distance and weight by
\[
    \mathrm{dist}_b(x,y):=\big|\{i\in[N]:x_i\ne y_i\}\big|,
    \qquad
    \wt_b(x):=\mathrm{dist}_b(x,0).
\]

\begin{definition}\label{def:css-code}
A vector-space CSS code of blocklength $N$ and block size $b$ over $\F_q$ is a pair
\[
    \mcQ=(\mcQ_X,\mcQ_Z),
    \qquad
    \mcQ_X,\mcQ_Z\subseteq(\F_q^b)^N,
\]
of $\F_q$-linear subspaces satisfying
\[
    \mcQ_Z^\perp\subseteq\mcQ_X,
    \qquad\text{equivalently}\qquad
    \mcQ_X^\perp\subseteq\mcQ_Z,
\]
where orthogonality is taken with respect to the standard $\F_q$-bilinear inner product on $\F_q^{bN}$.
\end{definition}

All CSS codes used below have positive logical dimension.  The logical $\F_q$-dimension, folded dimension, and rate are
\[
    K_q(\mcQ):=\dim_{\F_q}\mcQ_X-\dim_{\F_q}\mcQ_Z^\perp,
    \qquad
    k(\mcQ):=\frac{K_q(\mcQ)}{b},
    \qquad
    R(\mcQ):=\frac{K_q(\mcQ)}{bN}.
\]
The $X$- and $Z$-distances are
\[
    d_X(\mcQ):=\min\{\wt_b(x):x\in\mcQ_X\setminus\mcQ_Z^\perp\},
    \qquad
    d_Z(\mcQ):=\min\{\wt_b(z):z\in\mcQ_Z\setminus\mcQ_X^\perp\},
\]
and the relative folded distance is
\[
    \delta(\mcQ):=\frac{\min\{d_X(\mcQ),d_Z(\mcQ)\}}{N}.
\]

Let $H_X:(\F_q^b)^N\to\F_q^{m_X}$ and $H_Z:(\F_q^b)^N\to\F_q^{m_Z}$ be parity-check maps with
\[
    \mcQ_X=\ker H_X,
    \qquad
    \mcQ_Z=\ker H_Z.
\]
A family is LDPC if both matrices have row and column weights bounded by constants independent of $N$, measured on the underlying $\F_q$-coordinates.

For $\tau\in(0,1)$ and $g\in(\F_q^b)^N$, write
\[
    B_\tau(g):=\{x\in(\F_q^b)^N:
                    \mathrm{dist}_b(x,g)\le\tau N\}.
\]
For syndromes $s_X\in\F_q^{m_X}$ and $s_Z\in\F_q^{m_Z}$, define the physical error lists
\begin{align*}
L_X(s_X,\tau)
    &:=\left\{
      e_X+\mcQ_Z^\perp \mid
      H_Xe_X=s_X,
      \ B_\tau(0)\cap (e_X+\mcQ_Z^\perp) \neq \emptyset
    \right\},\\
L_Z(s_Z,\tau)
    &:=\left\{
      e_Z+\mcQ_X^\perp \mid
      H_Ze_Z=s_Z,
      \ B_\tau(0)\cap (e_Z+\mcQ_X^\perp) \neq \emptyset
    \right\}.
\end{align*}

These are lists of cosets that contain a low-weight error vector that yields the given syndrome. Each element of the cosets yields the given syndrome because $\mcQ_Z^\perp \subseteq \ker H_X$ and $\mcQ_X^\perp \subseteq \ker H_Z$.

\begin{definition}\label{def:quantum-list-decoding}
For $\tau\in(0,1)$ and $\ell\in\bbN$, the code $\mcQ$ is \emph{$(\tau,\ell)$-list decodable} if
\[
    |L_X(s_X,\tau)|\le\ell,
    \qquad
    |L_Z(s_Z,\tau)|\le\ell
\]
for every pair of syndromes $s_X,s_Z$.
\end{definition}

\subsection{Bipartite Spectral Expanders}\label{subsec:expanders}

\begin{definition}\label{def:bipartite-expander}
Let $G=(L,R,E)$ be a $\Delta$-regular bipartite graph with $|L|=|R|=n$.  Let $A_G\in\bbR^{L\times R}$ be its bi-adjacency matrix, with singular values
\[
    \Delta=\sigma_1\ge\sigma_2\ge\cdots\ge\sigma_n\ge0.
\]
We call $G$ an $(n,\Delta,\lambda)$-bipartite expander if $\sigma_2\le\lambda$.
\end{definition}

For $S\subseteq L$ and $T\subseteq R$, write $E(S,T):=E\cap(S\times T)$.  The expander mixing lemma gives
\begin{equation}\label{eq:eml}
    \left||E(S,T)|-\frac{\Delta}{n}|S||T|\right|
    \le\lambda\sqrt{|S||T|}.
\end{equation}
We use the following standard consequence, whose proof appears in \Cref{app:prelim-proofs}.

\begin{lemma}\label{lem:neighborhood-concentration}
Fix $a,\beta\in(0,1)$ with $a>\beta$.  If $T\subseteq R$ satisfies $|T|\ge an$ and
\[
    S:=\{u\in L:|N(u)\cap T|\ge(a-\beta)\Delta\},
\]
then
\[
    |S|\ge\left(1-\frac{\lambda^2}{\Delta^2\beta^2}\right)n.
\]
The symmetric statement with $L$ and $R$ interchanged also holds.
\end{lemma}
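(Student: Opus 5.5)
We bound the complement $B:=L\setminus S$, the set of left vertices that see fewer than $(a-\beta)\Delta$ edges into $T$, using a second-moment (spectral) argument rather than the two-set form \eqref{eq:eml}. Let $\1_T\in\bbR^R$ be the indicator of $T$, and write $\1_T=\frac{|T|}{n}\1+f$ with $f\perp\1$. For $u\in L$ we have $(A_G\1_T)_u=|N(u)\cap T|$. Since $G$ is $\Delta$-regular on both sides, $A_G\1=\Delta\1$, so
\[
    |N(u)\cap T|=\frac{\Delta|T|}{n}+(A_Gf)_u .
\]

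The key spectral fact is that $A_G$ maps $\1^\perp\subseteq\bbR^R$ into $\1^\perp\subseteq\bbR^L$ with operator norm at most $\sigma_2\le\lambda$. This holds because the top singular pair of $A_G$ is $(\1/\sqrt n,\1/\sqrt n)$ with singular value $\Delta$: regularity gives $A_G^\top A_G\1=\Delta^2\1$. Hence
\[
    \|A_Gf\|_2^2\le\lambda^2\|f\|_2^2=\lambda^2|T|\Bigl(1-\frac{|T|}{n}\Bigr)\le\lambda^2 n .
\]

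Now take $u\in B$. Since $|T|\ge an$, we get $\Delta|T|/n\ge a\Delta$, so $(A_Gf)_u<(a-\beta)\Delta-a\Delta=-\beta\Delta$. Therefore $|(A_Gf)_u|>\beta\Delta$ for every $u\in B$. Summing over $B$ gives
\[
    |B|\,\beta^2\Delta^2<\|A_Gf\|_2^2\le\lambda^2 n,
\]
and hence
\[
    |B|\le\frac{\lambda^2}{\Delta^2\beta^2}\,n,
\]
which is exactly the claimed lower bound on $|S|$.

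The symmetric statement follows by applying the same argument to $A_G^\top$, which has the same singular values, with the roles of $L$ and $R$ exchanged.

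The only step needing care is justifying that $\sigma_2$ controls $A_G$ on $\1^\perp$. This reduces to noting that $\1$ is a top right singular vector and that the singular value decomposition then restricts to its orthogonal complement, which is routine. Alternatively, one could apply \eqref{eq:eml} directly to the pair $(B,T)$. That route yields a bound with $\lambda\sqrt{|B||T|}$, which after squaring gives $|B|\le\lambda^2|T|/(\beta^2\Delta^2)\le\lambda^2 n/(\beta^2\Delta^2)$. This also works and could serve as a fallback. I would present the variance argument, since it is self-contained.
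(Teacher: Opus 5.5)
Your proof is correct, but it takes a different route from the paper. The paper's proof is exactly your fallback. It sets $B=L\setminus S$ and bounds $|E(B,T)|<(a-\beta)\Delta|B|$ by counting. It then lower-bounds $|E(B,T)|\ge a\Delta|B|-\lambda\sqrt{n|B|}$ using the two-set mixing lemma \eqref{eq:eml}, and compares the two bounds to get $\beta\Delta|B|<\lambda\sqrt{n|B|}$.

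Your main argument instead decomposes $\1_T=\frac{|T|}{n}\1+f$ and controls the whole deviation vector $A_Gf$ in $\ell_2$ over all of $L$ at once. It uses only that $\1$ is a top singular vector, so that $A_G$ has norm at most $\sigma_2\le\lambda$ on $\1^\perp$. In effect you reprove the needed piece of the mixing lemma from the singular-value definition rather than citing it.

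The payoff is a stronger, two-sided, all-vertex statement: $\sum_{u\in L}\bigl(|N(u)\cap T|-\Delta|T|/n\bigr)^2\le\lambda^2|T|(1-|T|/n)\le\lambda^2 n/4$. This would improve the stated bound on $|L\setminus S|$ by a factor of $4$, which you discard by bounding $|T|(1-|T|/n)\le n$. The paper's route is shorter once \eqref{eq:eml} is available as a black box.

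One small point: the strict inequality $|B|\beta^2\Delta^2<\|A_Gf\|_2^2$ requires $B\neq\emptyset$. For $T=R$ one has $f=0$ and both sides vanish. The empty case is trivial, which is why the paper writes ``if $B\ne\emptyset$''.
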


\begin{theorem}[{Explicit Expanders, \cite[Thm.~1.3]{Alon21}}]\label{thm:alon-prescribed-size}
For every integer $\Delta\ge3$ and every $\eta>0$, all sufficiently large admissible $n$ admit a deterministic polynomial-time construction of a $\Delta$-regular graph on exactly $n$ vertices whose nontrivial adjacency eigenvalues have absolute value at most $2\sqrt{\Delta-1}+\eta$.  Taking the bipartite double cover yields an explicit $(n,\Delta,\lambda)$-bipartite expander with
\[
    \lambda\le2\sqrt{\Delta-1}+\eta.
\]
\end{theorem}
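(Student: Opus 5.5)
The first sentence of the statement (the construction of a $\Delta$-regular graph $H$ on $n$ vertices whose nontrivial eigenvalues are bounded) is quoted directly from \cite[Thm.~1.3]{Alon21}, and I will not reprove it. What remains is the deduction in the second sentence: that the bipartite double cover of $H$ is an explicit $(n,\Delta,\lambda)$-bipartite expander in the sense of \Cref{def:bipartite-expander} with $\lambda\le 2\sqrt{\Delta-1}+\eta$. The plan is to identify the bi-adjacency matrix of the double cover with the adjacency matrix of $H$, and then read off its singular values from the spectrum of $H$.

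\emph{Structure and explicitness.} Let $H=(V,E_H)$ be the graph produced by \cite{Alon21}, with $|V|=n$ and adjacency matrix $A_H$. Define $G=(L,R,E)$ by taking $L=V\times\{0\}$ and $R=V\times\{1\}$, and joining $(u,0)$ to $(v,1)$ exactly when $uv\in E_H$. Then $|L|=|R|=n$, and every vertex of $G$ has degree $\Delta$ because $H$ is $\Delta$-regular. The construction of $G$ from $H$ takes linear time, so $G$ inherits the deterministic polynomial-time construction.

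\emph{Spectral bound.} Under the obvious identification of $L$ and $R$ with $V$, the bi-adjacency matrix is $A_G=A_H$, which is real symmetric. Hence the singular values of $A_G$ are the absolute values $|\mu_1|,\dots,|\mu_n|$ of the eigenvalues of $A_H$. The trivial eigenvalue is $\mu_1=\Delta$, with eigenvector the all-ones vector. Every other eigenvalue satisfies $|\mu_i|\le 2\sqrt{\Delta-1}+\eta$ by the quoted theorem. I split into two cases according to whether this bound is below $\Delta$.
\begin{itemize}
\item If $2\sqrt{\Delta-1}+\eta<\Delta$, then $\Delta$ strictly dominates all other $|\mu_i|$. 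Therefore $\sigma_1=\Delta$, and $\sigma_2=\max_{i\ge2}|\mu_i|\le 2\sqrt{\Delta-1}+\eta$.
\item Otherwise the claim is trivial. Every singular value of a $\Delta$-regular bi-adjacency matrix is at most $\Delta$, since $\|A_G\|\le\Delta$ by Schur's test, so $\sigma_2\le\Delta\le 2\sqrt{\Delta-1}+\eta$.
\end{itemize}

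\emph{Main obstacle.} The only subtle point is the meaning of ``nontrivial'' eigenvalues. If $H$ were bipartite, then $-\Delta$ would be an eigenvalue, giving a second singular value equal to $\Delta$. The argument therefore relies on the quoted bound covering \emph{all} eigenvalues other than the single trivial eigenvalue $\Delta$, with $-\Delta$ not excluded; this is the convention in \cite{Alon21}. In the case $2\sqrt{\Delta-1}+\eta<\Delta$, this forces $H$ to be connected and non-bipartite. With that convention fixed, the remaining steps are routine.
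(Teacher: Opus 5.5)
Your proposal is correct and matches the paper's treatment. The paper only quotes Alon's Theorem~1.3 and asserts the double-cover step without argument, and your identification $A_G=A_H$, which makes the singular values equal to the $|\mu_i|$ under Alon's convention that only the eigenvalue $\Delta$ is trivial, is exactly the implicit justification. Your case split is unnecessary: since $|\mu_i|\le\Delta$ always holds, you get $\sigma_2=\max_{i\ge2}|\mu_i|\le 2\sqrt{\Delta-1}+\eta$ directly.
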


\subsection{Quantum AEL Codes}\label{subsec:qael}

Fix an $(n,\Delta,\lambda)$-bipartite expander $G=(L,R,E)$.  For each $x\in L\sqcup R$, fix a bijection
\[
    \mathrm{nbr}_x:N(x)\to[\Delta]
\]
such that, for every edge $e=(u,v)$,
\[
    \mathrm{idx}(e):=\mathrm{nbr}_u(v)=\mathrm{nbr}_v(u).
\]
Such a numbering follows from a proper $\Delta$-edge-coloring.
Fix an inner vector-space CSS code
\[
    \mcC=(\mcC_X,\mcC_Z),
    \qquad
    \mcC_X,\mcC_Z\subseteq(\F_q^{b_{\rm in}})^\Delta,
    \qquad
    \mcC_Z^\perp\subseteq\mcC_X,
\]
and an outer vector-space CSS code
\[
    \mcD=(\mcD_X,\mcD_Z),
    \qquad
    \mcD_X,\mcD_Z\subseteq(\F_q^{b_{\rm out}})^L,
    \qquad
    \mcD_Z^\perp\subseteq\mcD_X,
\]
with
\[
    b_{\rm out}=K_q(\mcC)
    =\dim_{\F_q}\mcC_X-\dim_{\F_q}\mcC_Z^\perp.
\]
Choose direct-sum decompositions
\begin{equation}\label{eq:WXWZ}
    \mcC_X=\mcW_X\oplus\mcC_Z^\perp,
    \qquad
    \mcC_Z=\mcW_Z\oplus\mcC_X^\perp,
\end{equation}
and let $\pi_X:\mcC_X\to\mcW_X$ and $\pi_Z:\mcC_Z\to\mcW_Z$ be the corresponding projections.

\begin{lemma}\label{lem:duality-embeddings}
There are $\F_q$-linear isomorphisms
\[
    \phi_X:\F_q^{b_{\rm out}}\to\mcW_X,
    \qquad
    \phi_Z:\F_q^{b_{\rm out}}\to\mcW_Z
\]
satisfying
\begin{equation}\label{eq:duality}
    \langle\phi_X(a),\phi_Z(b)\rangle=\langle a,b\rangle
    \qquad
    \text{for all }a,b\in\F_q^{b_{\rm out}}.
\end{equation}
Given bases of $\mcW_X$ and $\mcW_Z$, the maps can be found by Gaussian elimination over $\F_q$.
\end{lemma}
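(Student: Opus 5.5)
The plan is to show that the bilinear pairing $\langle\cdot,\cdot\rangle$ restricted to $\mcW_X\times\mcW_Z$ is nondegenerate, and then to pick dual bases. First we check dimensions. We have $\dim\mcW_X=\dim\mcC_X-\dim\mcC_Z^\perp=K_q(\mcC)=b_{\rm out}$. Also $\dim\mcW_Z=\dim\mcC_Z-\dim\mcC_X^\perp=\dim\mcC_Z-(b_{\rm in}\Delta-\dim\mcC_X)$, and since $\dim\mcC_Z^\perp=b_{\rm in}\Delta-\dim\mcC_Z$, this also equals $K_q(\mcC)=b_{\rm out}$.

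Next we establish nondegeneracy. Suppose $x\in\mcW_X$ satisfies $\langle x,z\rangle=0$ for all $z\in\mcW_Z$. Since $x\in\mcC_X$, it is automatically orthogonal to $\mcC_X^\perp$. Using $\mcC_Z=\mcW_Z\oplus\mcC_X^\perp$, it follows that $x\perp\mcC_Z$, so $x\in\mcC_Z^\perp$. But $\mcW_X\cap\mcC_Z^\perp=0$ by the direct sum in \eqref{eq:WXWZ}, hence $x=0$. The map $\mcW_X\to\mcW_Z^*$, $x\mapsto\langle x,\cdot\rangle$, is therefore injective, and by the equality of dimensions it is an isomorphism.

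For the construction, take any basis $x_1,\dots,x_{b_{\rm out}}$ of $\mcW_X$. The Gram matrix $M_{ij}=\langle x_i,y_j\rangle$ against a basis $y_j$ of $\mcW_Z$ is invertible by nondegeneracy. Set $z=y M^{-1}$ in the appropriate orientation, so that $\langle x_i,z_j\rangle=\delta_{ij}$. Then define $\phi_X(e_i)=x_i$ and $\phi_Z(e_j)=z_j$ and extend linearly; bilinearity gives \eqref{eq:duality}. Computing $M$ and inverting it is Gaussian elimination over $\F_q$.

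The only substantive step is the nondegeneracy argument. It uses both halves of the decomposition: orthogonality of $\mcC_X$ to $\mcC_X^\perp$, and the trivial intersection $\mcW_X\cap\mcC_Z^\perp=0$. The rest is routine.
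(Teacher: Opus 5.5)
Your proposal is correct and follows essentially the same route as the paper. Both arguments show nondegeneracy of the pairing on $\mcW_X\times\mcW_Z$ using $\mcC_Z=\mcW_Z\oplus\mcC_X^\perp$ and $\mcW_X\cap\mcC_Z^\perp=\{0\}$, then invert the Gram matrix to obtain $\phi_Z(b)=\sum_j(M^{-1}b)_jz_j$. The only difference is minor: you get two-sided nondegeneracy from the dimension count, and you actually verify $\dim\mcW_Z=b_{\rm out}$, which the paper merely asserts, whereas the paper repeats the argument with $X$ and $Z$ swapped.
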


We can extend $\phi_X$ and $\phi_Z$ coordinatewise to get maps
\[
    \Phi_X,\Phi_Z:(\F_q^{b_{\rm out}})^L
    \longrightarrow
    \big((\F_q^{b_{\rm in}})^\Delta\big)^L.
\]
For an edge word $w\in((\F_q^{b_{\rm in}})^\Delta)^L$, define its folding $\Fold(w)\in((\F_q^{b_{\rm in}})^\Delta)^R$ by
\[
    \Fold(w)_{v,i}
    :=w_{\mathrm{nbr}_v^{-1}(i),i}
    \qquad
    (v\in R,\ i\in[\Delta]).
\]
Folding is a coordinate permutation, so it preserves the underlying inner product.

\begin{definition}\label{def:qael-code}
The concatenated CSS spaces induced by $\mcC$ and $\mcD$ are
\begin{align}
    \mcE_X
    &:=\Phi_X(\mcD_X)+(\mcC_Z^\perp)^L,
    \label{eq:EX}\\
    \mcE_Z
    &:=\Phi_Z(\mcD_Z)+(\mcC_X^\perp)^L.
    \label{eq:EZ}
\end{align}
The folded qAEL CSS code is $\mcF=(\mcF_X,\mcF_Z)$, where
\[
    \mcF_X:=\Fold(\mcE_X),
    \qquad
    \mcF_Z:=\Fold(\mcE_Z).
\]
It has folded blocklength $n$ and folded block size $\Delta b_{\rm in}$.
\end{definition}

With this definition of AEL amplification for quantum codes, the resulting codes are still CSS codes and they satisfy the same sort of properties as classical AEL-amplified codes.

\begin{proposition}[{\cite[Prop.~3.2]{BergamaschiJeronimoMittalSrivastavaTulsiani24ListDecodableQLDPC}}]\label{prop:dual-decomposition}
The dual spaces of the concatenated code are
\[
    \mcE_X^\perp
      =\Phi_Z(\mcD_X^\perp)+(\mcC_X^\perp)^L,
    \qquad
    \mcE_Z^\perp
      =\Phi_X(\mcD_Z^\perp)+(\mcC_Z^\perp)^L.
\]
Consequently,
\[
    \mcF_X^\perp=\Fold(\mcE_X^\perp),
    \qquad
    \mcF_Z^\perp=\Fold(\mcE_Z^\perp),
\]
and in particular $\mcF_Z^\perp\subseteq\mcF_X$.
\end{proposition}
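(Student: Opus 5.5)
We prove the formula for $\mcE_Z^\perp$; the one for $\mcE_X^\perp$ follows by swapping the roles of $X$ and $Z$ (and of $\phi_X,\phi_Z$). The plan is to show one inclusion by direct orthogonality and then conclude equality by a dimension count. Write $V:=\Phi_X(\mcD_Z^\perp)+(\mcC_Z^\perp)^L$.

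\textbf{Inclusion $V\subseteq\mcE_Z^\perp$.} Since the inner product on $((\F_q^{b_{\rm in}})^\Delta)^L$ is a sum of blockwise inner products over $u\in L$, it suffices to check orthogonality of the generators of $V$ against the two generating pieces of $\mcE_Z=\Phi_Z(\mcD_Z)+(\mcC_X^\perp)^L$. The four pairings are handled as follows.
\begin{itemize}
\item $\Phi_X(a)$ against $\Phi_Z(c)$, with $a\in\mcD_Z^\perp$ and $c\in\mcD_Z$: by \eqref{eq:duality}, applied coordinatewise, $\ip{\Phi_X(a)}{\Phi_Z(c)}=\sum_u\ip{a_u}{c_u}=\ip{a}{c}=0$.
\item $\Phi_X(a)$ against $(\mcC_X^\perp)^L$: each block $\phi_X(a_u)$ lies in $\mcW_X\subseteq\mcC_X$, which is orthogonal to $\mcC_X^\perp$.
\item $(\mcC_Z^\perp)^L$ against $\Phi_Z(c)$: each block $\phi_Z(c_u)$ lies in $\mcW_Z\subseteq\mcC_Z$, which is orthogonal to $\mcC_Z^\perp$.
\item $(\mcC_Z^\perp)^L$ against $(\mcC_X^\perp)^L$: blockwise, $\mcC_Z^\perp\subseteq\mcC_X$ and $\mcC_X\perp\mcC_X^\perp$.
\end{itemize}

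\textbf{Dimension count.} Set $M:=b_{\rm in}\Delta$, $k:=b_{\rm out}=K_q(\mcC)$, and $n=|L|$. The sum defining $\mcE_Z$ is direct: on each block the sum lies in $\mcW_Z\oplus\mcC_X^\perp$ by \eqref{eq:WXWZ}, and $\phi_Z$ is injective onto $\mcW_Z$ by \Cref{lem:duality-embeddings}. Hence
\[
\dim\mcE_Z=\dim\mcD_Z+n\dim\mcC_X^\perp=\dim\mcD_Z+n(M-\dim\mcC_X).
\]
Similarly $V$ is a direct sum inside $(\mcW_X\oplus\mcC_Z^\perp)^L$, so
\[
\dim V=(nk-\dim\mcD_Z)+n(M-\dim\mcC_Z).
\]
Therefore
\[
\dim V+\dim\mcE_Z=n\bigl(k+2M-\dim\mcC_X-\dim\mcC_Z\bigr).
\]
Substituting $k=\dim\mcC_X-\dim\mcC_Z^\perp=\dim\mcC_X+\dim\mcC_Z-M$ gives $\dim V+\dim\mcE_Z=nM$, the total dimension of the ambient space. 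Combined with the inclusion $V\subseteq\mcE_Z^\perp$, this forces $V=\mcE_Z^\perp$.

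\textbf{Folded statements.} $\Fold$ is a coordinate permutation, so it preserves the inner product. Hence $\Fold(U)^\perp=\Fold(U^\perp)$ for every subspace $U$, which gives $\mcF_X^\perp=\Fold(\mcE_X^\perp)$ and $\mcF_Z^\perp=\Fold(\mcE_Z^\perp)$.

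\textbf{The CSS condition.} From the formula, $\mcE_Z^\perp=\Phi_X(\mcD_Z^\perp)+(\mcC_Z^\perp)^L$. Since $\mcD_Z^\perp\subseteq\mcD_X$, this is contained in $\Phi_X(\mcD_X)+(\mcC_Z^\perp)^L=\mcE_X$. Applying $\Fold$ yields $\mcF_Z^\perp\subseteq\mcF_X$.

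The only delicate step is the dimension bookkeeping, in particular verifying that both sums are direct so that dimensions add. That directness is exactly what the decompositions \eqref{eq:WXWZ} and the injectivity of $\phi_X,\phi_Z$ provide.
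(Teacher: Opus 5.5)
\textbf{Assessment.} The paper gives no proof of this proposition; it imports it from \cite[Prop.~3.2]{BergamaschiJeronimoMittalSrivastavaTulsiani24ListDecodableQLDPC}, so there is no in-paper argument to compare against. Your proof is correct and self-contained. Each step checks out:
\begin{itemize}
\item The four orthogonality checks are right.
\item Both sums are direct because $\mcW_Z\cap\mcC_X^\perp=\mcW_X\cap\mcC_Z^\perp=\{0\}$ and $\phi_X,\phi_Z$ are isomorphisms.
\item The identity $b_{\rm out}=\dim\mcC_X+\dim\mcC_Z-\Delta b_{\rm in}$ makes the dimensions sum to the ambient dimension $n\Delta b_{\rm in}$.
\item The $X$-side formula follows by the $X\leftrightarrow Z$ symmetry, which is valid because the hypotheses, including \eqref{eq:duality}, are symmetric.
\item The folding and CSS-containment steps are right, since $\Fold$ is an inner-product-preserving bijection of coordinates.
\end{itemize}

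\textbf{An alternative route.} You can avoid the dimension bookkeeping with the same ingredients. Start from
\[
\mcE_Z^\perp=\bigl((\mcC_X^\perp)^L\bigr)^\perp\cap\Phi_Z(\mcD_Z)^\perp=\mcC_X^L\cap\Phi_Z(\mcD_Z)^\perp.
\]
Write $x\in\mcC_X^L$ as $x=\Phi_X(y)+g$ with $g\in(\mcC_Z^\perp)^L$. Then for every $c\in\mcD_Z$,
\[
\ip{x}{\Phi_Z(c)}=\ip{y}{c},
\]
so $x\in\mcE_Z^\perp$ if and only if $y\in\mcD_Z^\perp$.

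This gives both inclusions at once. It is exactly the computation the paper later uses in \Cref{lem:outer-affine-consistency}. Your version proves one inclusion by orthogonality and obtains the other from a dimension count, which is equally valid. It just requires verifying the directness of the sums, which you did.
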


\begin{proposition}\label{prop:qael-rate}
The folded rate satisfies
\[
    R(\mcF)=R(\mcC)R(\mcD).
\]
\end{proposition}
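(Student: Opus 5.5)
We compute the logical dimension $K_q(\mcF)=\dim\mcF_X-\dim\mcF_Z^\perp$ directly from \Cref{def:qael-code} and \Cref{prop:dual-decomposition}, and then normalize. Since folding is a coordinate permutation, it is a linear bijection, so $\dim\mcF_X=\dim\mcE_X$ and $\dim\mcF_Z^\perp=\dim\Fold(\mcE_Z^\perp)=\dim\mcE_Z^\perp$. It therefore suffices to compute $\dim\mcE_X-\dim\mcE_Z^\perp$.

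\textbf{Step 1: the sums are direct.} By \eqref{eq:EX}, $\mcE_X=\Phi_X(\mcD_X)+(\mcC_Z^\perp)^L$. Here $\Phi_X(\mcD_X)\subseteq\mcW_X^L$ and $(\mcC_Z^\perp)^L$ is the coordinatewise product. The decomposition $\mcC_X=\mcW_X\oplus\mcC_Z^\perp$ from \eqref{eq:WXWZ} holds at each left vertex $u$, so $\mcW_X^L\cap(\mcC_Z^\perp)^L=0$ and the sum is direct. Moreover $\phi_X$ is an isomorphism onto $\mcW_X$ by \Cref{lem:duality-embeddings}, so $\Phi_X$ is injective and
\[
\dim\mcE_X=\dim\mcD_X+n\dim\mcC_Z^\perp .
\]
By the same argument applied to $\mcE_Z^\perp=\Phi_X(\mcD_Z^\perp)+(\mcC_Z^\perp)^L$ from \Cref{prop:dual-decomposition},
\[
\dim\mcE_Z^\perp=\dim\mcD_Z^\perp+n\dim\mcC_Z^\perp .
\]

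\textbf{Step 2: subtract and normalize.} Subtracting the two dimensions gives
\[
K_q(\mcF)=\dim\mcD_X-\dim\mcD_Z^\perp=K_q(\mcD).
\]
The code $\mcF$ has blocklength $n$ and block size $\Delta b_{\rm in}$. The code $\mcD$ has blocklength $|L|=n$ and block size $b_{\rm out}=K_q(\mcC)$. Hence
\[
R(\mcF)=\frac{K_q(\mcD)}{\Delta b_{\rm in}n}
=\frac{K_q(\mcD)}{b_{\rm out}n}\cdot\frac{K_q(\mcC)}{b_{\rm in}\Delta}
=R(\mcD)R(\mcC).
\]

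The computation is routine. The step needing most care is the directness of the sums. For $\mcE_Z^\perp$ this relies on \Cref{prop:dual-decomposition} expressing it through $\Phi_X$ rather than $\Phi_Z$, so that the complement $\mcW_X$ of $\mcC_Z^\perp$ appears. Without this, one would instead compute $\dim\mcE_Z$ and use $\dim\mcE_Z^\perp=bN-\dim\mcE_Z$ together with the analogous identities for $\mcC$ and $\mcD$.
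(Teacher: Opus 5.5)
Your proposal is correct and follows essentially the paper's proof. The paper uses the same two facts, the directness of $\mcW_X^L\cap(\mcC_Z^\perp)^L=0$ and the injectivity of $\Phi_X$, to get the induced isomorphism $\mcD_X/\mcD_Z^\perp\cong\mcE_X/\mcE_Z^\perp$, whereas you count the dimensions of the two direct sums and subtract; the content is the same, and the final normalization is identical.
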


\begin{theorem}[{qAEL Distance Amplification, \cite[Thm.~4.2]{BergamaschiJeronimoMittalSrivastavaTulsiani24ListDecodableQLDPC}; see also \cite[Sec.~5]{BergamaschiGolowichGunn22AQEC}}]\label{thm:distance-amplification}
The relative folded distance of $\mcF$ satisfies
\[
    \delta(\mcF)
    \ge
    \delta(\mcC)-\frac{\lambda}{\Delta\,\delta(\mcD)}.
\]
We call any explicit lower bound obtained from the right-hand side a \emph{certified distance}.
\end{theorem}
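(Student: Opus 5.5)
We argue directly from the definitions and the expander mixing lemma \eqref{eq:eml}, handling the $X$-distance; the $Z$-distance is symmetric, using $\Phi_Z$, $\mcD_Z$, $\mcC_X^\perp$ and the description of $\mcF_X^\perp$ in \Cref{prop:dual-decomposition}.

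\textbf{Step 1: reduce to a nonzero outer word.} Fix $x\in\mcF_X\setminus\mcF_Z^\perp$. By \Cref{def:qael-code}, $x=\Fold(w)$ with $w=\Phi_X(d)+s$ for some $d\in\mcD_X$ and $s\in(\mcC_Z^\perp)^L$. By \Cref{prop:dual-decomposition},
\[
\mcF_Z^\perp=\Fold\big(\Phi_X(\mcD_Z^\perp)+(\mcC_Z^\perp)^L\big).
\]
So if $d$ were in $\mcD_Z^\perp$, then $x$ would lie in $\mcF_Z^\perp$. Hence $d\in\mcD_X\setminus\mcD_Z^\perp$, and therefore
\[
|S|\ge d_X(\mcD)\ge\delta(\mcD)\,n,\qquad\text{where } S:=\{u\in L:d_u\ne0\}.
\]

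\textbf{Step 2: local weight on $S$.} Fix $u\in S$. The block $w_u=\phi_X(d_u)+s_u$ lies in $\mcC_X$. Since $\phi_X$ is injective into $\mcW_X$ (\Cref{lem:duality-embeddings}) and $\mcW_X\cap\mcC_Z^\perp=0$ by \eqref{eq:WXWZ}, we have $w_u\notin\mcC_Z^\perp$. Thus $w_u\in\mcC_X\setminus\mcC_Z^\perp$, so at least $\delta(\mcC)\Delta$ of its $\Delta$ port symbols are nonzero. Each nonzero port symbol of $w_u$ on edge $(u,v)$ is copied by folding into block $v$ of $x$. So every such edge lands in
\[
T:=\{v\in R:x_v\ne0\},\qquad |T|=\wt_b(x).
\]
This gives $|E(S,T)|\ge\delta(\mcC)\Delta|S|$.

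\textbf{Step 3: expander mixing.} Apply \eqref{eq:eml}:
\[
\delta(\mcC)\Delta|S|\le\frac{\Delta}{n}|S||T|+\lambda\sqrt{|S||T|}.
\]
Dividing by $\Delta|S|$ and using $|T|\le n$ and $|S|\ge\delta(\mcD)n$,
\[
\frac{|T|}{n}\ge\delta(\mcC)-\frac{\lambda}{\Delta}\sqrt{\frac{|T|}{|S|}}\ge\delta(\mcC)-\frac{\lambda}{\Delta\sqrt{\delta(\mcD)}}\ge\delta(\mcC)-\frac{\lambda}{\Delta\,\delta(\mcD)},
\]
where the last step uses $\delta(\mcD)\le1$.

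\textbf{Main obstacle.} The only delicate point is the coset bookkeeping in Steps 1--2. A nontrivial global logical class must force a nonzero outer word, and hence a nontrivial local logical class at every vertex of $S$; the direct-sum choice \eqref{eq:WXWZ} and \Cref{prop:dual-decomposition} are exactly what make this work. The remaining steps are the classical AEL counting.
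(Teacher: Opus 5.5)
Your proof is correct and is essentially the standard qAEL distance argument from the cited reference; the paper quotes this theorem without reproving it. A nontrivial global class forces an outer word in $\mcD_X\setminus\mcD_Z^\perp$, each vertex in its support carries a local word in $\mcC_X\setminus\mcC_Z^\perp$, and expander mixing finishes the count. Your Step 3 in fact gives the slightly stronger loss $\lambda/(\Delta\sqrt{\delta(\mcD)})$ before you weaken it via $\delta(\mcD)\le 1$.
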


We next give the parity-check representation used by the syndrome decoder.  Fix a basis $\{a_i:i\in I\}$ of $\mcC_X^\perp$ and a sparse spanning set $\{b_j:j\in J\}$ of $\mcD_X^\perp$, with $O(n)$ rows and $O(n)$ total nonzero base-field entries.  For $u\in L$, let $\iota_u(a_i)$ be the left edge word supported only at $u$, where it equals $a_i$.  Then a spanning set for $\mcF_X^\perp$ is
\begin{equation}\label{eq:qael-x-checks}
    \left\{\Fold(\iota_u(a_i)):u\in L,\ i\in I\right\}
    \ \cup\
    \left\{\Fold(\Phi_Z(b_j)):j\in J\right\}.
\end{equation}
The first family consists of local inner checks, and the second consists of outer checks.  Let
\[
    H_X:\big((\F_q^{b_{\rm in}})^\Delta\big)^R
      \longrightarrow
      \F_q^{L\times I}\oplus\F_q^J
\]
be the corresponding parity-check map.  For $H_Xe^R=(s^{\rm in},s^{\rm out})$, its entries are
\begin{align*}
    s^{\rm in}_{u,i}
       &=\langle\Fold(\iota_u(a_i)),e^R\rangle
         &&(u\in L,\ i\in I),\\
    s^{\rm out}_j
       &=\langle\Fold(\Phi_Z(b_j)),e^R\rangle
         &&(j\in J).
\end{align*}

\begin{proposition}\label{prop:qael-ldpc}
If the outer family $\mcD$ is LDPC and the inner parameters $q,b_{\rm in},b_{\rm out},\Delta$ are fixed constants, then the qAEL family $\mcF$ is LDPC on the underlying $\F_q$-coordinates.
\end{proposition}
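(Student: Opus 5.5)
We bound the row and column weights of the two families of $X$-checks in \eqref{eq:qael-x-checks} separately, and then argue the same way for the $Z$-checks. Since the weight of a sum of matrices is at most the sum of the weights, showing each family has $O(1)$ row and column weights suffices. Throughout we use that $\Fold$ is a coordinate permutation: it maps the $\F_q$-coordinate $(u,\mathrm{idx}(e),t)$ of the left edge word to the coordinate $(v,\mathrm{idx}(e),t)$ of the folded word, where $e=(u,v)$ and $t\in[b_{\rm in}]$. So Hamming weights of rows are unchanged by folding, and column weights can be counted on the left edge words before folding.

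\textbf{Local inner checks.} Each row $\Fold(\iota_u(a_i))$ is supported on the $\Delta b_{\rm in}$ coordinates belonging to $u$. Its weight is therefore at most $\Delta b_{\rm in}$. Now fix an $\F_q$-coordinate $(u,k,t)$ of the edge word. It lies in the support of $\iota_{u'}(a_i)$ only when $u'=u$. Hence it meets at most $|I|=\dim\mcC_X^\perp\le\Delta b_{\rm in}$ local checks. Both bounds are constants.

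\textbf{Outer checks.} Let $b_j\in(\F_q^{b_{\rm out}})^L$ be a row of the sparse spanning set of $\mcD_X^\perp$. Because $\mcD$ is LDPC, $b_j$ has at most $w_{\mcD}$ nonzero $\F_q$-entries, so it is nonzero on at most $w_{\mcD}$ left vertices. The map $\Phi_Z$ acts blockwise through $\phi_Z$. It sends the block $(b_j)_u$ to $\phi_Z((b_j)_u)\in(\F_q^{b_{\rm in}})^\Delta$, which is supported on the coordinates of $u$, and it sends a zero block to zero. So $\Phi_Z(b_j)$ has weight at most $w_{\mcD}\Delta b_{\rm in}$, and folding preserves this bound. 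For columns, the edge coordinate $(u,k,t)$ can lie in the support of $\Phi_Z(b_j)$ only if $(b_j)_u\neq0$. That requires one of the $b_{\rm out}$ $\F_q$-coordinates of the outer block $u$ to lie in the support of $b_j$. The column-weight bound on $\mcD$ allows at most $b_{\rm out}w_{\mcD}$ such $j$. This is again a constant, because $\Delta,b_{\rm in},b_{\rm out},q$ are fixed.

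\textbf{$Z$-checks.} The same argument works with the roles of $X$ and $Z$ swapped. Using \Cref{prop:dual-decomposition}, a spanning set for $\mcF_Z^\perp$ consists of the folded local words $\Fold(\iota_u(c))$ for $c$ in a basis of $\mcC_Z^\perp$, together with the folded outer words $\Fold(\Phi_X(b'_j))$ for $b'_j$ in the sparse spanning set of $\mcD_Z^\perp$. The bounds above carry over with $\phi_X$ in place of $\phi_Z$.

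The only real subtlety is that sparsity must be measured on $\F_q$-coordinates and must survive the dense but constant-size map $\phi_Z$. The argument handles this by losing at most a factor of $\Delta b_{\rm in}$ per outer block. Beyond that, the proof is bookkeeping.
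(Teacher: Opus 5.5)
Your proposal is correct and is essentially the paper's argument. Local inner rows have weight at most $\Delta b_{\rm in}$ and meet each column at most $\dim\mcC_X^\perp$ times. Lifted outer rows have weight at most $w_{\mcD}\Delta b_{\rm in}$ and meet each column at most $b_{\rm out}w_{\mcD}$ times. Folding is a coordinate permutation, and the $Z$-side follows by symmetry.
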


The $X$-decoder uses two classical parameters of the inner space $\mcC_X$:
\begin{align}
    \delta_{\rm quot}(\mcC_X)
    &:=\frac1\Delta
      \min\{\wt_{b_{\rm in}}(c):c\in\mcC_X\setminus\mcC_Z^\perp\},
    \label{eq:delta-quot}\\
    \delta_{\rm stab}(\mcC_X)
    &:=\frac1\Delta
      \min\{\wt_{b_{\rm in}}(s):0\ne s\in\mcC_Z^\perp\}.
    \label{eq:delta-stab}
\end{align}
The $Z$-side parameters are defined symmetrically.  A minimum over an empty stabilizer set is taken to be $+\infty$.
\subsection{Constraint Satisfaction Problems}

When using regularity for decoding, it will be convenient to use the language of constraint satisfaction problems.

\begin{definition}
    A bipartite 2-CSP on $G = (L,R,E)$ with domain $[\ell]$ is a collection of constraints $\{\psi_{(u,v)}\subseteq [\ell]^2 \mid (u,v) \in E\}$. An assignment $a : L \cup R \to [\ell]$ is said to satisfy an edge $(u,v)$ when $(a(u),a(v)) \in \psi_{(u,v)}$. The value of an assignment for a given bipartite 2-CSP instance is the number of edges satisfied.
\end{definition}

We use the special case $\psi_{(u,v)}=\Psi_{(u,v)}\times[\ell]$, where $\Psi_{(u,v)}\subseteq[\ell]$.  These constraints depend only on the left label, so an assignment may be written as $a:L\to[\ell]$; the right vertices group the edge constraints.  We call this a \emph{left-oriented CSP}.

\subsection{Weak Regularity on Expanders}\label{subsec:weak-regularity}

For functions $f,g:\mcX \to \bbR$, we define the inner product 
$\langle f,g\rangle = \sum_{x\in \mcX} f(x)g(x)$. We also define $\langle f,g\rangle_W = \sum_{x\in W}f(x)g(x)$ for $W\subseteq \mcX$.

For $A\subseteq L$ and $B\subseteq R$, the function $\1_A\otimes\1_B$ is the indicator of the rectangle $A\times B$. For a function $f:L\times R\to\bbR$, we define the cut norm as
\[
    \|f\|_\square
    :=\max_{A\subseteq L,\,B\subseteq R}
        \big|\langle f,\1_A\otimes\1_B\rangle\big|.
\]

\begin{theorem}[{Efficient Weak Regularity on Expanders, \cite[Thm.~2.11]{JeronimoSingh25}; see also \cite{JST21,Jer23}}]\label{thm:weak-reg-expanded}
Let $G=(L,R,E)$ be an $(n,\Delta,\lambda)$-bipartite expander, and let $g:L\times R\to\{0,1\}$ be supported on $E$.  If
\[
    \frac{\lambda}{\Delta}<\frac{\gamma^2}{2^{23}},
\]
then a randomized algorithm outputs, with high probability, a function
\[
    h=\sum_{j=1}^{p}c_j\,\1_{A_j}\otimes\1_{B_j},
    \qquad
    p=O(\gamma^{-2}),
\]
such that
\[
    \|g-h\|_\square\le\gamma n\Delta.
\]
Its running time is
\[
    \widetilde O\!\left(2^{2^{O(\gamma^{-2})}}n\Delta\right).
\]
\end{theorem}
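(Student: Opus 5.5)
The plan is to run a Frieze--Kannan energy-increment argument, but measured in the \emph{sparse} normalization where the ambient space is the edge set $E$ rather than all of $L\times R$, and then to use the expander mixing lemma \eqref{eq:eml} to replace edge-restricted rectangles by genuine rectangles. The naive dense argument fails because $g$ has only $n\Delta$ nonzero entries. Each step would gain only about $(\gamma n\Delta)^2/n^2=\gamma^2\Delta^2$ in energy, against a total energy of $\|g\|_2^2=n\Delta$, which gives $p\approx n/(\gamma^2\Delta)$ steps rather than a constant.

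\textbf{Step 1: regularity relative to $E$.} Define $\mu$ as the uniform measure on $E$, and for $A\subseteq L$, $B\subseteq R$ let $\chi_{A,B}:=(\1_A\otimes\1_B)\cdot\1_E$. Iteratively build $h'=\sum_j c_j\chi_{A_j,B_j}$ as follows.
\begin{itemize}
\item While some rectangle satisfies $|\langle g-h',\chi_{A,B}\rangle|\ge\gamma n\Delta/2$, add the term $c\,\chi_{A,B}$ with $c=\langle g-h',\chi_{A,B}\rangle/|E(A,B)|$.
\item This is an orthogonal projection step in $L^2(E)$. The normalized energy $\|g-h'\|^2/(n\Delta)$ starts at most $1$ and drops by at least $(\gamma n\Delta/2)^2/(|E(A,B)|\,n\Delta)\ge\gamma^2/4$, since $|E(A,B)|\le n\Delta$.
\item Hence after $p=O(\gamma^{-2})$ steps we get $\max_{A,B}|\langle g-h',\chi_{A,B}\rangle|\le\gamma n\Delta/2$.
\item Because $g-h'$ is supported on $E$, this is the same as $\|g-h'\|_\square\le\gamma n\Delta/2$.
\end{itemize}
One should also arrange the $c_j$ to stay bounded, say $|c_j|\le O(1)$. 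The cleanest way is to use the truncated/Lagrangian variant: project onto the span and clip to $[0,1]$, which only decreases the distance to $g$.

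\textbf{Step 2: de-sparsifying.} Replace each $\chi_{A_j,B_j}$ by $\frac{\Delta}{n}\1_{A_j}\otimes\1_{B_j}$, and set $h:=\sum_j c_j\frac{\Delta}{n}\1_{A_j}\otimes\1_{B_j}$. For any test rectangle $A\times B$, we have
\[
\langle\chi_{A_j,B_j},\1_A\otimes\1_B\rangle=|E(A_j\cap A,B_j\cap B)|.
\]
By \eqref{eq:eml} this differs from $\frac{\Delta}{n}|A_j\cap A||B_j\cap B|$ by at most $\lambda n$. Summing over terms gives
\[
\|h'-h\|_\square\le\Big(\sum_j|c_j|\Big)\lambda n\le O(\gamma^{-2})\cdot O(1)\cdot\lambda n.
\]
Under the hypothesis $\lambda/\Delta<\gamma^2/2^{23}$, this is at most $\gamma n\Delta/2$, with room to absorb the constants. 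The triangle inequality then gives $\|g-h\|_\square\le\gamma n\Delta$.

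\textbf{Step 3: making it algorithmic.} The main obstacle is the oracle in Step 1. We need, in near-linear time, to either find a rectangle with $|\langle g-h',\chi_{A,B}\rangle|\gtrsim\gamma n\Delta$ or certify that none has value above a constant factor more. Exact cut norm is NP-hard, so I would use a constant-factor approximation. One option is the Alon--Naor Grothendieck SDP. To get near-linear time, I would instead solve it approximately with a matrix multiplicative weights / sketching solver, exploiting that $g-h'$ is an $E$-supported matrix plus $O(\gamma^{-2})$ rank-one pieces, so matrix-vector products cost $\widetilde O(n\Delta)$. I would then round by random hyperplanes. A constant-factor loss (say $1/2^{O(1)}$) only worsens the constants in the exponent $2^{23}$.

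The doubly exponential dependence $2^{2^{O(\gamma^{-2})}}$ I would attribute to maintaining the current decomposition through the common refinement of the $p$ rectangles into $2^{O(p)}$ atoms. Some steps, such as recomputing coefficients or boosting the success probability of the randomized rounding via union bounds over atoms, are then exponential in the number of atoms. Repeating the randomized oracle $O(\log n)$ times gives the high-probability guarantee.
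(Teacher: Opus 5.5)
The paper does not prove this theorem; it quotes it from Jeronimo--Singh, building on Jeronimo--Srivastava--Tulsiani and Jeronimo, whose argument is of the Frieze--Kannan type you describe. Your overall architecture is the right one: an energy increment on the edge measure, transfer of $\chi_{A,B}$ to $\frac{\Delta}{n}\1_A\otimes\1_B$ by \eqref{eq:eml}, and a fast approximate cut-norm oracle. However, Step 2 does not yield the stated spectral threshold, and that threshold is the quantitative content of the theorem.

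\textbf{The gap is the coefficient budget.} Your transfer bound is $\|h'-h\|_\square\le\lambda n\sum_j|c_j|$.
\begin{itemize}
\item Even granting $|c_j|=O(1)$, the $p=O(\gamma^{-2})$ terms give $O(\lambda n/\gamma^2)$. Making this at most $\gamma n\Delta/2$ requires $\lambda/\Delta=O(\gamma^3)$. The hypothesis $\lambda/\Delta<\gamma^2/2^{23}$ only bounds it by $O(2^{-23}n\Delta)$, which is not $O(\gamma n\Delta)$ for small $\gamma$.
\item The bound $|c_j|=O(1)$ is not justified for your exact-projection coefficients $c=\langle g-h',\chi_{A,B}\rangle/|E(A,B)|$. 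Cauchy--Schwarz only gives $|E(A,B)|\ge\gamma^2n\Delta/4$, hence $|c|\le 2/\gamma$. The crude sum then needs $\lambda/\Delta=O(\gamma^4)$.
\item Clipping to $[0,1]$ does not repair this. The clipped function is a step function on the product cells of the $2^{O(p)}$ left and right atoms, so it is no longer an $O(\gamma^{-2})$-term sum of rectangles. The natural bound on the mixing error summed over those cells is $2^{O(p)}\lambda n$. Also, the cut-norm stopping guarantee was proved for $h'$, not for its clipping.
\end{itemize}

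\textbf{The missing idea is to keep $\sum_j|c_j|=O(1/\gamma)$.} This is exactly where the $\gamma^2$ in the hypothesis comes from.
\begin{itemize}
\item Use a fixed step size instead of exact projection. Whenever $|\langle g-h',\chi_{A,B}\rangle|\ge\gamma n\Delta/2$, set $h'\leftarrow h'+\tfrac{\gamma}{2}\,\mathrm{sgn}(\langle g-h',\chi_{A,B}\rangle)\,\chi_{A,B}$.
\item The energy then drops by at least $\gamma\cdot\tfrac{\gamma n\Delta}{2}-\tfrac{\gamma^2}{4}|E(A,B)|\ge\tfrac{\gamma^2}{4}n\Delta$. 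So there are at most $4\gamma^{-2}$ steps and $\sum_j|c_j|\le 2/\gamma$.
\item The transfer error is then at most $2\lambda n/\gamma$, which is at most $\gamma n\Delta/2$ once $\lambda/\Delta\le\gamma^2/4$.
\item A constant approximation factor in the cut-norm oracle only changes constants, which the $2^{23}$ absorbs.
\item Alternatively, keep exact projections but bound the transfer error by $\lambda\sum_j|c_j|\sqrt{|A_j||B_j|}$. Then use Cauchy--Schwarz over the $p$ terms together with the energy budget $\sum_j c_j^2|E(A_j,B_j)|\le n\Delta$; this also gives $O(\lambda n/\gamma)$.
\end{itemize}

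Two minor points. In Step 3, $g-h'$ is itself supported on $E$, so no rank-one corrections are needed for fast matrix--vector products. You also need not account for the doubly exponential factor, since any faster running time satisfies the stated bound.
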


\begin{definition}\label{def:factors}
A factor $\calB$ on a finite set $\mcX$ is a partition of $\mcX$; its parts are called atoms.  For $W\subseteq\mcX$, the restricted factor is
\[
    \calB|_W:=\{P\cap W:P\in\calB,\ P\cap W\ne\emptyset\}.
\]
A function on $\mcX$ is $\calB$-measurable if it is constant on every atom.  A function $f:W\to\Sigma$ is $\eta$-concentrated on $\calB|_W$ if it agrees with some $\calB|_W$-measurable function on at least $(1-\eta)|W|$ points of $W$.
\end{definition}

A collection of cuts $A_1,\ldots,A_p\subseteq L$ generates the factor whose atoms are the nonempty membership patterns
\[
    \{u\in L:\1_{A_1}(u)=\alpha_1,\ldots,\1_{A_p}(u)=\alpha_p\},
    \qquad
    (\alpha_1,\ldots,\alpha_p)\in\{0,1\}^p.
\]
It has at most $2^p$ atoms. For cuts $B_1,\dots, B_p \subseteq R$, we get an analogous factor on $R$.

For a factor $\calB$ on $\mcX$, let $\calB(x)$ denote the atom containing $x$.  For $f:\mcX\to\bbR$, define
\[
    \E[f\mid\calB](x)
      :=\frac{1}{|\calB(x)|}\sum_{y\in\calB(x)}f(y).
\]

For any $\mcB$-measurable function $h$ on $\mcX$, it follows that 
\[
\langle h,f\rangle = \langle h,\mathbb{E}[f\mid \mcB]\rangle.
\]

\section{The List Decoding Algorithm}\label{sec:algorithm}
We present the syndrome-input list decoder and prove its correctness for $X$-errors.  The $Z$-error decoder follows by exchanging $X$ and $Z$ throughout.

\subsection{The Main Statement}\label{subsec:decoder-main}

Let $G = (L\cup R,V)$ be a bipartite graph with expansion $\lambda$ and $|L| = |R| = n$. We fix the following CSS codes: an inner code $(\mcC_X, \mcC_Z)$ in $(\F_q^{b_{in}})^\Delta$ and an outer code $(\mcD_X,\mcD_Z)$ in 
$(\F_q^{b_{out}})^n$. We let $(\mcF_X,\mcF_Z)$ be the corresponding folded qAEL CSS code with respect to $G$.

Fix a distance lower bound
\[
    0<\delta_R\le\delta(\mcF)
\]
and a slack $0<\eps<\delta_R$.  Define
\begin{equation}\label{eq:decoder-parameters}
    \tau:=\delta_R-\eps,
    \qquad
    \rho_{\rm in}:=\delta_R-\frac\eps2,
    \qquad
    \gamma:=\frac{\eps^3}{2^5\ell_{\rm in}}.
\end{equation}
We assume that $\mcC_X$ is classically $(\rho_{\rm in},\ell_{\rm in})$-list decodable in the block metric and that
\begin{equation}\label{eq:inner-threshold-hypotheses}
    \delta_{\rm quot}(\mcC_X)\ge\delta_R,
    \qquad
    \delta_{\rm stab}(\mcC_X)\ge\delta_R.
\end{equation}
This means there is an inner decoder $\mathsf{LD}_{\rm in}$ that returns, for every local word, all inner codewords within radius $\rho_{\rm in}\Delta$, using a list of size at most $\ell_{\rm in}$.  In the final construction all inner parameters are constant, so an exhaustive local search is sufficient.

Assume also that the outer code $\mcD$ has a syndrome decoder $\mathsf{Dec}_{\rm out}$ that corrects block weight at most $\rho_{\rm out}n$ up to $Z$-stabilizers in time $T_{\rm out}(n)$.  Set
\begin{equation}\label{eq:alpha-eta}
    \alpha:=\frac{4\lambda^2}{\Delta^2\eps^2},
    \qquad
    \eta_{\rm conc}:=\frac{5\eps^2}{2^5(1-\alpha)}.
\end{equation}
The spectral condition below implies $\alpha<1$.

\begin{theorem}[Syndrome-Input Decoder]\label{thm:syndrome-input-decoder}
Suppose
\begin{equation}\label{eq:spectral-condition-decoder}
    \frac{\lambda}{\Delta}
    <\frac{\gamma^2}{2^{23}}
    =\frac{\eps^6}{2^{33}\ell_{\rm in}^2}
\end{equation}
and
\begin{equation}\label{eq:stitching-inequality}
    \alpha+\eta_{\rm conc}\le\rho_{\rm out}.
\end{equation}
Then there is a randomized algorithm
$\mathsf{QAEL\mbox{-}Syndrome\mbox{-}Decode}_X$ such that, for every syndrome $s$,
\[
    L_X(s,\delta_R-\eps)
    \subseteq
    \mathsf{QAEL\mbox{-}Syndrome\mbox{-}Decode}_X(s)
\]
with probability $1-o_{n\to\infty}(1)$.  Every output representative has syndrome $s$, and the output contains at most
\begin{equation}\label{eq:Lmax-definition}
    L_{\max}
    :=\ell_{\rm in}^{\,2^{O(\ell_{\rm in}^3/\eps^6)}}
\end{equation}
cosets.  Its running time is
\[
    \widetilde O_{\eps,\ell_{\rm in}}(n)
    +L_{\max}\big(T_{\rm out}(n)+O(n)\big).
\]
In particular, for fixed local parameters and a near-linear outer decoder, the syndrome-input qAEL decoder is near-linear in $n$.
\end{theorem}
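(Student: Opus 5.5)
The algorithm follows the pipeline sketched in the Proof Strategy. I will analyse it against a fixed target coset $e^R+\mcF_Z^\perp\in L_X(s,\delta_R-\eps)$, with $e^R$ chosen of block weight at most $\tau n$.

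\textbf{Step 1: local lift.} For each $u\in L$, the inner syndrome entries $s^{\rm in}_{u,\cdot}$ determine the local syndrome of the unfolded error $e_u$ with respect to $\mcC_X$. I solve this constant-size linear system to get some $r_u$ with the same local syndrome. Then $w:=e+r$ satisfies $w_u\in\mcC_X$ for every $u$. This costs $O(n)$ time.

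\textbf{Step 2: good sets.} Let $T\subseteq R$ be the set of blocks where $e^R=0$, so $|T|\ge(1-\tau)n$. Apply \Cref{lem:neighborhood-concentration} with $a=1-\tau$ and $\beta=\eps/2$. This gives a set $S\subseteq L$ with $|S|\ge(1-\alpha)n$ on which each $u$ has at least $(1-\rho_{\rm in})\Delta$ neighbours in $T$. Hence $\mathrm{dist}(w_u,r_u)\le\rho_{\rm in}\Delta$, so $w_u$ belongs to the list $\mathsf{LD}_{\rm in}(r_u)$, which has size at most $\ell_{\rm in}$.

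\textbf{Step 3: agreement CSP and rigidity.} Define the left-oriented CSP in which $\Psi_{(u,v)}$ is the set of labels $j$ whose $j$-th list entry agrees with $r$ on port $\mathrm{idx}(u,v)$. The true assignment $A^*$ (with $A^*(u)$ the index of $w_u$ for $u\in S$) satisfies every edge of $E(S,T)$.

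Rigidity comes from \eqref{eq:inner-threshold-hypotheses}. Two distinct entries of a local list differ in at least $\delta_R\Delta$ positions. If their difference lies outside $\mcC_Z^\perp$ this is $\delta_{\rm quot}$; if it is a nonzero stabiliser this is $\delta_{\rm stab}$. Since $u\in S$ has at least $(1-\tau)\Delta$ edges into $T$, any other label violates at least $(\delta_R-\tau)\Delta-O(\eps\Delta)\gtrsim\eps\Delta/2$ edges of $E(u,T)$.

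\textbf{Step 4: weak regularity and enumeration.} Apply \Cref{thm:weak-reg-expanded} with parameter $\gamma$ to each of the at most $\ell_{\rm in}$ indicator functions $g_j(u,v)=\1[j\in\Psi_{(u,v)}]$. Take the common refinement of all the left cuts $A_j$ together with the indicator of $T$ viewed through its cuts. This gives a factor $\calB$ on $L$ with $2^{O(\ell_{\rm in}/\gamma^2)}$ atoms.

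The key claim is the count-preserving relabeling argument of \cite{JeronimoSingh25,ST25}. Replace $A^*$ on each atom by its majority label, or more precisely compare $A^*$ with the atom-wise relabeling that preserves label counts. By the cut-norm approximation, this does not change the number of satisfied edges by more than $\gamma n\Delta\ell_{\rm in}$ per function. Rigidity then forces $A^*|_S$ to be $\eta_{\rm conc}$-concentrated on $\calB|_S$: each vertex not matching the measurable function loses about $\eps\Delta/2$ satisfied edges, while the total loss is at most $O(\ell_{\rm in}\gamma n\Delta/\eps)$. With $\gamma=\eps^3/(2^5\ell_{\rm in})$ this yields a disagreement fraction of order $\eps^2$, which matches $\eta_{\rm conc}$.

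Enumerating all $\calB$-measurable labelings gives at most $\ell_{\rm in}^{\#\text{atoms}}=L_{\max}$ candidates. Some candidate agrees with $A^*$ on all but an $\alpha+\eta_{\rm conc}$ fraction of $L$. This step, carrying the constants so that the concentration bound reaches exactly $\eta_{\rm conc}$, is the main obstacle. I would first try the \cite{JeronimoSingh25} calculation verbatim, replacing codeword distance by the two-case rigidity.

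\textbf{Step 5: outer stitching.} For each candidate labeling, form the outer word $y_u:=\phi_X^{-1}\pi_X(\text{chosen entry}_u)$. Now $w=\Phi_X(d)+(\text{element of }(\mcC_Z^\perp)^L)$ for some $d$; for the good candidate, $y$ differs from this $d$ on at most $(\alpha+\eta_{\rm conc})n\le\rho_{\rm out}n$ blocks by \eqref{eq:stitching-inequality}. The outer syndrome of $d$ is computable from $s^{\rm out}$ and $r$ via $\langle\Phi_Z(b_j),\Phi_X(d)\rangle=\langle b_j,d\rangle$ (\Cref{lem:duality-embeddings}); the $(\mcC_Z^\perp)^L$ part is orthogonal to $\Phi_Z(b_j)$ by \Cref{prop:dual-decomposition}. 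So $\mathsf{Dec}_{\rm out}$ applied to the correction $y-d$ returns $d$ modulo $\mcD_Z^\perp$.

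Output $\Fold(\Phi_X(\hat d)-r)$, where $\hat d$ is the decoded outer word, and keep it only if its syndrome equals $s$. Since $\Phi_X(\mcD_Z^\perp)\subseteq\mcE_Z^\perp$ and $(\mcC_Z^\perp)^L\subseteq\mcE_Z^\perp$, the output lies in the target coset $e^R+\mcF_Z^\perp$. Each candidate costs $T_{\rm out}(n)+O(n)$, and adding the regularity time gives the claimed running time. The success probability is that of \Cref{thm:weak-reg-expanded}. The good candidate works simultaneously for every target coset, since the factor does not depend on the target.
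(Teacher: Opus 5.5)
Your pipeline is the paper's: the same local lift, the same left-oriented agreement CSP on the local lists, the same two-case (quotient/stabilizer) rigidity, weak regularity on the $\ell_{\rm in}$ indicators with enumeration of measurable labelings, and the same affine outer decoding modulo $\mcD_Z^\perp$ followed by the $\mcE_Z^\perp$ coset argument. There is a genuine gap at the step you flag as the main obstacle, and it is not a matter of constants. Comparing $A^*$ with the atomwise majority labeling does not work. That labeling does not preserve atomwise label counts, so value stability says nothing about it; value stability only compares two assignments with equal counts on every atom of $\calB_L|_S$. What you need is a count-preserving $a'$ that changes the label at \emph{every} vertex of $D=\{u\in S:A^*(u)\ne a^{\rm maj}(u)\}$. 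The paper builds it inside each atom. Order the labels by frequency $c_1\ge\cdots\ge c_\ell$ (with $c_{\ell+1}=0$) and send every vertex of class $C_i$, $i\ge2$, to $t_{i-1}$. Then refill by relabeling disjoint sets $M_i\subseteq C_1$ with $|M_i|=c_i-c_{i+1}$ to $t_i$. This is possible because $\sum_{i\ge2}(c_i-c_{i+1})=c_2\le c_1$, and it preserves every count. Three facts then give a loss of at least $|D|\eps\Delta/2$ via rigidity: $A^*$ satisfies every edge of $E(S,T)$, so changes can only lose; constraints are left-oriented; and edge sets at distinct left vertices are disjoint. Value stability caps the loss at $2\ell_{\rm in}\gamma n\Delta$ in total, not $\ell_{\rm in}\gamma n\Delta$ per function. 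Hence $|D|\le4\ell_{\rm in}\gamma n/\eps=\eps^2n/8\le\eta_{\rm conc}n$, and the constants fall out immediately.

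Three smaller corrections. First, do not refine the factor by $T$. The decoder cannot compute $T$, and including it would break the simultaneity over all target errors that you invoke at the end. It is also unnecessary: in the proxy value, $T$ enters only through the multipliers $|B_{t,j}\cap T|$, which do not depend on the assignment, so the factor generated by the left cuts $A_{t,j}$ alone suffices. Second, in the rigidity step the right count is that $u\in S$ has at most $\rho_{\rm in}\Delta$ edges leaving $T$, not $(1-\tau)\Delta$ edges into $T$. This gives exactly $(\delta_R-\rho_{\rm in})\Delta=\eps\Delta/2$ violated edges. Third, unused list indices must be declared to satisfy no edge, so that switching to one also violates at least $(1-\rho_{\rm in})\Delta\ge\eps\Delta/2$ edges.
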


\subsection{Decoder Notation and Interfaces}\label{subsec:syndrome-lifting}

We first define the objects used by the algorithm.  Fix a syndrome
$s=(s^{\rm in},s^{\rm out})\in\F_q^{L\times I}\oplus\F_q^J$
for the check presentation in \eqref{eq:qael-x-checks}.  Thus $s^{\rm in}_{u,i}$ corresponds to the local check $\Fold(\iota_u(a_i))$, and $s^{\rm out}_j$ corresponds to the lifted outer check $\Fold(\Phi_Z(b_j))$.

\paragraph{The local lift $r$.}
Fix a complement $U_X$ with
\[
    (\F_q^{b_{\rm in}})^\Delta=\mcC_X\oplus U_X.
\]
The linear map
\[
    U_X\longrightarrow(\mcC_X^\perp)^*,
    \qquad
    z\longmapsto\bigl(x\mapsto\langle z,x\rangle\bigr)
\]
is an isomorphism, which means that since the $a_i$'s form a basis of $\mcC_X^\perp$, there is a unique $r_u\in U_X$ satisfying
\begin{equation}\label{eq:ru-definition}
    \langle r_u,a_i\rangle=-s^{\rm in}_{u,i}
    \qquad(i\in I).
\end{equation}
Set
\[
    r:=(r_u)_{u\in L},
    \qquad
    r^R:=\Fold(r).
\]
The inverse local map is precomputed, so constructing $r$ and $r^R$ takes $O(n)$ time. 

\paragraph{The affine outer syndrome and decoder.}
Define $\sigma_s\in\F_q^J$ by
\begin{equation}\label{eq:affine-sigma-definition}
    (\sigma_s)_j:=s^{\rm out}_j+\langle r,\Phi_Z(b_j)\rangle
    \qquad(j\in J).
\end{equation}
For any $\sigma\in\F_q^J$, let
\begin{equation}\label{eq:affine-outer-space}
    \mcD_X(\sigma)
    :=\{y\in(\F_q^{b_{\rm out}})^L:
         \langle y,b_j\rangle=\sigma_j\text{ for every }j\in J\}.
\end{equation}
This is a translate of $\mcD_X$ when nonempty; it may be empty if the syndrome is inconsistent with dependencies among the checks.  Computing $\sigma_s$ and checking membership in $\mcD_X(\sigma)$ each take $O(n)$ time using the sparse outer checks.

The procedure $\mathsf{AffDec}_{\rm out}(\widehat y,\sigma)$ takes a tentative outer word $\widehat y\in(\F_q^{b_{\rm out}})^L$ and a target syndrome $\sigma$.  It computes
\[
    t_j:=\langle\widehat y,b_j\rangle-\sigma_j
    \qquad(j\in J)
\]
and runs $\mathsf{Dec}_{\rm out}$ on $t$.  If that call returns a correction $c$, it forms $y^\star:=\widehat y-c$ and returns $y^\star$ only after verifying $y^\star\in\mcD_X(\sigma)$, and  otherwise it returns failure.  This procedure always returns a verified word or failure in time $T_{\rm out}(n)+O(n)$.  Its decoding guarantee is proved in \Cref{lem:affine-outer-decoder}.

\paragraph{Local lists, assignments, and projected words.}
For each $u\in L$, let
\begin{equation}\label{eq:local-list-definition}
    \mcL_u:=\{c\in\mcC_X:
        \mathrm{dist}_{b_{\rm in}}(c,r_u)\le\rho_{\rm in}\Delta\}.
\end{equation}
The inner list decoder computes $\mcL_u$, and $|\mcL_u|\le\ell_{\rm in}$.  Fix an ordering of its distinct codewords and let
\[
    \mathsf L_u:[\ell_{\rm in}]\longrightarrow\mcC_X\cup\{\bot\}
\]
list each codeword once, mapping every unused index to a dummy symbol $\bot$.  

An assignment $a:L\to[\ell_{\rm in}]$ selects one label at each left vertex.  Its \emph{projected left word} $w_a\in\mcW_X^L$ is
\begin{equation}\label{eq:wa-definition}
    w_a(u):=
    \begin{cases}
      \pi_X(\mathsf L_u(a(u))), & \mathsf L_u(a(u))\ne\bot,\\
      0, & \mathsf L_u(a(u))=\bot.
    \end{cases}
\end{equation}
The corresponding tentative outer word $\widehat y_a\in(\F_q^{b_{\rm out}})^L$ is defined by
\begin{equation}\label{eq:yhat-induced}
    (\widehat y_a)_u:=\phi_X^{-1}(w_a(u))
    \qquad(u\in L).
\end{equation}
In particular, $w_a=\Phi_X(\widehat y_a)$.  Neither $w_a$ nor $\widehat y_a$ is required to satisfy the outer checks at this stage.

\paragraph{The candidate-generation interface.} 
We write $\mathsf{CandGen}(r^R)$ for the randomized procedure that computes the local lists and returns a family $\mathcal A$ of at most $L_{\max}$ assignments, together with their projected words $(w_a)_{a\in\mathcal A}$.  The graph, inner code, and decoding parameters are fixed inputs to this procedure and its output depends on $r^R$ and the local lists.  \Cref{prop:candidate-generation} states the agreement guarantee needed for correctness while \Cref{subsec:csp-candidates} constructs the procedure and proves that guarantee using weak regularity.

\pagebreak[3]
\begin{samepage}
\subsection{The List Decoding Algorithm}\label{subsec:decoder-algorithm}

With these interfaces in place, the decoder computes the lift and affine syndrome, enumerates candidate left words, and applies the affine outer decoder to each one.

\begin{center}
\fbox{\begin{minipage}{0.94\linewidth}
\textbf{Algorithm $\mathsf{QAEL\mbox{-}Syndrome\mbox{-}Decode}_X$}

\textbf{Input:} a syndrome
$s=(s^{\rm in},s^{\rm out})\in\F_q^{L\times I}\oplus\F_q^J$.
\begin{enumerate}[leftmargin=1.6em]
    \item Compute the local lift $r=(r_u)_{u\in L}$ from \eqref{eq:ru-definition}, and set $r^R:=\Fold(r)$.
    \item Compute the affine outer syndrome $\sigma_s$ from \eqref{eq:affine-sigma-definition}.
    \item Run $\mathsf{CandGen}(r^R)$, obtaining at most $L_{\max}$ assignments $a\in\mathcal A$ and their projected left words $w_a:L\to\mcW_X$.
    \item For each $a\in\mathcal A$, form $\widehat y_a$ using \eqref{eq:yhat-induced} and run $\mathsf{AffDec}_{\rm out}(\widehat y_a,\sigma_s)$.
    \item Discard failures.  For every verified affine output $y^\star$, form
    \begin{equation}\label{eq:output-error-representative}
        e_a^\star:=\Fold(\Phi_X(y^\star)-r).
    \end{equation}
    Verify $H_Xe_a^\star=s$ and, if the check passes, append $e_a^\star$ as a representative of $e_a^\star+\mcF_Z^\perp$.
\end{enumerate}
\end{minipage}}
\end{center}
\end{samepage}

The output is stored as a list of representatives and so there may be multiple representatives for the same coset.  Its induced set of cosets is the final output in \Cref{thm:syndrome-input-decoder}.  This algorithm also does not test whether a coset contains a low-weight representative, so the output list may contain extra cosets. 

\subsection{Correctness}\label{subsec:decoder-correctness}

Fix an input syndrome $s$.  Our goal is to cover every coset in $L_X(s,\tau)$.  Equivalently, for every $e^R$ with
\[
    H_Xe^R=s,
    \qquad
    \wt_{\Delta b_{\rm in}}(e^R)\le\tau n,
\]
we must show that some enumerated assignment produces an output $e_a^\star$ satisfying
$e_a^\star-e^R\in\mcF_Z^\perp$.  The lemmas below reduce this goal to the candidate-generation guarantee in \Cref{prop:candidate-generation}, whose proof is deferred to \Cref{subsec:csp-candidates}.

\begin{lemma}\label{lem:local-lift-property}
Let $H_Xe^R=s$, write $e:=\Fold^{-1}(e^R)$, and set $w:=e+r$.  Then $w_u\in\mcC_X$ for every $u\in L$, and
\[
    \mathrm{dist}_{\Delta b_{\rm in}}(\Fold(w),r^R)
    =\wt_{\Delta b_{\rm in}}(e^R).
\]
\end{lemma}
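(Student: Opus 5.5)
The plan is to verify the two claims separately, using only the definition of the local syndrome, the defining property \eqref{eq:ru-definition} of $r_u$, and the fact that folding is a coordinate permutation.

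\emph{Membership $w_u\in\mcC_X$.} Since $(\mcC_X^\perp)^\perp=\mcC_X$ in the finite-dimensional space $(\F_q^{b_{\rm in}})^\Delta$, it suffices to show that $\langle w_u,a_i\rangle=0$ for every basis vector $a_i$ of $\mcC_X^\perp$. Folding preserves the underlying inner product, and $\iota_u(a_i)$ is supported only at $u$. Therefore
\[
    s^{\rm in}_{u,i}
    =\langle\Fold(\iota_u(a_i)),e^R\rangle
    =\langle\iota_u(a_i),\Fold^{-1}(e^R)\rangle
    =\langle a_i,e_u\rangle .
\]
Combining this with \eqref{eq:ru-definition}, which says $\langle r_u,a_i\rangle=-s^{\rm in}_{u,i}$, gives
\[
    \langle w_u,a_i\rangle=\langle e_u,a_i\rangle+\langle r_u,a_i\rangle=s^{\rm in}_{u,i}-s^{\rm in}_{u,i}=0 .
\]
Since this holds for every $i\in I$, we conclude $w_u\in\mcC_X$.

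\emph{The distance identity.} $\Fold$ is linear, so $\Fold(w)-r^R=\Fold(w-r)=\Fold(e)=e^R$. Hence
\[
    \mathrm{dist}_{\Delta b_{\rm in}}(\Fold(w),r^R)=\wt_{\Delta b_{\rm in}}(e^R).
\]
Here distance is measured blockwise over the right blocks $v\in R$, each of size $\Delta b_{\rm in}$, and it equals the block weight of the difference because block distance is translation invariant.

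There is no real obstacle. The only point requiring care is the bookkeeping that $\Fold$ preserves the bilinear form, which justifies moving the local check from the folded side to the unfolded side. One must also confirm the sign convention: the minus sign in \eqref{eq:ru-definition} is exactly what makes $e+r$, rather than $e-r$, land in $\mcC_X$.
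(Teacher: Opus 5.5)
Your proposal is correct and is essentially the paper's proof: you pair $w_u$ with each basis vector $a_i$ of $\mcC_X^\perp$ to get $s^{\rm in}_{u,i}-s^{\rm in}_{u,i}=0$, and you read off the distance identity from $\Fold(w)-r^R=\Fold(e)=e^R$. Your use of the fact that $\Fold$ preserves the inner product, to justify $s^{\rm in}_{u,i}=\langle a_i,e_u\rangle$, spells out a step the paper leaves implicit.
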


\begin{proof}
For every $u\in L$ and $i\in I$,
\[
    \langle w_u,a_i\rangle
    =\langle e_u,a_i\rangle+\langle r_u,a_i\rangle
    =s^{\rm in}_{u,i}-s^{\rm in}_{u,i}=0.
\]
Since the $a_i$ span $\mcC_X^\perp$, we have $w_u\in\mcC_X$.  The distance identity follows from $\Fold(w)-r^R=\Fold(e)=e^R$.
\end{proof}

The shifted word $w$ is locally in $\mcC_X$, but need not lie in the concatenated code $\mcE_X$.  Its logical components instead satisfy the affine outer syndrome.

\begin{lemma}\label{lem:outer-affine-consistency}
Let $e^R,e,w$ be as in \Cref{lem:local-lift-property}.  Define $y\in(\F_q^{b_{\rm out}})^L$ by
\begin{equation}\label{eq:outer-word-from-w}
    y_u:=\phi_X^{-1}(\pi_X(w_u))
    \qquad(u\in L).
\end{equation}
Then $y\in\mcD_X(\sigma_s)$ and $w=\Phi_X(y)+g$ for some $g\in(\mcC_Z^\perp)^L$.
\end{lemma}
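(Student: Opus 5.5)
The plan is to prove the decomposition first and then verify the outer checks. By \Cref{lem:local-lift-property}, every $w_u$ lies in $\mcC_X=\mcW_X\oplus\mcC_Z^\perp$, so the splitting
\[
    w_u=\pi_X(w_u)+g_u,\qquad g_u\in\mcC_Z^\perp,
\]
is well defined. By the definition \eqref{eq:outer-word-from-w} of $y$ we have $\phi_X(y_u)=\pi_X(w_u)$. Hence $\Phi_X(y)_u=\pi_X(w_u)$, and with $g:=(g_u)_{u\in L}\in(\mcC_Z^\perp)^L$ we get $w=\Phi_X(y)+g$. This part is purely definitional.

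For membership $y\in\mcD_X(\sigma_s)$, I would fix $j\in J$ and compute $\langle y,b_j\rangle$ through the duality of \Cref{lem:duality-embeddings}. Applying \eqref{eq:duality} coordinatewise gives
\[
    \langle\Phi_X(y),\Phi_Z(b_j)\rangle=\sum_u\langle\phi_X(y_u),\phi_Z((b_j)_u)\rangle=\langle y,b_j\rangle.
\]
Next I need $\langle g,\Phi_Z(b_j)\rangle=0$. Each $\phi_Z((b_j)_u)$ lies in $\mcW_Z\subseteq\mcC_Z$, and each $g_u$ lies in $\mcC_Z^\perp$, so every coordinate pairing vanishes. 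Combining the two,
\[
    \langle y,b_j\rangle=\langle w,\Phi_Z(b_j)\rangle=\langle e,\Phi_Z(b_j)\rangle+\langle r,\Phi_Z(b_j)\rangle.
\]

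The last step converts the $e$-term into the syndrome. Folding is a coordinate permutation, so it preserves the inner product, and therefore
\[
    \langle e,\Phi_Z(b_j)\rangle=\langle e^R,\Fold(\Phi_Z(b_j))\rangle=s^{\rm out}_j.
\]
This uses $H_Xe^R=s$ and the definition of the outer check rows in \eqref{eq:qael-x-checks}. Substituting, $\langle y,b_j\rangle=s^{\rm out}_j+\langle r,\Phi_Z(b_j)\rangle=(\sigma_s)_j$ by \eqref{eq:affine-sigma-definition}, for every $j\in J$, which is exactly $y\in\mcD_X(\sigma_s)$.

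The only point needing care is sign and orientation bookkeeping. The sign convention in \eqref{eq:ru-definition} must match the plus sign in $\sigma_s$; here $w=e+r$ gives $+\langle r,\cdot\rangle$, which is consistent. It must also be clear that the check $\Fold(\Phi_Z(b_j))$ is paired with $e^R$ directly rather than with its unfolding, and the permutation invariance of the inner product under $\Fold$ handles this.
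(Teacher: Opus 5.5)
Your proposal is correct and follows essentially the same route as the paper: you split $w_u$ via $\mcC_X=\mcW_X\oplus\mcC_Z^\perp$, then use the duality \eqref{eq:duality}, the orthogonality of $g$ to $\Phi_Z(b_j)\in\mcC_Z^L$, and the lifted outer checks with $w=e+r$ to get $\langle y,b_j\rangle=(\sigma_s)_j$. You also spell out the step the paper leaves implicit, namely that $\Fold$ preserves the inner product, so $\langle e,\Phi_Z(b_j)\rangle=\langle e^R,\Fold(\Phi_Z(b_j))\rangle=s^{\rm out}_j$.
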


\begin{proof}
The direct sum $\mcC_X=\mcW_X\oplus\mcC_Z^\perp$ gives
$w=\Phi_X(y)+g$ with $g\in(\mcC_Z^\perp)^L$.  For every $j\in J$,
\begin{align*}
    \langle y,b_j\rangle
    &=\langle\Phi_X(y),\Phi_Z(b_j)\rangle\\
    &=\langle w,\Phi_Z(b_j)\rangle\\
    &=s^{\rm out}_j+\langle r,\Phi_Z(b_j)\rangle
      =(\sigma_s)_j.
\end{align*}
The first equality uses the duality of the embeddings.  The second uses
$\Phi_Z(b_j)\in\mcC_Z^L$, which is orthogonal to $g$; the third uses the lifted outer checks and $w=e+r$.
\end{proof}

\begin{lemma}[Affine outer decoding]\label{lem:affine-outer-decoder}
If $y\in\mcD_X(\sigma)$ and
$\mathrm{dist}_{b_{\rm out}}(\widehat y,y)\le\rho_{\rm out}n$, then
$\mathsf{AffDec}_{\rm out}(\widehat y,\sigma)$ returns a verified word $y^\star\in\mcD_X(\sigma)$ satisfying
\[
    y^\star-y\in\mcD_Z^\perp.
\]
Its running time is $T_{\rm out}(n)+O(n)$, including on inputs outside this promise.
\end{lemma}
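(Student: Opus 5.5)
The plan is to unwind the definition of $\mathsf{AffDec}_{\rm out}$ and reduce everything to the assumed guarantee of $\mathsf{Dec}_{\rm out}$. That guarantee is the standard syndrome-decoding one: given the syndrome $t=(\langle x,b_j\rangle)_{j\in J}$ of some unknown $x$ with $\wt_{b_{\rm out}}(x)\le\rho_{\rm out}n$, it returns in time $T_{\rm out}(n)$ a correction $c$ with the same syndrome and $c-x\in\mcD_Z^\perp$. Here $\mcD_X=\{x:\langle x,b_j\rangle=0\ \forall j\}$, since the $b_j$ span $\mcD_X^\perp$.

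First, set $x:=\widehat y-y$. Because $y\in\mcD_X(\sigma)$, we have
\[
    t_j=\langle\widehat y,b_j\rangle-\sigma_j=\langle\widehat y,b_j\rangle-\langle y,b_j\rangle=\langle x,b_j\rangle ,
\]
so $t$ is exactly the outer $X$-syndrome of $x$. By hypothesis $\wt_{b_{\rm out}}(x)=\mathrm{dist}_{b_{\rm out}}(\widehat y,y)\le\rho_{\rm out}n$. Hence $\mathsf{Dec}_{\rm out}$ does not fail, and it returns $c$ with $c-x\in\mcD_Z^\perp$.

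Next, form $y^\star=\widehat y-c$. Then
\[
    y^\star-y=x-c\in\mcD_Z^\perp .
\]
Since $\mcD_Z^\perp\subseteq\mcD_X$, every $b_j$ is orthogonal to $y^\star-y$. Therefore $\langle y^\star,b_j\rangle=\langle y,b_j\rangle=\sigma_j$, so $y^\star\in\mcD_X(\sigma)$. The verification step passes and $y^\star$ is returned.

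The running time is direct. Computing $t$ costs $O(n)$ because the spanning set $\{b_j\}$ has $O(n)$ total nonzero entries. The call to $\mathsf{Dec}_{\rm out}$ costs $T_{\rm out}(n)$. Forming $\widehat y-c$ and checking membership in $\mcD_X(\sigma)$ each cost $O(n)$. None of this depends on the promise. If $\mathsf{Dec}_{\rm out}$ fails or returns something invalid off-promise, the verification rejects and we output failure, within the same time bound.

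The main point needing care is the exact interface of $\mathsf{Dec}_{\rm out}$. We must make sure that ``corrects up to $Z$-stabilizers'' means its output differs from the true low-weight error by an element of $\mcD_Z^\perp$, and that it accepts the syndrome with respect to the chosen spanning set $\{b_j\}$ of $\mcD_X^\perp$. If the outer decoder instead used a different parity-check presentation, that syndrome could be computed from $t$ by a fixed sparse linear map in $O(n)$ time, so the argument would go through unchanged. The rest is linear algebra.
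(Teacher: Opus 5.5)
Your proof is correct and follows the paper's argument essentially line for line: set $x=\widehat y-y$, observe that $t$ is exactly its outer syndrome, invoke $\mathsf{Dec}_{\rm out}$ to get $c-x\in\mcD_Z^\perp$, and use $\mcD_Z^\perp\subseteq\mcD_X$ to conclude $y^\star\in\mcD_X(\sigma)$. One small difference is in the off-promise time bound: the paper explicitly caps the outer call at $T_{\rm out}(n)$ steps, whereas you assume $\mathsf{Dec}_{\rm out}$ halts within that time on every input.
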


\begin{proof}
Put $z:=\widehat y-y$.  The difference syndrome computed by the procedure is
\[
    t_j=\langle\widehat y,b_j\rangle-\sigma_j
       =\langle z,b_j\rangle.
\]
Since $\wt_{b_{\rm out}}(z)\le\rho_{\rm out}n$, the outer decoder returns a correction $c$ with $c-z\in\mcD_Z^\perp$.  Consequently
\[
    y^\star-y=\widehat y-c-y=z-c\in\mcD_Z^\perp.
\]
The inclusion $\mcD_Z^\perp\subseteq\mcD_X$ implies $y^\star\in\mcD_X(\sigma)$, so verification succeeds.  The time bound follows from the cap on the outer call and the $O(n)$ cost of computing and verifying the sparse syndromes.
\end{proof}

We now state the only property of candidate generation used in this subsection.

\begin{proposition}[Candidate-generation guarantee]\label{prop:candidate-generation}
Under the inner-code hypotheses above and the spectral condition \eqref{eq:spectral-condition-decoder}, the procedure $\mathsf{CandGen}(r^R)$ returns a family $\mathcal A$ of at most $L_{\max}$ assignments and their projected words.  With probability $1-o(1)$, the following holds simultaneously for every $e^R$ with
\[
    H_Xe^R=s,
    \qquad
    \wt_{\Delta b_{\rm in}}(e^R)\le\tau n.
\]
Writing $e=\Fold^{-1}(e^R)$ and $w=e+r$, some $a\in\mathcal A$ satisfies
\begin{equation}\label{eq:candidate-error-bound}
    \big|\{u\in L:w_a(u)\ne\pi_X(w_u)\}\big|
    \le(\alpha+\eta_{\rm conc})n.
\end{equation}
For fixed local parameters, the procedure runs in $\widetilde O_{\eps,\ell_{\rm in}}(n)$ time, including the explicit writing of all candidates.
\end{proposition}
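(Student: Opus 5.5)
We build $\mathsf{CandGen}$ from the left-oriented agreement CSP described in the proof strategy and prove the guarantee in three stages.

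\textbf{The CSP and the sets $S,T$.} For each edge $(u,v)$ with $\mathrm{idx}(u,v)=i$, let $\Psi_{(u,v)}\subseteq[\ell_{\rm in}]$ be the set of labels $k$ with $\mathsf L_u(k)\ne\bot$ and $\mathsf L_u(k)_i=r_{u,i}$. For each label $k$, let $g_k:L\times R\to\{0,1\}$ be the indicator of edges $(u,v)$ with $k\in\Psi_{(u,v)}$. Fix a valid $e^R$ of block weight at most $\tau n$ and let $T\subseteq R$ be the blocks where $e^R=0$, so $|T|\ge(1-\tau)n$. Apply \Cref{lem:neighborhood-concentration} with $a=1-\tau$ and $\beta=\eps/2$. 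It yields $S\subseteq L$ with $|S|\ge(1-\alpha)n$ on which every $u$ has at least $(1-\tau-\eps/2)\Delta$ neighbours in $T$. For $u\in S$, \Cref{lem:local-lift-property} shows $w_u$ agrees with $r_u$ on all ports into $T$. Hence $\mathrm{dist}(w_u,r_u)\le(\tau+\eps/2)\Delta=\rho_{\rm in}\Delta$ and $w_u\in\mcL_u$. The assignment $a^*(u):=\mathsf L_u^{-1}(w_u)$ on $S$ (arbitrary elsewhere) therefore satisfies every constraint in $E(S,T)$.

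\textbf{Rigidity.} For $u\in S$ and any other nonbottom label $k$, the codeword $c=\mathsf L_u(k)$ differs from $w_u$. Either $c-w_u\in\mcC_X\setminus\mcC_Z^\perp$, or $c-w_u$ is a nonzero element of $\mcC_Z^\perp$. In both cases \eqref{eq:inner-threshold-hypotheses} gives disagreement on at least $\delta_R\Delta$ ports. At most $(\tau+\eps/2)\Delta$ ports leave $T$, so $c$ violates at least $(\delta_R-\tau-\eps/2)\Delta=(\eps/2)\Delta$ edges into $T$. The bottom label violates every edge. We will conclude with $\pi_X$: two labels with the same projection are still distinct codewords, so the error bound must be proved for labels, and this implies \eqref{eq:candidate-error-bound}.

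\textbf{Regularity and enumeration.} Run \Cref{thm:weak-reg-expanded} on each $g_k$ with parameter $\gamma$; its hypothesis is exactly \eqref{eq:spectral-condition-decoder}. Union-bounding over the $\ell_{\rm in}$ runs, with high probability all succeed. Let $\calB$ be the factor on $L$ generated by all left cuts, and $\calB'$ the factor on $R$ generated by all right cuts, so $\calB$ has $m\le2^{O(\ell_{\rm in}/\gamma^2)}$ atoms. $\mathsf{CandGen}$ outputs every $\calB$-measurable assignment $L\to[\ell_{\rm in}]$. There are at most $\ell_{\rm in}^{m}\le L_{\max}$ of them, and writing them costs $L_{\max}\cdot O(n)$, which is $\widetilde O_{\eps,\ell_{\rm in}}(n)$. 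This output does not depend on $e^R$, which gives the simultaneity claim.

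\textbf{Main obstacle: the count-preserving relabeling.} It remains to show $a^*$ agrees with some $\calB$-measurable $a$ outside a set of size $\eta_{\rm conc}n$ inside $S$. Then the error set lies in $(L\setminus S)$ plus that set, giving the bound $(\alpha+\eta_{\rm conc})n$. Following \cite{JeronimoSingh25,ST25}, define the value of $a^*$ on $E(S,T)$ as $\sum_k\langle g_k,\1_{a^{*-1}(k)\cap S}\otimes\1_T\rangle$.
\begin{itemize}
\item Replace each $g_k$ by its approximation $h_k$, at cost $\ell_{\rm in}\gamma n\Delta$.
\item Since the left sides of $h_k$ are $\calB$-measurable, $\langle h_k,\1_{A}\otimes\1_T\rangle$ depends only on the counts $|A\cap P|$ over atoms $P$. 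So the value is approximately a function of the label counts per atom.
\item Within each atom, build a relabeling with the same counts that makes most of the atom take the majority label.
\end{itemize}
The key inequality is that if $a^*$ is far from $\calB$-measurable on $S$, some atom has mass $\eta$ spread over non-majority labels. Swapping labels while preserving counts keeps the approximate value, but by rigidity each misplaced vertex loses $(\eps/2)\Delta$ satisfied edges. Comparing with full satisfaction yields a contradiction once $\eta\,\eps\Delta n/2>2\ell_{\rm in}\gamma n\Delta$ plus lower-order terms. The choice $\gamma=\eps^3/(2^5\ell_{\rm in})$ should make the threshold come out as $\eta_{\rm conc}=5\eps^2/(2^5(1-\alpha))$. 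I expect the care to go into normalizing by $|S|\ge(1-\alpha)n$ and choosing the right comparison assignment.
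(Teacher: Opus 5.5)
Your first three stages match the paper's proof: $S,T$ come from \Cref{lem:neighborhood-concentration} with $\beta=\eps/2$, then the planted assignment, rigidity from both the quotient- and stabilizer-distance hypotheses, one weak-regularity call per $g_k$, and enumeration of all $\calB$-measurable assignments. The right factor $\calB'$ is never needed. The gap is the step you label the main obstacle, and as written it does not go through.

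First, a relabeling that preserves every atomwise label count cannot ``make most of the atom take the majority label'', because the majority count is one of the preserved counts. Second, you cannot compare $a^*$ directly with the majority assignment $a^{\rm maj}$: it has different counts, so \Cref{lem:value-stability} says nothing about it. What the contradiction needs is a count-preserving $a'$ that changes the label of \emph{every} vertex in $D=\{u\in S:a^*(u)\ne a^{\rm maj}(u)\}$. This must hold in all atoms at once, since non-concentration bounds the total minority mass over atoms, not the mass in a single atom. Swapping each minority vertex with a distinct majority vertex only works when the majority class is at least as large as all minority classes combined. That fails, for example, when three labels have equal frequency.

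The missing idea is the paper's cyclic shift. In each restricted atom, order the labels $t_1,\dots,t_\ell$ with class sizes $c_1\ge\dots\ge c_\ell$ and set $c_{\ell+1}:=0$. Send every vertex of class $t_i$ ($i\ge2$) to $t_{i-1}$. Then pick disjoint $M_i\subseteq C_1$ with $|M_i|=c_i-c_{i+1}$ to receive $t_i$; this is possible because $\sum_{i\ge2}(c_i-c_{i+1})=c_2\le c_1$. All counts are preserved and every minority vertex changes label.

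Value stability then gives $\val_{S,T}(a^*)-\val_{S,T}(a')\le2\ell_{\rm in}\gamma n\Delta$. On the other side, three facts give $\val_{S,T}(a^*)-\val_{S,T}(a')\ge|D|(\eps/2)\Delta$:
\begin{itemize}
\item $a^*$ fully satisfies $E(S,T)$;
\item the CSP is left-oriented, so each left vertex owns its edges and there is neither double counting nor compensating gain;
\item rigidity.
\end{itemize}
Hence $|D|\le4\ell_{\rm in}\gamma n/\eps=\eps^2n/8\le\eta_{\rm conc}n$. There are no lower-order terms, and the normalization by $|S|\ge(1-\alpha)n$ plays no real role. With this lemma in place, your remaining steps complete the proof: extend the $\calB|_S$-measurable assignment atomwise to $L$, and note that a kept label gives $w_a(u)=\pi_X(w_u)$.
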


The proposition is proved in \Cref{subsec:csp-candidates}.  Its simultaneous guarantee is important: one successful invocation must cover all low-weight error classes for the given syndrome.

\begin{lemma}[Stitching the correct coset]\label{lem:stitching}
Let $e^R$ have syndrome $s$, and define $w$ and $y$ as above.  If an assignment $a$ satisfies
\[
    \big|\{u\in L:w_a(u)\ne\pi_X(w_u)\}\big|
    \le\rho_{\rm out}n,
\]
then $\mathsf{AffDec}_{\rm out}(\widehat y_a,\sigma_s)$ succeeds, and its output $y^\star$ yields a representative $e_a^\star$ from \eqref{eq:output-error-representative} with
\[
    e_a^\star-e^R\in\mcF_Z^\perp.
\]
\end{lemma}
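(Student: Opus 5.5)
The plan is to chain \Cref{lem:outer-affine-consistency}, \Cref{lem:affine-outer-decoder}, and \Cref{prop:dual-decomposition}.

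\emph{Step 1: the outer call succeeds.} By \Cref{lem:outer-affine-consistency}, the true outer word $y$ lies in $\mcD_X(\sigma_s)$, and $w=\Phi_X(y)+g$ with $g\in(\mcC_Z^\perp)^L$. Since $\phi_X$ is an isomorphism onto $\mcW_X$ and $w_a=\Phi_X(\widehat y_a)$, we have $(\widehat y_a)_u\ne y_u$ exactly when $w_a(u)\ne\pi_X(w_u)$. The hypothesis therefore gives $\mathrm{dist}_{b_{\rm out}}(\widehat y_a,y)\le\rho_{\rm out}n$. Then \Cref{lem:affine-outer-decoder} applies with $\sigma=\sigma_s$: the call returns a verified $y^\star\in\mcD_X(\sigma_s)$ with $y^\star-y\in\mcD_Z^\perp$.

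\emph{Step 2: the coset computation.} From \eqref{eq:output-error-representative} and $e=w-r$,
\[
    e_a^\star-e^R=\Fold\bigl(\Phi_X(y^\star)-r\bigr)-\Fold(w-r)=\Fold\bigl(\Phi_X(y^\star-y)-g\bigr),
\]
using linearity of $\Fold$ and $w=\Phi_X(y)+g$. Here $\Phi_X(y^\star-y)\in\Phi_X(\mcD_Z^\perp)$ and $-g\in(\mcC_Z^\perp)^L$. By \Cref{prop:dual-decomposition}, their sum lies in $\mcE_Z^\perp$, so its folding lies in $\mcF_Z^\perp$.

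\emph{Step 3: the final check passes.} Since $\mcF_Z^\perp\subseteq\mcF_X=\ker H_X$, we get $H_Xe_a^\star=H_Xe^R=s$, so $e_a^\star$ is appended to the output.

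Every step is bookkeeping. The only point needing care is the translation in Step 1 between disagreements of projected words and block disagreements of outer words, which uses injectivity of $\phi_X$. The case where $\mathsf L_u(a(u))=\bot$ needs no separate treatment, because it is already counted as a disagreement whenever $\pi_X(w_u)\ne0$.
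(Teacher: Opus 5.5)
Your proposal is correct and follows the paper's proof step for step. It translates the disagreement count into $\mathrm{dist}_{b_{\rm out}}(\widehat y_a,y)\le\rho_{\rm out}n$ via injectivity of $\phi_X$, then applies \Cref{lem:outer-affine-consistency} and \Cref{lem:affine-outer-decoder}, and concludes $\Phi_X(y^\star-y)-g\in\mcE_Z^\perp$ by \Cref{prop:dual-decomposition}. Your Step 3 (the final syndrome check passes because $\mcF_Z^\perp\subseteq\mcF_X=\ker H_X$) is a harmless addition that the paper places in the proof of \Cref{thm:syndrome-input-decoder} instead.
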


\begin{proof}
By \eqref{eq:yhat-induced}, \eqref{eq:outer-word-from-w}, and injectivity of $\phi_X$,
\[
    \mathrm{dist}_{b_{\rm out}}(\widehat y_a,y)
    =\big|\{u\in L:w_a(u)\ne\pi_X(w_u)\}\big|
    \le\rho_{\rm out}n.
\]
By \Cref{lem:outer-affine-consistency}, $y\in\mcD_X(\sigma_s)$.  Hence \Cref{lem:affine-outer-decoder} guarantees a verified output with $y^\star-y\in\mcD_Z^\perp$.

Write $w=\Phi_X(y)+g$ with $g\in(\mcC_Z^\perp)^L$.  Since $w=e+r$,
\[
    \Fold^{-1}(e_a^\star)-e
    =\Phi_X(y^\star)-r-e
    =\Phi_X(y^\star-y)-g.
\]
By \Cref{prop:dual-decomposition}, this difference belongs to
\[
    \Phi_X(\mcD_Z^\perp)+(\mcC_Z^\perp)^L=\mcE_Z^\perp.
\]
Folding gives $e_a^\star-e^R\in\mcF_Z^\perp$.
\end{proof}

\begin{proof}[Proof of \Cref{thm:syndrome-input-decoder}]
Assume the guarantee of \Cref{prop:candidate-generation}, proved below.

\emph{Coverage.}
Condition on its success event.  Fix any coset in $L_X(s,\tau)$, and choose a representative $e^R$ with syndrome $s$ and folded weight at most $\tau n$.  Set $w=\Fold^{-1}(e^R)+r$.  The proposition supplies an enumerated assignment $a$ with
\[
    \big|\{u\in L:w_a(u)\ne\pi_X(w_u)\}\big|
    \le(\alpha+\eta_{\rm conc})n
    \le\rho_{\rm out}n,
\]
where the last inequality is \eqref{eq:stitching-inequality}.  By \Cref{lem:stitching}, the corresponding affine outer call succeeds and produces $e_a^\star$ in the same stabilizer coset as $e^R$.  In particular, $H_Xe_a^\star=s$, so the final check does not discard it.  Because the success event is simultaneous over all target errors, every coset in $L_X(s,\tau)$ is covered in this one run.

\emph{Syndrome consistency.}
Every retained output is explicitly verified.  The construction also shows why its syndrome must be $s$: for any $y^\star\in\mcD_X(\sigma_s)$ and $e^\star=\Fold(\Phi_X(y^\star)-r)$,
\begin{align*}
    \langle e^\star,\Fold(\iota_u(a_i))\rangle
    &=\langle\phi_X(y_u^\star),a_i\rangle-\langle r_u,a_i\rangle
      =s^{\rm in}_{u,i},\\
    \langle e^\star,\Fold(\Phi_Z(b_j))\rangle
    &=\langle y^\star,b_j\rangle-\langle r,\Phi_Z(b_j)\rangle
      =s^{\rm out}_j.
\end{align*}
Here $\phi_X(y_u^\star)\in\mcC_X$ is orthogonal to $a_i$, and the second equality uses \eqref{eq:affine-sigma-definition}.  This conclusion holds for all outputs, not just those arising from the coverage argument.

\emph{List size and running time.}
Each of at most $L_{\max}$ assignments contributes at most one representative.  This bounds the output length even if cosets repeat, and also bounds the number of target cosets because any successful run covers them all.  Local lifting, affine-syndrome computation, and candidate generation cost $\widetilde O_{\eps,\ell_{\rm in}}(n)$.  For each candidate, forming $\widehat y_a$, running the affine outer decoder, and constructing and checking $e_a^\star$ cost $T_{\rm out}(n)+O(n)$.  This gives the stated running time.  The only randomized part of the algorithm is candidate generation, whose success probability is $1-o(1)$.
\end{proof}

\subsection{Candidate Enumeration via Weak Regularity}\label{subsec:csp-candidates}

We construct $\mathsf{CandGen}$ and prove \Cref{prop:candidate-generation}.  Throughout this subsection, the local lists and label maps are those defined in \Cref{subsec:syndrome-lifting}.  We first express local agreement as a CSP and identify the assignments associated with low-weight errors.  Weak regularity then gives a single factor on which all of these assignments are concentrated.

\paragraph{The agreement CSP.}
\begin{definition}\label{def:agreement-csp}
For each edge $(u,v)\in E$, with port $i=\mathrm{nbr}_u(v)$, let
\[
    \Psi_{(u,v)}:=\{t\in[\ell_{\rm in}]:
       \mathsf L_u(t)\ne\bot
       \text{ and }(\mathsf L_u(t))_i=(r_u)_i\}.
\]
Set $\Psi_{(u,v)}=\emptyset$ off $E$.  The CSP has one variable $A_u\in[\ell_{\rm in}]$ for every $u\in L$; an edge $(u,v)$ is satisfied when $A_u\in\Psi_{(u,v)}$.  Thus dummy labels satisfy no edges.  For each label $t$, define
\begin{equation}\label{eq:gt-definition}
    g_t(u,v):=\1\{(u,v)\in E\}\,\1\{t\in\Psi_{(u,v)}\}.
\end{equation}
For $S\subseteq L$, $T\subseteq R$, and $a:S\to[\ell_{\rm in}]$, write
\begin{equation}\label{eq:csp-value-definition}
    \val_{S,T}(a):=\sum_{t=1}^{\ell_{\rm in}}
       \left\langle g_t,
         \1_{\{u\in S:a(u)=t\}}\otimes\1_T\right\rangle.
\end{equation}
This is the number of satisfied edges in $E(S,T)$.
\end{definition}

\paragraph{Target assignments and local rigidity.}
The following sets depend on the unknown error and are used only in the analysis.  The decoder does not need to find them.

\begin{lemma}\label{lem:planted-completeness}
Let $H_Xe^R=s$ and $\wt_{\Delta b_{\rm in}}(e^R)\le\tau n$, and put $w=\Fold^{-1}(e^R)+r$.  Define
\[
    T_w:=\{v\in R:e^R_v=0\}
        =\{v\in R:\Fold(w)_v=r^R_v\},
\]
and
\[
    S_w:=\{u\in L:|N(u)\setminus T_w|\le\rho_{\rm in}\Delta\}.
\]
Then $|S_w|\ge(1-\alpha)n$, and there is an assignment $a_w:S_w\to[\ell_{\rm in}]$ with
\[
    \mathsf L_u(a_w(u))=w_u\quad(u\in S_w),
    \qquad
    \val_{S_w,T_w}(a_w)=|E(S_w,T_w)|.
\]
\end{lemma}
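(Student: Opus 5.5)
The plan has three parts: a size bound on $S_w$ from \Cref{lem:neighborhood-concentration}, local list membership of $w_u$ from the distance identity, and full satisfaction of $E(S_w,T_w)$ from the definition of $T_w$.

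\emph{Size of $S_w$.} Since $\wt_{\Delta b_{\rm in}}(e^R)\le\tau n$, we have $|T_w|\ge(1-\tau)n$. Apply \Cref{lem:neighborhood-concentration} to $T_w$ with $a:=1-\tau=1-\delta_R+\eps$ and $\beta:=\eps/2$; note $a>\beta$ because $\delta_R<1$. The lemma produces the set of left vertices with $|N(u)\cap T_w|\ge(a-\beta)\Delta=(1-\delta_R+\eps/2)\Delta=(1-\rho_{\rm in})\Delta$. For such $u$, $|N(u)\setminus T_w|\le\rho_{\rm in}\Delta$, so this set is contained in $S_w$. The lemma then gives
\[
|S_w|\ge\Bigl(1-\frac{\lambda^2}{\Delta^2(\eps/2)^2}\Bigr)n=(1-\alpha)n,
\]
which matches the definition of $\alpha$ in \eqref{eq:alpha-eta}. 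The second description of $T_w$ holds because $\Fold(w)-r^R=e^R$ by \Cref{lem:local-lift-property}.

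\emph{Local list membership.} Fix $u\in S_w$. For each edge $(u,v)$ with port $i=\mathrm{nbr}_u(v)$, the folded symbol $\Fold(w)_{v,i}$ equals $w_{u,i}$ and $r^R_{v,i}$ equals $r_{u,i}$. If $v\in T_w$, the whole block $v$ of $\Fold(w)$ agrees with $r^R$, so $w_{u,i}=(r_u)_i$. Thus $w_u$ and $r_u$ can differ only on ports leading to $N(u)\setminus T_w$, so $\mathrm{dist}_{b_{\rm in}}(w_u,r_u)\le\rho_{\rm in}\Delta$. By \Cref{lem:local-lift-property}, $w_u\in\mcC_X$, hence $w_u\in\mcL_u$. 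Since $\mathsf L_u$ lists every codeword of $\mcL_u$ exactly once, we can set $a_w(u):=\mathsf L_u^{-1}(w_u)$; this is a non-dummy label.

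\emph{Full satisfaction.} For $(u,v)\in E(S_w,T_w)$ with port $i$, the previous step gives $(\mathsf L_u(a_w(u)))_i=(w_u)_i=(r_u)_i$, so $a_w(u)\in\Psi_{(u,v)}$. Hence every edge of the rectangle is satisfied, and $\val_{S_w,T_w}(a_w)=|E(S_w,T_w)|$.

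The only step needing care is the bookkeeping between edge and port indices under folding, specifically that $\mathrm{idx}(e)$ is the same from both endpoints. The rest is direct substitution, with $\beta=\eps/2$ chosen to match $\alpha$.
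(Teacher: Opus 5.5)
Your proof is correct and follows the paper's argument essentially step for step. The paper also applies \Cref{lem:neighborhood-concentration} with $a=1-\delta_R+\eps$ and $\beta=\eps/2$ to get $|S_w|\ge(1-\alpha)n$, then uses that $w_u$ and $r_u$ differ only on ports leading outside $T_w$, together with $w_u\in\mcC_X$, to place $w_u$ in $\mcL_u$, and finally reads off full satisfaction of $E(S_w,T_w)$ from the agreement of $w$ and $r$ there; your explicit port bookkeeping under folding (using $\mathrm{nbr}_v^{-1}(i)=u$) just spells out a step the paper states in one line.
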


\begin{proof}
The weight bound gives $|T_w|\ge(1-\delta_R+\eps)n$.  Apply \Cref{lem:neighborhood-concentration} with $a=1-\delta_R+\eps$ and $\beta=\eps/2$.  All but at most
$4\lambda^2n/(\Delta^2\eps^2)=\alpha n$ left vertices have at least
$(1-\delta_R+\eps/2)\Delta$ neighbors in $T_w$, equivalently at most $\rho_{\rm in}\Delta$ neighbors outside $T_w$.  This gives the bound on $|S_w|$.

For $u\in S_w$, the words $w_u$ and $r_u$ differ only on edges leading to $R\setminus T_w$.  Thus
\[
    \mathrm{dist}_{b_{\rm in}}(w_u,r_u)\le\rho_{\rm in}\Delta.
\]
By \Cref{lem:local-lift-property}, $w_u\in\mcC_X$, so it occurs in $\mcL_u$ and has a unique genuine label $a_w(u)$.  On every edge of $E(S_w,T_w)$, the word $w$ agrees with $r$.  Therefore this assignment satisfies every such edge.
\end{proof}

\begin{lemma}[Local rigidity]\label{lem:local-robustness}
With $w,S_w,T_w,a_w$ as in \Cref{lem:planted-completeness}, fix $u\in S_w$ and $t\ne a_w(u)$.  Replacing the label at $u$ by $t$ violates at least $(\eps/2)\Delta$ edges of $E(\{u\},T_w)$.
\end{lemma}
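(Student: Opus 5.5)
The plan is to compare the two local codewords $c:=\mathsf L_u(t)$ and $w_u=\mathsf L_u(a_w(u))$ on the ports of $u$ leading into $T_w$. First dispose of the dummy case: if $\mathsf L_u(t)=\bot$, then $t\notin\Psi_{(u,v)}$ for every edge, so all edges of $E(\{u\},T_w)$ are violated. Since $u\in S_w$, we have $|N(u)\cap T_w|\ge(1-\rho_{\rm in})\Delta=(1-\delta_R+\eps/2)\Delta\ge(\eps/2)\Delta$, which already suffices.

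Otherwise $c\in\mcL_u\subseteq\mcC_X$, and $c\ne w_u$ because the labels of $\mathsf L_u$ list distinct codewords. I will show $\mathrm{dist}_{b_{\rm in}}(c,w_u)\ge\delta_R\Delta$ by splitting into the two quantum cases. The difference $c-w_u$ is a nonzero element of $\mcC_X$. If $c-w_u\notin\mcC_Z^\perp$, the quotient distance hypothesis $\delta_{\rm quot}(\mcC_X)\ge\delta_R$ from \eqref{eq:inner-threshold-hypotheses} gives weight at least $\delta_R\Delta$. If $c-w_u\in\mcC_Z^\perp\setminus\{0\}$, the stabilizer hypothesis $\delta_{\rm stab}(\mcC_X)\ge\delta_R$ gives the same bound. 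This case split is the only genuinely quantum point, and it is the reason both hypotheses in \eqref{eq:inner-threshold-hypotheses} are imposed.

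Next, count the violated edges. On every edge $(u,v)$ with $v\in T_w$ and port $i=\mathrm{nbr}_u(v)$, we have $(w_u)_i=(r_u)_i$ by the proof of \Cref{lem:planted-completeness}. So the label $t$ satisfies $(u,v)$ exactly when $c_i=(w_u)_i$. Hence the number of violated edges in $E(\{u\},T_w)$ equals the number of ports into $T_w$ where $c$ and $w_u$ differ. This is at least
\[
    \mathrm{dist}_{b_{\rm in}}(c,w_u)-|N(u)\setminus T_w|
    \ge\delta_R\Delta-\rho_{\rm in}\Delta=\frac{\eps}{2}\Delta,
\]
using $u\in S_w$ and $\rho_{\rm in}=\delta_R-\eps/2$. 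Replacing the label therefore violates at least $(\eps/2)\Delta$ edges, all of which were satisfied by $a_w$.

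I expect no real obstacle. The only care needed is in the bookkeeping: distinctness of list entries must guarantee $c\ne w_u$, the $\bot$ case must be handled separately, and the distance lower bound must be a block-metric statement matching how $S_w$ counts neighbors outside $T_w$.
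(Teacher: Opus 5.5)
Your proof is correct and follows the paper's argument essentially step for step. You handle the dummy label via $|N(u)\cap T_w|\ge(1-\rho_{\rm in})\Delta\ge(\eps/2)\Delta$, use the quotient/stabilizer case split to get $\mathrm{dist}_{b_{\rm in}}(c,w_u)\ge\delta_R\Delta$, and then discard the at most $\rho_{\rm in}\Delta$ ports leaving $T_w$, using that $w_u$ agrees with $r_u$ on every port into $T_w$.
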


\begin{proof}
If $\mathsf L_u(t)=\bot$, every edge from $u$ to $T_w$ is violated.  Since
\[
    |N(u)\cap T_w|\ge(1-\rho_{\rm in})\Delta\ge(\eps/2)\Delta,
\]
the claim follows.

Otherwise, put $c':=\mathsf L_u(t)$.  Distinctness of the genuine entries gives $c'\ne w_u$, so $h:=c'-w_u\in\mcC_X\setminus\{0\}$.  If $h\in\mcC_Z^\perp$, the stabilizer-distance hypothesis gives $\wt_{b_{\rm in}}(h)\ge\delta_R\Delta$; if $h\notin\mcC_Z^\perp$, the quotient-distance hypothesis gives the same bound.  At most $\rho_{\rm in}\Delta$ disagreement coordinates can lead outside $T_w$.  Hence at least
\[
    (\delta_R-\rho_{\rm in})\Delta=(\eps/2)\Delta
\]
disagreements lie on edges into $T_w$.  Since $w$ agrees with $r$ on those edges, each is violated by the new label.
\end{proof}

\begin{remark}\label{rem:quantum-collision}
The quotient-distance case separates locally different logical cosets.  The stabilizer-distance case separates distinct representatives of the same local coset, which may impose different edge constraints even though their projections under $\pi_X$ coincide.  This is why the rigidity argument uses both distance hypotheses and retains the genuine inner codewords in the local lists.
\end{remark}

\paragraph{Weak regularity and atomwise label counts.}
Apply \Cref{thm:weak-reg-expanded} to each $g_t$ with accuracy $\gamma$.  The spectral hypothesis \eqref{eq:spectral-condition-decoder} permits these calls.  On their joint success event, write
\[
    h_t=\sum_{j=1}^{p_t}c_{t,j}\,\1_{A_{t,j}}\otimes\1_{B_{t,j}},
    \qquad p_t=O(\gamma^{-2}),
    \qquad \|g_t-h_t\|_\square\le\gamma n\Delta.
\]
Let $\calB_L$ be the factor generated by all left cuts $A_{t,j}$.  Then
\begin{equation}\label{eq:left-factor-size}
    |\calB_L|\le2^{O(\ell_{\rm in}\gamma^{-2})}
              =2^{O(\ell_{\rm in}^3/\eps^6)}.
\end{equation}
For $a:S\to[\ell_{\rm in}]$, put $S_t(a):=\{u\in S:a(u)=t\}$.  Replacing each $g_t$ by $h_t$ in \eqref{eq:csp-value-definition} gives
\begin{equation}\label{eq:proxy-value}
    \widehat{\val}_{S,T}(a)
    :=\sum_{t=1}^{\ell_{\rm in}}\sum_{j=1}^{p_t}
       c_{t,j}\,|A_{t,j}\cap S_t(a)|\,|B_{t,j}\cap T|.
\end{equation}
The cut-norm bounds imply
\begin{equation}\label{eq:proxy-error}
    |\val_{S,T}(a)-\widehat{\val}_{S,T}(a)|
    \le\ell_{\rm in}\gamma n\Delta
\end{equation}
for every $S,T,a$.  Each $A_{t,j}$ is a union of atoms, so the proxy depends on $a$ only through its label counts in the restricted atoms $P\cap S$.

\begin{lemma}\label{lem:value-stability}
Let $S\subseteq L$, $T\subseteq R$, and let $a,a':S\to[\ell_{\rm in}]$ have the same label counts in every atom of $\calB_L|_S$.  Then
\[
    |\val_{S,T}(a)-\val_{S,T}(a')|\le2\ell_{\rm in}\gamma n\Delta.
\]
The same conclusion holds with $\ell$ in place of $\ell_{\rm in}$ for any agreement instance with $\ell$ labels and a factor generated by weak-regularity approximations at accuracy $\gamma$.
\end{lemma}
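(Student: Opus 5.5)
The argument is a triangle inequality through the proxy value $\widehat{\val}$ of \eqref{eq:proxy-value}. We show $\widehat{\val}_{S,T}(a)=\widehat{\val}_{S,T}(a')$ exactly, and then apply \eqref{eq:proxy-error} to each of $a$ and $a'$:
\[
    |\val_{S,T}(a)-\val_{S,T}(a')|
    \le|\val_{S,T}(a)-\widehat{\val}_{S,T}(a)|
      +|\widehat{\val}_{S,T}(a')-\val_{S,T}(a')|
    \le2\ell_{\rm in}\gamma n\Delta.
\]

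First we justify \eqref{eq:proxy-error}. Since $h_t=\sum_j c_{t,j}\1_{A_{t,j}}\otimes\1_{B_{t,j}}$, we have
\[
    \langle h_t,\1_{S_t(a)}\otimes\1_T\rangle
    =\sum_j c_{t,j}\,|A_{t,j}\cap S_t(a)|\,|B_{t,j}\cap T|,
\]
so $\widehat{\val}_{S,T}(a)=\sum_t\langle h_t,\1_{S_t(a)}\otimes\1_T\rangle$. The difference $\val_{S,T}(a)-\widehat{\val}_{S,T}(a)$ is then $\sum_t\langle g_t-h_t,\1_{S_t(a)}\otimes\1_T\rangle$. Each term is bounded in absolute value by $\|g_t-h_t\|_\square\le\gamma n\Delta$, because $S_t(a)\subseteq L$ and $T\subseteq R$ are admissible rectangle sides in the definition of the cut norm. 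Summing over the $\ell_{\rm in}$ labels gives \eqref{eq:proxy-error} uniformly in $S$, $T$ and $a$.

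Next we show the proxies coincide. Every left cut $A_{t,j}$ is among the generators of $\calB_L$, so it is a union of atoms $P\in\calB_L$. Hence
\[
    |A_{t,j}\cap S_t(a)|=\sum_{P\subseteq A_{t,j}}|\{u\in P\cap S:a(u)=t\}|.
\]
Each summand is the count of label $t$ in the restricted atom $P\cap S$ of $\calB_L|_S$, or zero when $P\cap S=\emptyset$. By hypothesis these counts agree for $a$ and $a'$. Therefore $|A_{t,j}\cap S_t(a)|=|A_{t,j}\cap S_t(a')|$ for all $t,j$. The factors $c_{t,j}$ and $|B_{t,j}\cap T|$ do not depend on the assignment, so $\widehat{\val}_{S,T}(a)=\widehat{\val}_{S,T}(a')$, which completes the triangle inequality above.

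The generalization to $\ell$ labels is the same argument verbatim, summing $\ell$ cut-norm errors instead of $\ell_{\rm in}$. The only point needing care is the bookkeeping that $\calB_L$ is generated by \emph{all} left cuts across all labels $t$, so that each $A_{t,j}$ is measurable. The argument has no real obstacle.
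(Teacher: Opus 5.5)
Your proposal is correct and follows the paper's proof: the atomwise label counts determine every $|A_{t,j}\cap S_t(a)|$, so the two proxy values coincide, and applying \eqref{eq:proxy-error} to both assignments gives $2\ell_{\rm in}\gamma n\Delta$. Your explicit derivation of \eqref{eq:proxy-error} from the per-label cut-norm bounds fills in a step the paper only asserts, and your remark that the $\ell$-label version goes through verbatim matches the paper's.
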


\begin{proof}
The atomwise counts determine every $|A_{t,j}\cap S_t(a)|$, so the two proxy values in \eqref{eq:proxy-value} are equal.  Applying \eqref{eq:proxy-error} to both assignments proves the bound.  The proof uses only the number of labels, giving the stated general form.
\end{proof}

\paragraph{Concentration on the factor.}
Value stability says that rearranging labels within each atom changes the value little while rigidity gives the opposite conclusion if many labels of a fully satisfying assignment change.

\begin{lemma}\label{lem:one-sided-concentration-expanded}
Let $g_1,\ldots,g_\ell$ define a left-oriented agreement CSP, and let $\calB_L$ be generated by weak-regularity approximations to these functions at accuracy $\gamma$.  Define $\val$ as in \eqref{eq:csp-value-definition}, with $\ell$ labels.  Suppose $\emptyset\ne S\subseteq L$, $T\subseteq R$, and $a:S\to[\ell]$ satisfy:
\begin{enumerate}[label=(\roman*),leftmargin=2.6em]
    \item $\val_{S,T}(a)=|E(S,T)|$;
    \item for every $u\in S$ and $t'\ne a(u)$, changing the label at $u$ to $t'$ violates at least $(\eps/2)\Delta$ edges into $T$.
\end{enumerate}
Then $a$ is $\eta$-concentrated on $\calB_L|_S$, where
\[
    \eta:=\frac{5\ell\gamma}{\eps}\cdot\frac{n}{|S|}.
\]
\end{lemma}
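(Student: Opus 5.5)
We compare $a$ with a count-preserving rearrangement of it. Concretely, we construct an assignment $a':S\to[\ell]$ that has the same label counts as $a$ in every atom of $\calB_L|_S$ and that disagrees with $a$ on as many vertices as possible. \Cref{lem:value-stability} forces $\val_{S,T}(a')\ge|E(S,T)|-2\ell\gamma n\Delta$. Rigidity, by contrast, says that every vertex where $a'$ differs from $a$ loses at least $(\eps/2)\Delta$ edges into $T$. Set $D:=\{u\in S:a'(u)\ne a(u)\}$. The constraints are left-oriented, so edges are satisfied or violated vertex by vertex, and hence
\[
    \val_{S,T}(a')\le|E(S,T)|-(\eps/2)\Delta|D|.
\]
Combining the two bounds gives $|D|\le 4\ell\gamma n/\eps$.

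It remains to show that $D$ can be chosen large relative to the distance of $a$ from measurable functions. Fix an atom $P\in\calB_L|_S$, and let $m_P$ be the count of the most frequent label of $a$ on $P$. The best measurable approximation disagrees with $a$ on exactly $\sum_P(|P|-m_P)$ points. On $P$ we need a permutation of the multiset of labels that moves as many positions as possible. Standard derangement-of-multisets facts say that if $m_P\le|P|/2$ the labels can be rearranged so that every position changes. In general, the number of positions that can be changed is $2\min(m_P,|P|-m_P)\ge|P|-m_P$ whenever the minority mass is at most half, and $|P|$ otherwise. In either case the number of changed positions is at least $|P|-m_P$. A concrete construction is to sort the positions by label and cyclically shift by $\lfloor|P|/2\rfloor$ when no label has majority. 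When some label does have majority, pair each minority position with a distinct majority position and swap the two labels.

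Summing over atoms yields
\[
    \sum_P(|P|-m_P)\le|D|\le\frac{4\ell\gamma n}{\eps},
\]
so $a$ is concentrated with $\eta|S|\le 4\ell\gamma n/\eps$. This is within the claimed $5\ell\gamma n/(\eps|S|)$; the extra factor gives slack for any rounding.

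The main obstacle is the combinatorial rearrangement step: we must produce, atom by atom, a count-preserving relabeling that moves at least $|P|-m_P$ positions. The swap-pairing argument handles this cleanly, since each swap changes two positions and we perform $\min(m_P,|P|-m_P)$ of them, with the no-majority case covered by the cyclic shift. A second point is the edge accounting in the rigidity step: violations at distinct vertices of $D$ involve disjoint edge sets. This holds because each edge has a single left endpoint and its constraint depends only on that endpoint's label, so the losses add up exactly.
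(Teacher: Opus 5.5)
Your proposal is correct and follows the paper's proof: build a relabeling that preserves label counts in every atom, then play \Cref{lem:value-stability} against rigidity, whose per-vertex losses add up because the constraints are left-oriented. The differences are cosmetic. You argue directly, which gives the constant $4$ where the paper's contradiction argument uses $5$. You also realize the relabeling by swaps or a cyclic shift instead of the paper's shift chain ($C_i\mapsto t_{i-1}$, with the sets $M_i\subseteq C_1$); both change at least $|P|-m_P$ vertices in each atom.
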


\begin{proof}
For $\ell=1$ the conclusion is immediate, so assume $\ell\ge2$.  Suppose, for contradiction, that $a$ is not $\eta$-concentrated.  Let $a^{\rm maj}$ assign a most frequent label in each restricted atom, breaking ties arbitrarily, and set
\[
    D:=\{u\in S:a(u)\ne a^{\rm maj}(u)\}.
\]
Since $a^{\rm maj}$ is $\calB_L|_S$-measurable, $|D|>\eta|S|$.

We construct an assignment $a'$ preserving every atomwise label count while changing every vertex in $D$.  Fix a restricted atom $P$, and order the labels $t_1,\ldots,t_\ell$ by their frequencies, with $t_1$ the chosen majority label.  Write
\[
    C_i:=\{u\in P:a(u)=t_i\},
    \qquad c_i:=|C_i|,
    \qquad c_1\ge\cdots\ge c_\ell,
\]
and set $c_{\ell+1}:=0$.  Since
\[
    \sum_{i=2}^{\ell}(c_i-c_{i+1})=c_2\le c_1,
\]
choose disjoint sets $M_2,\ldots,M_\ell\subseteq C_1$ with $|M_i|=c_i-c_{i+1}$, and define on $P$
\[
    a'(u):=
    \begin{cases}
      t_{i-1}, & u\in C_i,\ i\ge2,\\
      t_i, & u\in M_i,\ i\ge2,\\
      t_1, & u\in C_1\setminus\bigcup_{i\ge2}M_i.
    \end{cases}
\]
Every minority vertex changes label.  The new count of $t_i$ for $i\ge2$ is $c_{i+1}+|M_i|=c_i$, and the count of $t_1$ is $c_2+c_1-c_2=c_1$.  Carrying out this construction in each atom gives the required $a'$.

By \Cref{lem:value-stability},
\[
    \val_{S,T}(a)-\val_{S,T}(a')\le2\ell\gamma n\Delta.
\]
On the other hand, every vertex in $D$ changes label and loses at least $(\eps/2)\Delta$ satisfied edges by rigidity.  Edges incident to different left vertices are disjoint, and the original assignment satisfies every edge in $E(S,T)$, so there is no double counting or compensating gain.  Therefore
\begin{align*}
    \val_{S,T}(a)-\val_{S,T}(a')
    &\ge |D|\frac\eps2\Delta\\
    &>\eta|S|\frac\eps2\Delta
      =\frac52\ell\gamma n\Delta,
\end{align*}
a contradiction.
\end{proof}

\begin{proof}[Proof of \Cref{prop:candidate-generation}]
\emph{The procedure.}
From $r^R$, unfold to obtain $r$, compute the local lists and their label maps, and form the indicators $g_t$.  Run the weak-regularity calls above and construct $\calB_L$.  Enumerate every $\calB_L$-measurable assignment $a:L\to[\ell_{\rm in}]$ and compute its projected word $w_a$ using \eqref{eq:wa-definition}.  By \eqref{eq:left-factor-size}, this produces at most
\[
    \ell_{\rm in}^{|\calB_L|}
    \le\ell_{\rm in}^{\,2^{O(\ell_{\rm in}^3/\eps^6)}}
    =L_{\max}
\]
assignments.  The stated size and time bounds can be enforced by returning an empty family if a regularity call fails or exceeds its prescribed bounds.

\emph{Agreement.}
Condition on the joint success of the $\ell_{\rm in}$ regularity calls, and fix any target error $e^R$.  By \Cref{lem:planted-completeness,lem:local-robustness}, the associated assignment $a_w:S_w\to[\ell_{\rm in}]$ fully satisfies $E(S_w,T_w)$ and is rigid there.  The concentration lemma applies with
\[
    \eta_w:=\frac{5\ell_{\rm in}\gamma}{\eps}\frac{n}{|S_w|}
    \le\frac{5\eps^2}{2^5(1-\alpha)}
    =\eta_{\rm conc}.
\]
It gives a $\calB_L|_{S_w}$-measurable assignment differing from $a_w$ on at most
\[
    \eta_w|S_w|
    =\frac{5\ell_{\rm in}\gamma}{\eps}n
    =\frac{5\eps^2}{2^5}n
    \le\eta_{\rm conc}n
\]
vertices.  Extend it to all of $L$ by giving each full atom the label chosen on its intersection with $S_w$, and choosing any label for atoms disjoint from $S_w$.  The extension $a$ is among the enumerated assignments.

Whenever $u\in S_w$ keeps its label, $\mathsf L_u(a(u))=w_u$ and hence $w_a(u)=\pi_X(w_u)$.  Thus
\[
    \big|\{u\in L:w_a(u)\ne\pi_X(w_u)\}\big|
    \le |L\setminus S_w|+\eta_{\rm conc}n
    \le(\alpha+\eta_{\rm conc})n.
\]
The regularity success event provides cut-norm control for every rectangle.  It depends only on the fixed instance, not on $e^R,S_w$, or $T_w$.  Consequently this argument covers all target errors simultaneously, without a union bound over errors.

\emph{Probability and running time.}
There are only the constant number $\ell_{\rm in}$ of regularity calls, so their joint success probability is $1-o(1)$.  Local list construction and the CSP indicators take $O(n)$ time.  Weak regularity takes $\widetilde O_{\eps,\ell_{\rm in}}(n)$ time, and constructing the factor and writing all $L_{\max}$ candidates costs $O_{\eps,\ell_{\rm in}}(n)$.  This proves the claimed bound and completes the proof of \Cref{thm:syndrome-input-decoder}.
\end{proof}

\section{Explicit Instantiation}\label{sec:instantiation}
This section instantiates the abstract decoder of \Cref{thm:syndrome-input-decoder}.  We begin with the result and with a one-shot synthesis statement.  Only afterward do we introduce the outer code, the inner code, and the parameter choices.  This order separates the conceptual argument from the bookkeeping.

\begin{theorem*}[\Cref{thm:main}, restated]
For every target rate $R\in(0,1)$ and every $\zeta,\xi>0$, there are constants $b,\ell,w\in\bbN$, an infinite set $\mathcal N\subseteq\bbN$, and an explicit family $\{\mcQ_N:N\in\mathcal N\}$ of $\F_2$-linear vector-space CSS codes on $N$ folded blocks of size $b$ such that, for every $N\in\mathcal N$:
\begin{enumerate}[label=(\roman*),leftmargin=2.7em]
    \item the actual rate $R_N$ is at least $R$, and the binary row and column weights of both check matrices are at most $w$;
    \item an explicit certified bound $\delta_N\le\delta(\mcQ_N)$ satisfies
    \[
        \delta_N\ge\frac{1-R_N}{2}-\zeta;
    \]
    \item for some $\tau_N\ge\delta_N-\xi$, every $X$- or $Z$-syndrome has at most $\ell$ stabilizer cosets containing a representative of folded weight at most $\tau_NN$; and
    \item a randomized syndrome-input decoder finds a list containing all of those cosets in time $\widetilde O_{R,\zeta,\xi}(N)$ with probability $1-o(1)$, and verifies the syndrome of every output representative.
\end{enumerate}
\end{theorem*}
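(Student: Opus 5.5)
The plan is to instantiate \Cref{thm:syndrome-input-decoder} with three concrete ingredients and then tune parameters so that the abstract hypotheses \eqref{eq:inner-threshold-hypotheses}, \eqref{eq:spectral-condition-decoder} and \eqref{eq:stitching-inequality} all hold. We work over $q=2$. Given $R,\zeta,\xi$, set $\eps:=\min(\xi,\zeta)/4$.

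\textbf{Outer code.} Take an explicit binary quantum Tanner (or other good qLDPC) family $\mcD$ with rate $R(\mcD)\ge 1-\eps'$ and constant relative distance $\delta(\mcD)>0$. It should come with a linear-time syndrome decoder correcting block weight $\rho_{\rm out}n$ up to $Z$-stabilizers, for some constant $\rho_{\rm out}>0$, e.g.\ from \cite{LeverrierZemor22DecodingQTanner,GuPattisonTang23,DinurHsiehLinVidick23}. If needed, pass to block size $b_{\rm out}$ by grouping coordinates; this only rescales the constants.

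\textbf{Inner code.} Fix $\Delta$ large and find, by exhaustive search over a constant-size space, a CSS code $\mcC$ on $\Delta$ blocks of size $b_{\rm in}$ with the following properties:
\begin{itemize}
\item rate $R(\mcC)\ge R/(1-\eps')+\eps''$;
\item $\delta_{\rm quot}(\mcC_X),\,\delta_{\rm stab}(\mcC_X)$ and their $Z$-analogues all at least $(1-R(\mcC))/2-\eps$.
\end{itemize}
Existence comes from a random CSS code argument over a large alphabet $\F_2^{b_{\rm in}}$: a random pair $C_Z^\perp\subseteq C_X$ has both $C_X\setminus C_Z^\perp$ and $C_Z^\perp\setminus 0$ avoiding low-weight words, by a union bound, since $\dim C_Z^\perp$ is small and the quantum GV bound approaches quantum Singleton as the alphabet grows. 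Classical $(\rho_{\rm in},\ell_{\rm in})$-list decodability with $\rho_{\rm in}=\delta_R-\eps/2<\delta_{\rm quot}$ follows because $\rho_{\rm in}$ is below the relative distance $\min(\delta_{\rm quot},\delta_{\rm stab})$ of $\mcC_X$ itself. Then $\ell_{\rm in}\le |\mcC_X|$ is constant, or better, a Johnson-type or capacity bound shows $\ell_{\rm in}=\exp(\poly(1/\eps))$. This gives the list-size scaling $\exp(\exp(\poly(1/\xi)))$ via \eqref{eq:Lmax-definition}. Decoding is exhaustive local search.

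\textbf{Expander and parameters.} With $\ell_{\rm in}$ and $\eps$ fixed, pick $\Delta$ so that $2\sqrt{\Delta-1}+1\le \lambda$ satisfies $\lambda/\Delta<\eps^6/(2^{33}\ell_{\rm in}^2)$, $\lambda/(\Delta\delta(\mcD))\le\eps$, and $\alpha+\eta_{\rm conc}\le\rho_{\rm out}$. The last condition needs $\eta_{\rm conc}=O(\eps^2)\le\rho_{\rm out}/2$, so we may also shrink $\eps$ relative to $\rho_{\rm out}$. This creates a circularity: $\Delta$ depends on $\ell_{\rm in}$, which depends on the inner code, which depends on $\Delta$. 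I would resolve it by choosing the inner code with block length exactly $\Delta$ and bounding $\ell_{\rm in}$ by a function of $\eps$ alone, independent of $\Delta$ (a capacity-type list bound over a large alphabet). I expect this to be the main obstacle: it requires the inner list size to be uniform in $\Delta$, which the random construction must certify jointly with the distance conditions.

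With these choices, \Cref{thm:alon-prescribed-size} gives the explicit expander with $n$ ranging over admissible sizes that also match outer block lengths; $\mathcal N$ is this intersection, which is infinite by standard padding and density arguments. The claims then follow in order:
\begin{itemize}
\item \Cref{prop:qael-rate} gives $R_N=R(\mcC)R(\mcD)\ge R$.
\item \Cref{prop:qael-ldpc} gives the weight bound $w$, over $\F_2$ with $b=\Delta b_{\rm in}$.
\item \Cref{thm:distance-amplification} gives the certified distance $\delta_N\ge\delta(\mcC)-\eps\ge(1-R(\mcC))/2-2\eps\ge(1-R_N)/2-\zeta$, using $R(\mcC)\le R_N+\eps'$.
\item Setting $\delta_R:=\delta_N$, which is at most $\delta_{\rm quot}$ and $\delta_{\rm stab}$, and $\tau_N=\delta_N-\eps\ge\delta_N-\xi$, \Cref{thm:syndrome-input-decoder} yields (iii) and (iv). Combinatorial list decodability follows from the list-size bound there, since a single successful run contains every coset. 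The running time is $\widetilde O(N)$ because $T_{\rm out}$ is linear.
\item The $Z$ side follows by symmetry, using the $Z$-analogue distance conditions on the inner code.
\end{itemize}
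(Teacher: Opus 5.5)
Your architecture is the paper's: a high-rate quantum Tanner outer family, a brute-forced near-Singleton inner CSS code with quotient and stabilizer distance on both sides, a near-Ramanujan expander from \Cref{thm:alon-prescribed-size}, and \Cref{thm:syndrome-input-decoder} with parameters fixed in the order outer code, $\eps$, inner alphabet and radius, and only then $\Delta$.

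\textbf{The gap.} The step you flag as the main obstacle and leave open is a local list bound $\ell_{\rm in}$ that is independent of $\Delta$ and holds jointly with the four distance conditions. Neither justification you offer supplies it. ``Below the distance'' only gives $\ell_{\rm in}\le|\mcC_X|=Q^{\Theta(\Delta)}$. That makes the spectral condition $\lambda/\Delta<\eps^6/(2^{33}\ell_{\rm in}^2)$ unsatisfiable as $\Delta$ grows. A Johnson-type bound cannot reach $\rho_{\rm in}=\delta_R-\eps/2$ either. $\mcC_X$ has rate $(1+R(\mcC))/2$, so by classical Singleton its relative distance is at most about $(1-R(\mcC))/2$, and $\rho_{\rm in}$ lies within $O(\eps)$ of that. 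The large-alphabet Johnson radius $1-\sqrt{1-\delta}$, by contrast, is a constant factor below $\delta$.

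Only a capacity-type bound for random codes works, and this is exactly the content of \Cref{thm:random-inner-package}. The paper proceeds as follows:
\begin{itemize}
\item It fixes $\tau_{\rm in}$, $\eps_{\rm in}$ and $Q=2^{b_{\rm in}}$ with $H_Q(\tau_{\rm in})\le\tau_{\rm in}+\eps_{\rm in}/4$ before choosing $\Delta$.
\item It applies \Cref{thm:ghk-random-linear} at radius $\tau_{\rm in}$ with rate slack $\eps_{\rm in}/4$, at a rate at least $(1+R_{\rm in})/2$. This gives $\ell_{\rm in}=O_{\tau_{\rm in},Q}(1/\eps_{\rm in})$ with failure probability $Q^{-\Omega(\Delta)}$.
\item It passes to uniform $(\Delta-r)$-dimensional subspaces via \Cref{lem:flag-coupling}, since list decodability is monotone under subcodes.
\item It uses \Cref{lem:orthogonal-uniform-marginals} to apply this to both $B^\perp$ and $A^\perp$, and union-bounds with the four $Q^{-\Omega(\Delta)}$ distance events.
\end{itemize}
Since $\ell_{\rm in}$ then depends only on $(\tau_{\rm in},Q,\eps_{\rm in})$, the circularity disappears. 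The elementary Zyablov--Pinsker bound $\ell_{\rm in}=Q^{O(1/\eps_{\rm in})}$ would also suffice, because the statement only asks for a constant $\ell$. In that case, however, your claimed $\exp(\exp(\poly(1/\xi)))$ list size is wrong: \eqref{eq:Lmax-definition} with $\ell_{\rm in}=\exp(\poly(1/\eps))$ is triple exponential.

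\textbf{A smaller, fixable slip.} You set $\delta_R:=\delta_N=\delta(\mcC)-\lambda/(\Delta\delta(\mcD))$ and assert $\delta_R\le\delta_{\rm stab}$. But you only required $\delta_{\rm stab}\ge(1-R(\mcC))/2-\eps$, while $\delta(\mcC)$ may be close to $(1-R(\mcC))/2$ and the expansion loss may be far below $\eps$. So \eqref{eq:inner-threshold-hypotheses} is not guaranteed on either CSS side. The paper instead certifies with a common lower bound, $\delta_R=\tau_{\rm in}-\lambda/(\Delta\delta_{\rm out})$, where $\tau_{\rm in}$ bounds $\delta(\mcC)$ and all quotient and stabilizer distances. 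Then $\delta_R\le\tau_{\rm in}$ automatically meets the inner distance and list-radius hypotheses.

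Two cosmetic points. You also need $\eps<\delta_R$ (the paper imposes $\eps<\tau_{\rm in}/4$), which $\eps=\min(\xi,\zeta)/4$ does not ensure for large $\zeta$. And the inequality you wrote for $\lambda$ is reversed.
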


The following proposition gives the parameters of quantum AEL with an arbitrary inner and outer code. Based on these parameters, we will be able to instantiate our construction by picking the right inner and outer codes. 

\begin{proposition}\label{prop:one-shot-synthesis}
Let $\mcC$ be a fixed inner CSS code of length $\Delta$ and block size $b_{\rm in}$, and let $\mcD=\{\mcD_n\}_n$ be an outer CSS family over blocks of size
\[
    b_{\rm out}=K_2(\mcC).
\]
Assume the following data are available.
\begin{enumerate}[label=(\roman*),leftmargin=2.7em]
    \item The outer family has rate $R(\mcD_n)\ge R_{\rm out}$, relative block distance at least $\delta_{\rm out}>0$, bounded binary row and column weights, and a verified syndrome decoder correcting block weight at most $\rho_{\rm out}n$ up to stabilizers in time $T_{\rm out}(n)$.
    \item The inner spaces $\mcC_X$ and $\mcC_Z$ are classically $(\tau_{\rm in},\ell_{\rm in})$-list decodable, and their quotient and nonzero-stabilizer distances are at least $\tau_{\rm in}$ on both CSS sides.
    \item $G_n$ is an explicit $(n,\Delta,\lambda)$-bipartite expander family.
\end{enumerate}
Fix $\eps>0$ and define
\begin{equation}\label{eq:one-shot-certified}
    \delta_R:=\tau_{\rm in}-\frac{\lambda}{\Delta\delta_{\rm out}},
    \qquad
    \alpha:=\frac{4\lambda^2}{\Delta^2\eps^2},
    \qquad
    \eta_{\rm conc}:=\frac{5\eps^2}{2^5(1-\alpha)}.
\end{equation}
Suppose $0<\eps<\delta_R$ and
\begin{equation}\label{eq:one-shot-conditions}
    \frac{\lambda}{\Delta}
      <\frac{\eps^6}{2^{33}\ell_{\rm in}^2},
    \qquad
    \alpha+\eta_{\rm conc}\le\rho_{\rm out}.
\end{equation}
Then the qAEL family $\mcF_n=\mcF(\mcC,\mcD_n,G_n)$ has
\begin{align*}
    R(\mcF_n)&=R(\mcC)R(\mcD_n)\ge R(\mcC)R_{\rm out},\\
    \delta(\mcF_n)&\ge\delta_R,
\end{align*}
is LDPC, and is syndrome-input list decodable on both CSS sides up to radius $\delta_R-\eps$.  Its list size is at most
\[
    L_{\max}
    =\ell_{\rm in}^{\,2^{O(\ell_{\rm in}^3/\eps^6)}},
\]
and its running time is
\[
    \widetilde O_{\eps,\ell_{\rm in}}(n)
    +L_{\max}\bigl(T_{\rm out}(n)+O(n)\bigr).
\]
With probability $1-o_{n\to\infty}(1)$, the output contains every stabilizer coset having a representative of folded weight at most $(\delta_R-\eps)n$.  Every output representative has the prescribed qAEL syndrome.
\end{proposition}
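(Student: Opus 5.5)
This is essentially a bookkeeping synthesis of results already established. The plan is to verify, one at a time, the hypotheses of Proposition~\ref{prop:qael-rate}, Theorem~\ref{thm:distance-amplification}, Proposition~\ref{prop:qael-ldpc} and Theorem~\ref{thm:syndrome-input-decoder}, and then read off the conclusions.

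\textbf{Rate.} Proposition~\ref{prop:qael-rate} gives $R(\mcF_n)=R(\mcC)R(\mcD_n)$. The bound $R(\mcD_n)\ge R_{\rm out}$ then yields $R(\mcF_n)\ge R(\mcC)R_{\rm out}$.

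\textbf{Distance.} First observe that $\delta(\mcC)\ge\tau_{\rm in}$. By hypothesis (ii), every element of $\mcC_X\setminus\mcC_Z^\perp$ has block weight at least $\tau_{\rm in}\Delta$, since this is exactly the quotient distance. The same holds symmetrically for $\mcC_Z\setminus\mcC_X^\perp$. Theorem~\ref{thm:distance-amplification} then gives
\[
\delta(\mcF_n)\ge\delta(\mcC)-\frac{\lambda}{\Delta\delta(\mcD_n)}\ge\tau_{\rm in}-\frac{\lambda}{\Delta\delta_{\rm out}}=\delta_R,
\]
using $\delta(\mcD_n)\ge\delta_{\rm out}$. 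This also certifies $0<\delta_R\le\delta(\mcF_n)$, which is the standing assumption of Section~\ref{subsec:decoder-main}.

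\textbf{LDPC.} The inner parameters $q=2$, $b_{\rm in}$, $b_{\rm out}$ and $\Delta$ are constants, and the outer family has bounded check weights. Proposition~\ref{prop:qael-ldpc} therefore applies directly.

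\textbf{Decoding.} The remaining step is to check the hypotheses of Theorem~\ref{thm:syndrome-input-decoder} for the $X$ side.

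\begin{itemize}
\item The decoder needs $\mcC_X$ to be $(\rho_{\rm in},\ell_{\rm in})$-list decodable with $\rho_{\rm in}=\delta_R-\eps/2$. Since $\lambda\ge 0$ we have $\delta_R\le\tau_{\rm in}$, so $\rho_{\rm in}<\tau_{\rm in}$. List decodability at radius $\tau_{\rm in}$ implies it at any smaller radius with the same list size, so this holds.
\item The threshold hypotheses \eqref{eq:inner-threshold-hypotheses} follow from $\delta_{\rm quot},\delta_{\rm stab}\ge\tau_{\rm in}\ge\delta_R$.
\item The spectral condition \eqref{eq:spectral-condition-decoder} and the stitching inequality \eqref{eq:stitching-inequality} are exactly the hypotheses \eqref{eq:one-shot-conditions}, with the same definitions of $\alpha$ and $\eta_{\rm conc}$.
\item The outer decoder is the one supplied in hypothesis (i).
\end{itemize}

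Theorem~\ref{thm:syndrome-input-decoder} then gives coverage of $L_X(s,\delta_R-\eps)$ with probability $1-o(1)$, the list bound $L_{\max}$, the running time, and verified syndromes. The $Z$ side follows by exchanging $X$ and $Z$. This is legitimate because hypothesis (ii) is stated for both sides. Moreover, the qAEL construction is symmetric: Proposition~\ref{prop:dual-decomposition} gives the $Z$-check presentation with local checks from $\mcC_Z^\perp$ and outer checks $\Phi_X(\mcD_Z^\perp)$, and the outer decoder is assumed to handle both sides.

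\textbf{Expected obstacle.} The only subtle point is this symmetry. One must confirm that the outer decoder in hypothesis (i) is available for both $X$- and $Z$-syndromes with the same radius $\rho_{\rm out}$, and that sparse spanning sets of $\mcD_Z^\perp$ are available. I would handle this by reading hypothesis (i) as applying to both sides, as the ``both CSS sides'' phrasing of the conclusion intends. Everything else is direct substitution.
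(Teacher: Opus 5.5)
Your proposal is correct and follows the paper's proof essentially step for step: rate from \Cref{prop:qael-rate}, distance from \Cref{thm:distance-amplification}, LDPC from \Cref{prop:qael-ldpc}, and decoding via \Cref{thm:syndrome-input-decoder} on each CSS side. For decoding you observe, as the paper does, that $\delta_R\le\tau_{\rm in}$ keeps the inner list-decoding and distance hypotheses valid at $\rho_{\rm in}=\delta_R-\eps/2$ and $\delta_R$, and that \eqref{eq:one-shot-conditions} are exactly the spectral and stitching hypotheses. Your explicit remark that $\delta(\mcC)\ge\tau_{\rm in}$ is precisely the quotient-distance hypothesis on both sides usefully fills in what the paper calls ``the assumed inner distance bounds.''
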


\subsection{High-Rate Quantum Tanner Outer Codes}\label{subsec:qtc-outer}

We use quantum Tanner codes as a black-box outer code family.  The formulation below describes only the properties used by \Cref{prop:one-shot-synthesis}. The constants may deteriorate as the rate approaches one, but they nevertheless remain independent of the block length.  As in the main theorem, any constant-size local codes are found via brute force.

\begin{theorem}[Quantum Tanner Codes, \cite{LeverrierZemor22QuantumTanner,LeverrierZemor22DecodingQTanner}]\label{thm:qtc-blackbox}
For every rate loss $\eta_{\rm out}>0$, there are positive constants
\[
    \delta_{\rm out},\rho_{\rm out}>0
\]
and an explicit binary quantum Tanner CSS family $\mcD^{(1)}=\{\mcD^{(1)}_n\}_n$ with rate at least $1-\eta_{\rm out}$ and relative distance at least $\delta_{\rm out}$.  The family is LDPC and has verified $X$- and $Z$-syndrome decoders which, for every error of weight at most $\rho_{\rm out}n$, return a correction in the correct stabilizer class in time $\widetilde O_{\eta_{\rm out}}(n)$.
\end{theorem}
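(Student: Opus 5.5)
The statement is quoted as a black box, so the plan is to assemble it from the quantum Tanner construction and its decoders. We will check that every parameter can be fixed independently of $n$ once $\eta_{\rm out}$ is given.

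\emph{Construction and rate.} Take the left--right Cayley complex $X=\mathrm{Cay}^2(A,G,B)$ of \cite{LeverrierZemor22QuantumTanner}, with $|A|=|B|=\Delta_0$ and $G$ ranging over an explicit infinite family of groups, such as $\mathrm{PSL}_2(q)$. Both Cayley graphs should be Ramanujan or near-Ramanujan, so that $\lambda_0\le 2\sqrt{\Delta_0}+O(1)$. Choose local codes $C_A,C_B\subseteq\F_2^{\Delta_0}$ of rate $\rho$. The $X$- and $Z$-spaces are the Tanner codes on the two vertex classes with local codes $C_A\otimes C_B$ and $C_A^\perp\otimes C_B^\perp$. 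Counting independent checks gives a rate of at least $1-2\bigl(\rho^2+(1-\rho)^2\bigr)$ in the original normalization. To get rate close to one, I would not use balanced local codes. Instead, take $C_A,C_B$ of rate $\rho\ge 1-\theta$, so that $C_A^\perp\otimes C_B^\perp$ has dimension $\theta^2\Delta_0^2$ and $C_A\otimes C_B$ has codimension at most $2\theta\Delta_0^2$. Then check that the relevant check count, after regrouping qubits into blocks of size $b_{\rm out}$ as required by \Cref{prop:one-shot-synthesis}, is at most $\eta_{\rm out}$ times the length once $\theta$ is small. Any leftover rate deficiency can be absorbed by choosing $\theta=\theta(\eta_{\rm out})$.

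\emph{Distance.} The Leverrier--Z\'emor distance argument needs two inputs. First, the dual tensor code pair must be $\kappa$-product-expanding, equivalently robustly testable. Second, $\lambda_0/\Delta_0$ must be small relative to $\kappa$ and the local distances. For a fixed $\theta$, a random pair $(C_A,C_B)$ of rate $1-\theta$ is $\kappa(\theta)$-product-expanding with high probability for all large $\Delta_0$, by the Kalachev--Panteleev robustness theorem. Since $\Delta_0$ is a constant, such a pair can be found by brute-force search, as the section allows. Then choose $\Delta_0$ large enough that $\lambda_0/\Delta_0\ll\kappa(\theta)\,\delta(C_A^{(\perp)})\,\delta(C_B^{(\perp)})$. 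The theorem of \cite{LeverrierZemor22QuantumTanner} then yields a linear distance $\delta_{\rm out}(\theta,\Delta_0)>0$, which carries over to block distance after folding. LDPC holds because each check touches $O(\Delta_0^2)$ qubits and each qubit lies in $O(1)$ checks.

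\emph{Decoder.} Invoke the linear-time decoder of \cite{LeverrierZemor22DecodingQTanner}, or alternatively the one of Gu--Pattison--Tang, under the same product-expansion and spectral hypotheses. From the syndrome alone it returns a correction in the correct stabilizer class whenever the error weight is at most $\rho_{\rm out}n$, where $\rho_{\rm out}=\rho_{\rm out}(\kappa,\Delta_0)>0$. To make the decoder verified, recompute the syndrome of its output in $O(n)$ time and report failure on mismatch. This gives time $\widetilde O_{\eta_{\rm out}}(n)$.

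\emph{Main obstacle.} I expect the difficulty to be in the high-rate regime. Robustness must hold for a local code pair whose duals have very small dimension, and the published decoding thresholds must remain positive there, even though their constants degrade as $\theta\to0$. My first approach is to rely on the fact that the Kalachev--Panteleev product-expansion bound holds for any fixed pair of rates in $(0,1)$. Only a positive constant $\kappa(\theta)$ is needed, not a uniform one.
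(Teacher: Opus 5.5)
Your distance and decoder steps follow the standard Leverrier--Z\'emor route, but the rate step fails, and it fails for exactly the local codes you propose.

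In a quantum Tanner code the tensor codes generate the stabilizers. The $X$-stabilizers are spanned by copies of $C_A\otimes C_B$ on the local views of $V_0=V_{00}\sqcup V_{11}$. The $Z$-stabilizers are spanned by copies of $C_A^\perp\otimes C_B^\perp$ on the views of $V_1$. The code spaces are Tanner codes whose local codes are the \emph{dual} tensor codes $(C_A\otimes C_B)^\perp$ and $(C_A^\perp\otimes C_B^\perp)^\perp$.

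Tanner codes with local codes $C_A\otimes C_B$ and $C_A^\perp\otimes C_B^\perp$, as you wrote them, do not satisfy the CSS inclusion. Take an element of $T(V_1,C_A^\perp\otimes C_B^\perp)^\perp$ of the form $e_a\otimes c$ with $c\in C_B\setminus\{0\}$, placed on one $V_1$ view. On an adjacent $V_{00}$ view it is a single nonzero row, which is not in $C_A\otimes C_B$ once $d(C_A)\ge2$.

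Your count ``$\dim(C_A^\perp\otimes C_B^\perp)$ plus $\operatorname{codim}(C_A\otimes C_B)$'' mixes a generator count on one side with a check count on the other, which is why it looked small. The correct generator count, with $|V_0|=|V_1|=2n/\Delta_0^2$, gives rate at least
\[
    1-2\rho_A\rho_B-2(1-\rho_A)(1-\rho_B)=(2\rho_A-1)(1-2\rho_B).
\]
So your own formula for $\rho_A=\rho_B=\rho$ equals $-(1-2\rho)^2\le0$ and is vacuous for every $\rho$.

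Worse, with $\rho_A,\rho_B\ge1-\theta$ the rate is genuinely small, not just unproven. Every square lies in exactly one $V_{00}$ view, so the $V_0$-side code space embeds in $\bigoplus_{v\in V_{00}}(C_A\otimes C_B)^\perp$. Hence the rate is at most $1-\rho_A\rho_B\le 2\theta$.

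The missing idea is the unbalanced choice of Leverrier--Z\'emor: one local code of high rate and one of low rate, say $\rho_A=1-\theta$ and $\rho_B=\theta$. Then $\dim(C_A\otimes C_B)=\dim(C_A^\perp\otimes C_B^\perp)=\theta(1-\theta)\Delta_0^2$. The rate is at least $(1-2\theta)^2$, and you choose $\theta$ with $(1-2\theta)^2\ge1-\eta_{\rm out}$.

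After this change the rest of your plan works as you describe:
\begin{itemize}
    \item The Kalachev--Panteleev theorem gives product expansion for random pairs of any fixed rates, here $(1-\theta,\theta)$ and the dual pair $(\theta,1-\theta)$. Brute force over constant-size local codes then yields $\kappa(\theta)>0$.
    \item A large constant $\Delta_0$ gives $\delta_{\rm out},\rho_{\rm out}>0$ depending only on $\eta_{\rm out}$.
    \item Recomputing the syndrome of the output makes the decoder verified.
\end{itemize}

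Note that the paper gives no proof of this statement; it imports it from Leverrier--Z\'emor. The $b_{\rm out}$-fold direct sum is a separate step afterwards, so no regrouping into blocks belongs inside this theorem.
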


The qAEL outer alphabet is $\F_2^{b_{\rm out}}$ rather than $\F_2$.  We therefore take the direct sum of $b_{\rm out}$ copies of $\mcD^{(1)}$.  This operation preserves the folded rate.  It also preserves the relative distance and decoding radius in block Hamming weight: an error supported on at most $\rho_{\rm out}n$ blocks induces, in each binary coordinate, an error supported on at most the same number of positions.  Running the binary decoder independently in every coordinate costs $\widetilde O(b_{\rm out}n)$, which is near-linear because $b_{\rm out}$ is fixed.  Finally, \Cref{lem:affine-outer-decoder} converts these syndrome decoders into the affine outer interface used in the stitching stage.

\subsection{A Constant-Size Inner Code at the Singleton Bound}\label{subsec:inner-instantiation}

The inner code is found once via brute force.  Its block length is the expander degree $\Delta$, which is a constant independent of the block length, so exhaustive decoding and exhaustive verification can be done in constant time.

For a prime power $Q$, let
\[
    H_Q(x)
    :=x\log_Q(Q-1)-x\log_Qx-(1-x)\log_Q(1-x)
\]
denote the $Q$-ary entropy function.

\begin{theorem}\label{thm:random-inner-package}
Fix a rational inner rate $R_{\rm in}\in(0,1)$ and a slack $\eps_{\rm in}\in(0,1-R_{\rm in})$, and set
\[
    \tau_{\rm in}:=\frac{1-R_{\rm in}-\eps_{\rm in}}2.
\]
Choose $Q=2^{b_{\rm in}}$ so that
\begin{equation}\label{eq:entropy-inner-condition}
    H_Q(\tau_{\rm in})
      \le \tau_{\rm in}+\frac{\eps_{\rm in}}4.
\end{equation}
For every sufficiently large admissible constant $\Delta$, there exists an $\F_2$-linear vector-space CSS code
\[
    \mcC=(\mcC_X,\mcC_Z)
    \subseteq(\F_2^{b_{\rm in}})^\Delta
\]
of rate $R_{\rm in}$ such that
\begin{align*}
    \delta(\mcC)&\ge\tau_{\rm in},\\
    \delta_{\rm quot}(\mcC_X),\ \delta_{\rm stab}(\mcC_X)&\ge\tau_{\rm in},\\
    \delta_{\rm quot}(\mcC_Z),\ \delta_{\rm stab}(\mcC_Z)&\ge\tau_{\rm in}.
\end{align*}
Moreover, both $\mcC_X$ and $\mcC_Z$ are classically list decodable at block radius $\tau_{\rm in}\Delta$ with a list bound
\[
    \ell_{\rm in}=O_{\tau_{\rm in},Q}(1/\eps_{\rm in}).
\]
A code with these properties can be found by brute force.
\end{theorem}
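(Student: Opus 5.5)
The plan is a probabilistic-method argument over a symmetric random CSS ensemble, followed by a constant-size exhaustive search. I would work $\F_Q$-linearly with $Q=2^{b_{\rm in}}$, identifying $\F_Q$ with $\F_2^{b_{\rm in}}$ through a self-dual basis, so that the trace form becomes the standard $\F_2$-bilinear form on $\F_2^{b_{\rm in}\Delta}$. In this basis $\F_Q$-duality of $\F_Q$-linear codes coincides with $\F_2$-duality, and the code is an $\F_2$-linear vector-space CSS code in the sense of \Cref{def:css-code}. Take $\Delta$ admissible, meaning that $k:=\frac{(1+R_{\rm in})}{2}\Delta$ is an integer (this is where rationality of $R_{\rm in}$ is used). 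Let $\mcC_X$ be uniform among $\F_Q$-linear codes of dimension $k$ with $\mcC_X^\perp\subseteq\mcC_X$, and set $\mcC_Z:=\mcC_X$. Then $\mcC_Z^\perp=\mcC_X^\perp\subseteq\mcC_X$, and the logical dimension is $2k-\Delta=R_{\rm in}\Delta$, so the rate is exactly $R_{\rm in}$.

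\emph{Distance.} The key observation is that it suffices to make the \emph{whole} space $\mcC_X$ have block distance at least $\tau_{\rm in}\Delta$. Both $\mcC_X\setminus\mcC_Z^\perp$ and $\mcC_Z^\perp\setminus\{0\}$ are subsets of $\mcC_X\setminus\{0\}$, so this single bound gives $\delta_{\rm quot}$, $\delta_{\rm stab}$ and $\delta(\mcC)\ge\tau_{\rm in}$ on both sides at once. I would use the standard counting lemma for dual-containing codes: by transitivity of the relevant orthogonal/symplectic group action, every nonzero vector $x$ lies in $\mcC_X$ with probability at most $C\cdot Q^{-(\Delta-k)}=C\cdot Q^{-\frac{1-R_{\rm in}}2\Delta}$ for an absolute constant $C$. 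A union bound over the at most $Q^{H_Q(\tau_{\rm in})\Delta}$ nonzero vectors in the ball then bounds the failure probability by
\[
C\cdot Q^{\left(H_Q(\tau_{\rm in})-\frac{1-R_{\rm in}}{2}\right)\Delta}\le C\cdot Q^{-\eps_{\rm in}\Delta/4},
\]
using \eqref{eq:entropy-inner-condition} together with $\tau_{\rm in}=\frac{1-R_{\rm in}-\eps_{\rm in}}2$. This bound is $o(1)$ as $\Delta\to\infty$.

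\emph{List decoding.} The rate of $\mcC_X$ as a classical code is $\frac{1+R_{\rm in}}2$. Rearranging the same entropy condition gives
\[
\frac{1+R_{\rm in}}2\le 1-H_Q(\tau_{\rm in})-\frac{\eps_{\rm in}}4,
\]
so $\mcC_X$ sits $\eps_{\rm in}/4$ below the list-decoding capacity at radius $\tau_{\rm in}$. I would import the Guruswami--H{\aa}stad--Kopparty argument for random linear codes, which gives list size $O_{\tau,Q}(1/\eps)$, and adapt it to this ensemble. The only probabilistic input that argument needs is the following: for any $m$ linearly independent vectors, the probability that all of them lie in the code is at most $C^m Q^{-(\Delta-k)m}$. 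For the dual-containing ensemble this holds up to the constant-factor losses from the self-orthogonality constraint on the span. I expect this transfer to be the main obstacle, since GHK's argument is delicate about the structure of the lists. If the transfer fails, my fallback is the Zyablov--Pinsker first-moment argument, which yields only list size $Q^{O(1/\eps_{\rm in})}$. That weaker bound still suffices for \Cref{prop:one-shot-synthesis}, but it would weaken the stated $O(1/\eps_{\rm in})$.

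A final union bound over the distance and list-decoding failure events (the latter for $\mcC_X=\mcC_Z$ simultaneously) leaves positive probability for large $\Delta$, so a suitable code exists. Because $\Delta$, $Q$ and $b_{\rm in}$ are constants, a deterministic brute-force search succeeds. It enumerates all candidate pairs, checks $\mcC_Z^\perp\subseteq\mcC_X$, computes the minimum weights over $\mcC_X$, and computes all Hamming-ball lists exhaustively. This search takes constant time, and the same exhaustive procedure serves as the inner decoder $\mathsf{LD}_{\rm in}$.
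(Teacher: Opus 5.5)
Your rate and distance bookkeeping is fine. The list-decoding step, however, has a genuine gap, and it is exactly the step you flag.

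\textbf{Why the list bound is not established.} The Guruswami--H\aa stad--Kopparty theorem is about a \emph{uniformly random} $\F_Q$-linear code. Your $\mcC_X$ is uniform only among dual-containing codes, so the theorem cannot be quoted, and you do not carry out the transfer of its proof. Moreover, the probabilistic input you propose for that transfer is false in your ensemble. Since $Q$ is even, $\langle x,x\rangle=(\sum_i x_i)^2$, so every self-orthogonal $\F_Q$-code lies in $\mathbf 1^\perp$. Hence every dual-containing $\mcC_X$ contains the all-ones vector $\mathbf 1$ with probability $1$.

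For the same reason, every isometry of the dot product fixes $\mathbf 1$, so your ``transitivity'' justification in the distance step is also wrong as stated. The distance conclusion survives only because multiples of $\mathbf 1$ have full weight, and every $x\notin\langle\mathbf 1\rangle$ does satisfy $\Pr[x\in\mcC_X]\le Q^{-(\Delta-k)+O(1)}$.

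To rescue the list bound you would have to quotient out $\mathbf 1$, prove a joint-inclusion bound for the remaining randomness, and re-check the GHK argument under that bound. Your fallback gives only $Q^{O(1/\eps_{\rm in})}$, not the claimed $O_{\tau_{\rm in},Q}(1/\eps_{\rm in})$. So as written, the proposal does not prove the theorem.

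\textbf{The missing idea.} Drop the symmetric choice $\mcC_Z=\mcC_X$. The paper samples a uniform pair $(A,B)$ of $r$-dimensional subspaces of $\F_Q^\Delta$ with $A\perp B$ and $r=(1-R_{\rm in})\Delta/2$. Via trace-dual bases it sets
\[
\mcC_Z^\perp=\iota_X(A),\qquad \mcC_X=\iota_X(B^\perp),\qquad \mcC_X^\perp=\iota_Z(B),\qquad \mcC_Z=\iota_Z(A^\perp).
\]
A one-line count shows that $A$ and $B$ are each marginally uniform in their Grassmannian. For fixed $A$, the admissible $B$ are the $r$-subspaces of $A^\perp$, and their number does not depend on $A$. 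Hence $A^\perp$ and $B^\perp$ are uniform too.

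Every required property then concerns a single uniformly random subspace. The first-moment bound handles the distances of $A$, $B$, $A^\perp$ and $B^\perp$. The same containment observation you made then yields the quotient and stabilizer distances. GHK applies verbatim to $B^\perp$ and $A^\perp$, after a subcode/flag-coupling step to match its dimension, and a final union bound finishes. Marginal uniformity is what lets random-linear-code theorems be used as black boxes, and it is exactly what your correlated ensemble lacks.
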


A proof is given in \Cref{app:inner-existence}.  The construction samples an orthogonal pair of $\F_Q$-subspaces, then uses trace-dual bases to consider the resulting CSS code as $\F_2$-linear without changing block support.  Uniform-marginal and first-moment arguments give the four distance requirements, while the random-linear-code theorem of Guruswami--H\aa stad--Kopparty~\cite{DBLP:journals/corr/abs-1001-1386} gives the constant local lists.  Notice that the separate stabilizer-distance clause is essential for the quantum rigidity lemma in \Cref{sec:algorithm}; ordinary quotient distance alone would not exclude two local hypotheses differing by a low-weight stabilizer.

Since
\[
    H_Q(x)\le x+\frac{H_2(x)}{\log_2Q},
\]
it suffices to take
\[
    b_{\rm in}\ge\frac{4H_2(\tau_{\rm in})}{\eps_{\rm in}}.
\]
Thus, the block alphabet remains constant once the target parameters are fixed.

\subsection{Expander and Stitching Parameters}\label{subsec:parameter-synthesis}

Fix a small expander slack $\zeta_{\rm exp}>0$ and write
\[
    A_{\rm exp}:=2+\zeta_{\rm exp}.
\]
By \Cref{thm:alon-prescribed-size}, after taking bipartite double covers, for every sufficiently large admissible $n$ there is an explicit $(n,\Delta,\lambda)$-bipartite expander with
\begin{equation}\label{eq:near-ramanujan-used}
    \frac{\lambda}{\Delta}
      \le\frac{A_{\rm exp}}{\sqrt\Delta}.
\end{equation}
Consequently the qAEL distance theorem certifies
\begin{equation}\label{eq:certified-distance-instantiation}
    \delta_R
    :=\tau_{\rm in}
      -\frac{A_{\rm exp}}{\delta_{\rm out}\sqrt\Delta}
    \le\delta(\mcF).
\end{equation}

The following elementary observation reduces the stitching condition to two inequalities.

\begin{lemma}\label{lem:stitching-feasibility}
If $\alpha\le1/2$, then
\[
    \eta_{\rm conc}
      =\frac{5\eps^2}{2^5(1-\alpha)}
      \le\frac{5\eps^2}{16}.
\]
Hence \eqref{eq:stitching-inequality} follows whenever
\[
    \alpha\le\frac{\rho_{\rm out}}2
    \qquad\text{and}\qquad
    \frac{5\eps^2}{16}\le\frac{\rho_{\rm out}}2.
\]
\end{lemma}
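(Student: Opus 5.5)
The plan is a direct two-line computation; no earlier result is needed beyond the definitions in \eqref{eq:alpha-eta}.

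\emph{Bound on $\eta_{\rm conc}$.} The map $\alpha\mapsto 1/(1-\alpha)$ is increasing on $[0,1)$. Since $\alpha=4\lambda^2/(\Delta^2\eps^2)\ge0$, the hypothesis $\alpha\le1/2$ gives $1-\alpha\ge1/2$, so $1/(1-\alpha)\le 2$. Substituting into the definition of $\eta_{\rm conc}$,
\[
    \eta_{\rm conc}=\frac{5\eps^2}{2^5(1-\alpha)}\le\frac{2\cdot5\eps^2}{2^5}=\frac{5\eps^2}{16}.
\]

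\emph{The stitching inequality.} Assume in addition $\alpha\le\rho_{\rm out}/2$ and $5\eps^2/16\le\rho_{\rm out}/2$. Adding these two bounds to the estimate just obtained gives
\[
    \alpha+\eta_{\rm conc}\le\frac{\rho_{\rm out}}2+\frac{5\eps^2}{16}\le\frac{\rho_{\rm out}}2+\frac{\rho_{\rm out}}2=\rho_{\rm out},
\]
which is exactly \eqref{eq:stitching-inequality}.

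The only point to watch is the standing assumption $\alpha\le1/2$. It guarantees $\alpha<1$, so $\eta_{\rm conc}$ is well defined and positive. When the lemma is applied, this assumption is automatic as soon as $\rho_{\rm out}\le1$, because then $\alpha\le\rho_{\rm out}/2\le1/2$.
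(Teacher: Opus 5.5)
Your proposal is correct and follows the paper's argument exactly: bound $(1-\alpha)^{-1}\le 2$ using $\alpha\le 1/2$, then add the two displayed upper bounds to get $\alpha+\eta_{\rm conc}\le\rho_{\rm out}$. Your remark that $\alpha\le1/2$ is automatic when $\rho_{\rm out}\le1$ is a harmless addition; the paper instead imposes $\alpha\le\min\{1/2,\rho_{\rm out}/2\}$ directly in \eqref{eq:final-alpha-budget}.
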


\begin{proof}
The first conclusion is immediate from $(1-\alpha)^{-1}\le2$.  The second follows by adding the two displayed upper bounds.
\end{proof}

For fixed $\eps,\ell_{\rm in},\delta_{\rm out},\rho_{\rm out}>0$, all remaining requirements can therefore be met by taking the constant degree $\Delta$ sufficiently large.  Indeed, \eqref{eq:near-ramanujan-used} makes the spectral ratio $O(\Delta^{-1/2})$, the distance loss $O(\Delta^{-1/2})$, and $\alpha=O(\Delta^{-1})$.

\subsection{Proof of the Quantum AEL Parameters}\label{subsec:one-shot-proof}

\begin{proof}[Proof of \Cref{prop:one-shot-synthesis}]
By \Cref{prop:qael-rate},
\[
    R(\mcF_n)=R(\mcC)R(\mcD_n).
\]
By \Cref{thm:distance-amplification} and the assumed inner and outer distance bounds,
\[
    \delta(\mcF_n)
      \ge\delta(\mcC)-\frac{\lambda}{\Delta\delta(\mcD_n)}
      \ge\tau_{\rm in}-\frac{\lambda}{\Delta\delta_{\rm out}}
      =\delta_R.
\]
Since $\delta_R\le\tau_{\rm in}$, the inner list-decoding, quotient-distance, and stabilizer-distance assumptions remain valid at the thresholds $\rho_{\rm in}=\delta_R-\eps/2$ and $\delta_R$ used in \Cref{thm:syndrome-input-decoder}.  Conditions \eqref{eq:one-shot-conditions} are exactly that theorem's spectral and stitching hypotheses.  Applying it on the $X$ side, and then symmetrically on the $Z$ side, gives the radius, list-size, runtime, completeness, and syndrome-verification conclusions.  Finally, \Cref{prop:qael-ldpc} gives the LDPC property because the inner parameters are fixed and the outer family is LDPC.
\end{proof}

\subsection{Proof of the Main Theorem}\label{subsec:main-instantiation-proof}

\begin{proof}[Proof of \Cref{thm:main}]
Fix $R\in(0,1)$ and $\zeta,\xi>0$.  Choose a rational $R_{\rm in}$ such that 
\[
    R<R_{\rm in}<1.
\]
Next choose an outer rate loss $\eta_{\rm out}>0$ small enough that
\begin{equation}\label{eq:outer-rate-budget}
    R_{\rm in}(1-\eta_{\rm out})\ge R,
    \qquad
    \frac{R_{\rm in}\eta_{\rm out}}2\le\frac\zeta3.
\end{equation}
Apply \Cref{thm:qtc-blackbox} with this $\eta_{\rm out}$, obtaining constants $\delta_{\rm out},\rho_{\rm out}>0$ and an outer family whose actual rates $R_{{\rm out},N}$ satisfy $R_{{\rm out},N}\ge1-\eta_{\rm out}$.

Choose $\eps_{\rm in}>0$ so that
\begin{equation}\label{eq:inner-gap-budget}
    \eps_{\rm in}<1-R_{\rm in},
    \qquad
    \frac{\eps_{\rm in}}2\le\frac\zeta3,
\end{equation}
and set $\tau_{\rm in}=(1-R_{\rm in}-\eps_{\rm in})/2$.  Choose $b_{\rm in}$ satisfying \eqref{eq:entropy-inner-condition}.  Finally choose the decoding slack $\eps>0$ such that
\begin{equation}\label{eq:decoder-slack-budget}
    \eps<\min\left\{
      \xi,\frac{\tau_{\rm in}}4,
      \sqrt{\frac{8\rho_{\rm out}}5}
    \right\}.
\end{equation}

We now choose one sufficiently large admissible constant $\Delta$.  In addition to the integrality and existence requirements of \Cref{thm:random-inner-package}, require
\begin{align}
    \frac{A_{\rm exp}}{\sqrt\Delta}
      &<\frac{\eps^6}{2^{33}\ell_{\rm in}^2},
      \label{eq:final-spectral-budget}\\
    \frac{4A_{\rm exp}^2}{\Delta\eps^2}
      &\le\min\left\{\frac12,\frac{\rho_{\rm out}}2\right\},
      \label{eq:final-alpha-budget}\\
    \frac{A_{\rm exp}}{\delta_{\rm out}\sqrt\Delta}
      &\le\min\left\{\frac\zeta3,\frac{\tau_{\rm in}}2\right\}.
      \label{eq:final-distance-budget}
\end{align}
Such a constant exists because every left-hand side tends to zero with $\Delta$.  Fix and verify an inner code supplied by \Cref{thm:random-inner-package}, set
\[
    b_{\rm out}=K_2(\mcC)=b_{\rm in}R_{\rm in}\Delta,
\]
and take the $b_{\rm out}$-fold direct sum of the outer quantum Tanner family.  Let $G_N$ be the explicit expander from \Cref{thm:alon-prescribed-size}, and form the qAEL code $\mcQ_N$.

Equations \eqref{eq:decoder-slack-budget}--\eqref{eq:final-alpha-budget} and \Cref{lem:stitching-feasibility} verify the stitching inequality; \eqref{eq:final-spectral-budget} verifies the spectral condition.  Hence \Cref{prop:one-shot-synthesis} applies with
\[
    \delta_N:=\tau_{\rm in}
      -\frac{A_{\rm exp}}{\delta_{\rm out}\sqrt\Delta},
    \qquad
    \tau_N:=\delta_N-\eps.
\]
The final actual rate is
\[
    R_N=R_{\rm in}R_{{\rm out},N}
      \ge R_{\rm in}(1-\eta_{\rm out})
      \ge R.
\]
Moreover,
\begin{align*}
    \frac{1-R_N}{2}-\delta_N
      &=\frac{R_{\rm in}(1-R_{{\rm out},N})}{2}
        +\frac{\eps_{\rm in}}2
        +\frac{A_{\rm exp}}{\delta_{\rm out}\sqrt\Delta}\\
      &\le\frac{R_{\rm in}\eta_{\rm out}}2
        +\frac{\eps_{\rm in}}2
        +\frac{A_{\rm exp}}{\delta_{\rm out}\sqrt\Delta}\\
      &\le\zeta.
\end{align*}
Thus $\delta_N\ge(1-R_N)/2-\zeta$, and
\[
    \tau_N=\delta_N-\eps\ge\delta_N-\xi.
\]
The fold size $b=\Delta b_{\rm in}$, the list bound
\[
    \ell=\ell_{\rm in}^{\,2^{O(\ell_{\rm in}^3/\eps^6)}},
\]
and the binary LDPC weight bound are constants depending only on $(R,\zeta,\xi)$.  Since the outer decoder is near-linear and all remaining work is near-linear in $N$, \Cref{prop:one-shot-synthesis} gives the claimed $\widetilde O_{R,\zeta,\xi}(N)$ syndrome-input decoders on both CSS sides.
\end{proof}
 
\section*{Acknowledgments}
The research direction and conceptual aspects of this work are all human. 
The authors used all versions of ChatGPT 5 Pro for research exploration (mainly as
an accelerator) and for editorial assistance in reorganizing and polishing the manuscript. 
We used Claude Opus 4.6 for an extensive Lean formalization~\footnote{\url{https://github.com/Granha/quantum_ael_lean_formalization}} 
of an earlier version of this paper. The authors take responsibility 
for the mathematical claims, proofs, and citations.

\newcommand{\etalchar}[1]{$^{#1}$}

\appendix
\section{Deferred Proofs from the Preliminaries}\label{app:prelim-proofs}

This appendix contains the short expander and linear-algebra arguments deferred from \Cref{sec:preliminaries}.  Keeping them here lets the preliminary section function as a compact reference while making the manuscript self-contained.

\begin{proof}[Proof of \Cref{lem:neighborhood-concentration}]
Let $B:=L\setminus S$.  Every $u\in B$ has fewer than $(a-\beta)\Delta$ neighbors in $T$, and hence
\[
    |E(B,T)|<(a-\beta)\Delta|B|.
\]
On the other hand, the expander mixing lemma and $|T|\ge an$ give
\begin{align*}
    |E(B,T)|
      &\ge \frac{\Delta}{n}|B||T|
        -\lambda\sqrt{|B||T|}\\
      &\ge a\Delta|B|-\lambda\sqrt{n|B|}.
\end{align*}
If $B\ne\emptyset$, comparison yields
\[
    \beta\Delta|B|<\lambda\sqrt{n|B|},
\]
so $|B|<\lambda^2n/(\Delta^2\beta^2)$.  The conclusion follows from $|S|=n-|B|$.  Interchanging the two sides of the bipartite graph proves the symmetric statement.
\end{proof}

\begin{proof}[Proof of \Cref{lem:duality-embeddings}]
The bilinear pairing between $\mcW_X$ and $\mcW_Z$ is nondegenerate.  Indeed, suppose $x\in\mcW_X$ is orthogonal to $\mcW_Z$.  Since $x\in\mcC_X$, it is also orthogonal to $\mcC_X^\perp$.  The decomposition
\[
    \mcC_Z=\mcW_Z\oplus\mcC_X^\perp
\]
therefore implies $x\in\mcC_Z^\perp$.  But
$\mcW_X\cap\mcC_Z^\perp=\{0\}$ by \eqref{eq:WXWZ}, so $x=0$.  The same argument with $X$ and $Z$ interchanged proves nondegeneracy in the second argument.

Both spaces have dimension
\[
    \dim\mcW_X
      =\dim\mcC_X-\dim\mcC_Z^\perp
      =b_{\rm out}
      =\dim\mcW_Z.
\]
Choose bases $x_1,\ldots,x_{b_{\rm out}}$ of $\mcW_X$ and $z_1,\ldots,z_{b_{\rm out}}$ of $\mcW_Z$, and let $M$ be the matrix with entries $M_{ij}=\langle x_i,z_j\rangle$.  Nondegeneracy makes $M$ invertible.  Define
\[
    \phi_X(a):=\sum_i a_ix_i,
    \qquad
    \phi_Z(b):=\sum_j (M^{-1}b)_jz_j.
\]
Then $\langle\phi_X(a),\phi_Z(b)\rangle=a^\mathsf{T}b$.  Computing $M^{-1}$ is Gaussian elimination over $\F_q$.
\end{proof}

\begin{proof}[Proof of \Cref{prop:qael-rate}]
By \Cref{prop:dual-decomposition}, before folding we have
\[
    \mcE_Z^\perp
      =\Phi_X(\mcD_Z^\perp)+(\mcC_Z^\perp)^L
      \subseteq
      \mcE_X
      =\Phi_X(\mcD_X)+(\mcC_Z^\perp)^L.
\]
The image of $\Phi_X$ lies in $\mcW_X^L$, which intersects $(\mcC_Z^\perp)^L$ trivially.  Consequently the map induced by $\Phi_X$ gives an isomorphism
\[
    \mcD_X/\mcD_Z^\perp
      \cong
    \mcE_X/\mcE_Z^\perp.
\]
Folding is a coordinate permutation, so it preserves this quotient dimension.  Hence
\[
    K_q(\mcF)=K_q(\mcD).
\]
Since $b_{\rm out}=K_q(\mcC)$, the folded block size of $\mcF$ is $\Delta b_{\rm in}$, and its blocklength is $n$, we obtain
\begin{align*}
    R(\mcF)
      &=\frac{K_q(\mcD)}{n\Delta b_{\rm in}}\\
      &=\frac{K_q(\mcC)}{\Delta b_{\rm in}}
        \cdot
        \frac{K_q(\mcD)}{nb_{\rm out}}
       =R(\mcC)R(\mcD).
\end{align*}
\end{proof}

\begin{proof}[Proof of \Cref{prop:qael-ldpc}]
Use the spanning checks in \eqref{eq:qael-x-checks}, and the symmetric spanning set on the $Z$ side.  A local inner row is supported on one left inner block, so its binary weight is at most $\Delta b_{\rm in}$, a constant.  If an outer row has binary weight at most $w_r^{\rm out}$, its image under the fixed coordinatewise map $\Phi_Z$ has binary weight at most $w_r^{\rm out}\Delta b_{\rm in}$, also a constant.  Folding does not change row weight.

For column weight, a physical binary coordinate belongs to one inner block.  It appears in at most $\dim\mcC_X^\perp$ local $X$-checks.  Among lifted outer checks, it can occur only in rows touching one of the $b_{\rm out}$ binary outer coordinates at the corresponding left vertex; if the outer column weight is at most $w_c^{\rm out}$, there are at most $b_{\rm out}w_c^{\rm out}$ such rows.  All quantities are constant.  The same argument applies to $Z$-checks, proving bounded row and column weights for the qAEL family.
\end{proof}

\section{Existence of the Constant-Size Inner Code}\label{app:inner-existence}

We prove \Cref{thm:random-inner-package}.  The proof uses two elementary transfer facts and the list-decodability theorem for random linear codes.  Throughout this section, $Q$ is a prime power and orthogonality in $\F_Q^\Delta$ is with respect to the standard dot product.

\begin{definition}\label{def:orthogonal-pair}
For $r\le\Delta/2$, sample uniformly from all pairs $(A,B)$ of $r$-dimensional subspaces of $\F_Q^\Delta$ satisfying $A\perp B$.
\end{definition}

\begin{lemma}\label{lem:orthogonal-uniform-marginals}
In the orthogonal-pair ensemble, $A$ and $B$ are each uniformly distributed among the $r$-dimensional subspaces of $\F_Q^\Delta$.  Their orthogonal complements are uniformly distributed among the $(\Delta-r)$-dimensional subspaces.
\end{lemma}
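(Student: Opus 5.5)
The plan is to exhibit an explicit bijection that preserves the ensemble, namely the swap $(A,B)\mapsto(B,A)$. This reduces the claim about $B$ to the claim about $A$. The statement about $A$ will then follow from transitivity of a group action that preserves orthogonality.

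\textbf{Symmetry between $A$ and $B$.} The ensemble is uniform over the set $\mathcal P_r:=\{(A,B):\dim A=\dim B=r,\ A\perp B\}$. The relation $A\perp B$ is symmetric for the standard dot product. Hence $(A,B)\mapsto(B,A)$ is a bijection of $\mathcal P_r$, and the law of $B$ equals the law of $A$. It therefore suffices to show that $A$ is uniform.

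\textbf{Uniformity of $A$ via monomial symmetries, or via counting.} Since $\mathcal P_r$ is sampled uniformly, $\Pr[A=A_0]$ is proportional to
\[
N(A_0):=\#\{B:\dim B=r,\ B\subseteq A_0^\perp\}.
\]
The space $A_0^\perp$ always has dimension $\Delta-r\ge r$; this holds because the form is nondegenerate, even when $A_0$ is not isotropic. So $N(A_0)$ is the Gaussian binomial $\binom{\Delta-r}{r}_Q$, which is independent of $A_0$. This is exactly the step where naive arguments fail: the orthogonal group does not act transitively on subspaces, because of isotropy and radicals. The counting argument avoids the issue, since only $\dim A_0^\perp$ enters. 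Thus every $r$-dimensional $A_0$ has the same count, and $A$ is uniform over the $r$-dimensional subspaces. Here $r\le\Delta/2$ guarantees $N(A_0)>0$, and also that the ensemble is nonempty: choose $A$ spanned by $e_1,\dots,e_r$ and $B$ spanned by $e_{r+1},\dots,e_{2r}$.

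\textbf{Complements.} The map $V\mapsto V^\perp$ is a bijection from $r$-dimensional to $(\Delta-r)$-dimensional subspaces. This uses nondegeneracy, so that $\dim V^\perp=\Delta-\dim V$, together with the identity $(V^\perp)^\perp=V$. A bijection carries the uniform distribution to the uniform distribution, so $A^\perp$ and $B^\perp$ are uniform over $(\Delta-r)$-dimensional subspaces.

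The main obstacle I expect is only to make sure that the counting does not secretly depend on whether $A_0\cap A_0^\perp\neq0$. Once the count is seen to depend solely on $\dim A_0^\perp=\Delta-r$, the remaining steps are routine.
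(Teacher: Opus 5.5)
Your proof is correct and follows essentially the same route as the paper: you count the admissible $B$ for a fixed $A$ as the $r$-dimensional subspaces of the $(\Delta-r)$-dimensional space $A^\perp$, use the $A\leftrightarrow B$ symmetry, and transfer uniformity through the bijection $V\mapsto V^\perp$. You also make explicit two points the paper leaves implicit: nondegeneracy gives $\dim A_0^\perp=\Delta-r$ even when $A_0$ is not isotropic, and $r\le\Delta/2$ guarantees the ensemble is nonempty.
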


\begin{proof}
For a fixed $r$-dimensional $A$, the admissible choices of $B$ are exactly the $r$-dimensional subspaces of the $(\Delta-r)$-dimensional space $A^\perp$.  Their number depends only on the two dimensions, not on $A$.  Hence every $A$ occurs in the same number of admissible pairs.  Symmetry gives the claim for $B$, and orthogonal complementation is a bijection between the two relevant Grassmannians.
\end{proof}

\begin{lemma}\label{lem:flag-coupling}
Let $k'\le k\le\Delta$.  Sample a uniform $k$-dimensional subspace $V\subseteq\F_Q^\Delta$ and then a uniform $k'$-dimensional subspace $U\subseteq V$.  The marginal distribution of $U$ is uniform among all $k'$-dimensional subspaces of $\F_Q^\Delta$.  Consequently, if a random $k$-dimensional linear code is $(\rho,L)$-list decodable with probability at least $1-p$, then the same is true for a random $k'$-dimensional linear code.
\end{lemma}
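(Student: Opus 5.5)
The plan is to prove the flag-coupling statement by a double-counting symmetry argument, in the same spirit as \Cref{lem:orthogonal-uniform-marginals}, and then deduce the list-decoding transfer from monotonicity of list decodability under taking subcodes.

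\emph{Uniform marginal.} Fix a $k'$-dimensional subspace $U_0\subseteq\F_Q^\Delta$. We compute
\[
    \Pr[U=U_0]=\sum_{V\supseteq U_0}\Pr[V]\cdot\Pr[U=U_0\mid V]
    =\frac{\#\{V:\dim V=k,\ V\supseteq U_0\}}{\binom{\Delta}{k}_Q}\cdot\frac{1}{\binom{k}{k'}_Q}.
\]
The number of $k$-dimensional $V$ containing $U_0$ equals the number of $(k-k')$-dimensional subspaces of $\F_Q^\Delta/U_0$, namely $\binom{\Delta-k'}{k-k'}_Q$. This depends only on the dimensions and not on $U_0$, so $\Pr[U=U_0]$ is the same for every $U_0$, which gives uniformity. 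Alternatively, one can avoid the explicit count: $\mathrm{GL}_\Delta(\F_Q)$ acts transitively on $k'$-dimensional subspaces, and the two-step sampling law is invariant under this action. I would present the counting version because it is more self-contained.

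\emph{Transfer.} List decodability at radius $\rho$ with list size $L$ is monotone under taking subcodes: for every center $g$, we have $B_\rho(g)\cap U\subseteq B_\rho(g)\cap V$ whenever $U\subseteq V$. Couple a uniform $k'$-dimensional code $U$ with a uniform $k$-dimensional code $V\supseteq U$ as above; by the first part, $U$ has exactly the law of a random $k'$-dimensional linear code. On the event that $V$ is $(\rho,L)$-list decodable, which has probability at least $1-p$, the code $U$ is also $(\rho,L)$-list decodable. Hence the failure probability for $U$ is at most $p$.

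The only mildly delicate point is checking that the sampling of $U$ inside $V$ really yields the stated Gaussian-binomial ratio, and that the count is independent of $U_0$. Everything else is routine. No result from earlier in the paper is needed beyond the definitions.
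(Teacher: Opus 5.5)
Your proposal is correct and follows essentially the same route as the paper: the marginal is uniform because the number of $k$-dimensional superspaces of a fixed $U_0$ depends only on $(\Delta,k,k')$, and the transfer follows from monotonicity of list decodability under subcodes together with the coupling. You go slightly further by writing out the explicit Gaussian-binomial count, which the paper leaves implicit.
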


\begin{proof}
For a fixed $U$, the number of $k$-dimensional superspaces $V$ containing it depends only on $(\Delta,k,k')$, so the marginal is uniform.  List decodability is monotone under passage to a subcode: every Hamming ball contains no more words of $U$ than of $V$.  The probabilistic conclusion follows under the stated coupling.
\end{proof}

We use the following form of the random-linear-code theorem.

\begin{theorem}[\cite{DBLP:journals/corr/abs-1001-1386}]\label{thm:ghk-random-linear}
For every prime power $Q$, every $p\in(0,1-1/Q)$, and every fixed $\eta>0$, a uniformly random $\F_Q$-linear code of rate $1-H_Q(p)-\eta$ is $(p,L)$-list decodable with probability $1-Q^{-\Omega(\Delta)}$, where
\[
    L=O_{p,Q}(1/\eta).
\]
\end{theorem}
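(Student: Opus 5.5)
The plan is to follow the strategy of Guruswami--H\aa stad--Kopparty. Let $\mcC\subseteq\F_Q^\Delta$ be a uniformly random subspace of dimension $k=(1-H_Q(p)-\eta)\Delta$. The code fails to be $(p,L)$-list decodable iff some center $y\in\F_Q^\Delta$ has $|\mcC\cap B(y,p\Delta)|>L$. Since $\mcC$ is linear, we may translate and assume $y\notin\mcC$. The problem then reduces to bounding, for a fixed $y$, the probability that the random subspace meets the ball $B:=B(y,p\Delta)$ in more than $L$ points, with a failure probability small enough to absorb a union bound over the $Q^\Delta$ centers.

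\textbf{Step 1 (the naive bound).} First recall the Zyablov--Pinsker first-moment argument. If $v_1,\ldots,v_m\in\F_Q^\Delta$ are linearly independent, then
\[
    \Pr[v_1,\ldots,v_m\in\mcC]\le Q^{-m(\Delta-k)}.
\]
The number of $m$-tuples from $B$ is at most $Q^{mH_Q(p)\Delta}$. So the probability that $B$ contains $m$ independent codewords is at most
\[
    Q^{m H_Q(p)\Delta-m(\Delta-k)}=Q^{-m\eta\Delta},
\]
and this survives the union bound over centers once $m>1/\eta$.

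A set of $L+1$ codewords in $B$ contains an independent subset of size only about $\log_Q(L+1)$. The naive argument therefore gives list size $Q^{O(1/\eta)}$, not $O(1/\eta)$. All remaining work goes into removing this exponential loss.

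\textbf{Step 2 (the GHK improvement).} Instead of using only independent subsets, I would count ``$L$-tuples in $B$ whose span has small dimension but many points in $B$''. The key claim is as follows. Let $x_1,\ldots,x_L$ be drawn independently and uniformly from $B$; these model the codewords. Then, with probability $1-Q^{-\Omega(\Delta)}$ per tuple, any set of $L$ points of $B$ spanning a space of dimension $m$ forces roughly $\Omega(L)$ \emph{new} independent directions, as long as $L\gg 1/\eta$.

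The mechanism is the following. For fixed independent $v_1,\ldots,v_m\in B$, the number of vectors of their span that also lie in $B$ is controlled. A nontrivial linear combination of points in a ball of radius $p<1-1/Q$ lands in that ball only rarely, with exponent bounded away from zero. GHK make this precise through a ``$2$-increasing chain'' / ``$m$-wise independence'' counting lemma: among any $L$ points of $B$, one can extract a sequence in which each point lies outside the span of the earlier ones, except for at most $O(\log L)$ of them. One then union-bounds over such sequences using the first-moment estimate from Step 1, now with $m=\Omega(L)$ effective constraints. This yields failure probability
\[
    Q^{\Delta}\cdot Q^{-\Omega(L\eta\Delta)}=Q^{-\Omega(\Delta)}
\]
when $L=C_{p,Q}/\eta$.

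\textbf{Step 3 (conclusion and main obstacle).} Taking the union bound over centers gives $(p,L)$-list decodability with probability $1-Q^{-\Omega(\Delta)}$ and $L=O_{p,Q}(1/\eta)$. I expect the main obstacle to be the Step 2 lemma bounding how many points of a low-dimensional span can lie inside the Hamming ball. This is exactly where the dependence on $p<1-1/Q$ and the constant in $O_{p,Q}$ enter. I would first try to prove it by a Fourier/Ramsey-type count of linear combinations landing in $B$, following GHK's ``spread'' argument, rather than attempt a fully self-contained derivation.
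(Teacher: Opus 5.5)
\textbf{There is a genuine gap in Step~2.} The paper gives no proof of this statement; it imports it from Guruswami--H\aa stad--Kopparty (GHK). So your sketch has to stand on its own, and it does not.

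\emph{What is fine.} Step~1 is correct, and it correctly explains why the Zyablov--Pinsker count only gives $L=Q^{O(1/\eta)}$.

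\emph{What fails.} Step~2 carries the entire content of the theorem. The combinatorial claim you attribute to GHK is false: it says that among any $L$ points of $B$ one can extract a sequence in which all but $O(\log L)$ points lie outside the span of the earlier ones. The vectors supported on a fixed set of $\lfloor p\Delta\rfloor$ coordinates form a subspace of $\F_Q^\Delta$ lying entirely inside $B(0,p\Delta)$. Hence $L$ points of $B$ can have rank only $\lceil\log_Q L\rceil$. A list $\mcC\cap B(y,p\Delta)$ is a worst-case subset of the ball, not a random one, so no deterministic statement of this kind is available.

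Your final count $Q^{\Delta}\cdot Q^{-\Omega(L\eta\Delta)}$ applies Step~1 with $m=\Omega(L)$. That presupposes every large list has rank $\Omega(L)$. For a random code this holds with high probability, but only as a consequence of the GHK lemma plus the union bound below, so invoking it here is circular. The ``2-increasing chain'' in GHK is a statement about coefficient vectors of linear combinations of \emph{random} ball vectors that land back in the ball. It is not a rank statement about arbitrary points of $B$.

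\textbf{What is actually needed has two parts.}

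(a) A lemma about random tuples. For $X_1,\dots,X_m$ i.i.d.\ uniform in $B(0,p\Delta)$ and all $m\le\ell=\Theta(1/\eta)$,
\[
  \Pr\bigl[|\mathrm{span}(X_1,\dots,X_m)\cap B(0,p\Delta)|\ge C_{p,Q}\,m\bigr]\le Q^{-c\Delta},
\]
with an explicit constant $c>1+H_Q(p)$; $c\ge 2$ suffices.

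(b) A transfer to worst-case lists. Take a greedy basis $w_1,\dots,w_m$ of the shifted list $(\mcC-y)\cap B(0,p\Delta)$, chosen from list elements.
\begin{itemize}
\item If $m\ge\ell$: the codewords $w_1+y,\dots,w_\ell+y$ contain $\ell-1$ linearly independent ones, so Step~1 applies.
\item If $m<\ell$: the span of $w_1,\dots,w_m$ contains the whole list, hence at least $L=C_{p,Q}\ell>C_{p,Q}m$ ball points. So $(w_1,\dots,w_m)$ is one of at most $Q^{-c\Delta}|B|^m$ bad tuples. Since $w_1+y,\dots,w_m+y\in\mcC$ span a space of dimension at least $m-1$, the union bound over centers and bad tuples contributes, for each $m<\ell$,
\[
  Q^{\Delta}\cdot Q^{-c\Delta}|B|^m\,Q^{-(m-1)(\Delta-k)}\le Q^{(1+H_Q(p)-c)\Delta}.
\]
\end{itemize}

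\emph{Why your quantitative claim is too weak.} Your ``probability $1-Q^{-\Omega(\Delta)}$ per tuple'' has an unspecified constant. For small $m$ the membership factor $|B|^mQ^{-m(\Delta-k)}=Q^{-m\eta\Delta}$ is nowhere near $Q^{-\Delta}$. So the exceptional density itself must beat both the $Q^{\Delta}$ centers and the dimension lost because the shifted list lies in the coset $\mcC-y$. Your remark that one may translate to assume $y\notin\mcC$ does not address this, and it is false when $y\in\mcC$.

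\emph{The core lemma is left unproved.} Part (a) is the heart of GHK; it is where $p<1-1/Q$ and the constant $C_{p,Q}$ enter. Your proposal explicitly leaves it unproved, so the argument is incomplete at its only nontrivial step.
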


We also use the standard point probability
\begin{equation}\label{eq:point-in-random-subspace}
    \Pr[x\in U]
      =\frac{Q^k-1}{Q^\Delta-1}
      \le Q^{k-\Delta}
\end{equation}
for a fixed nonzero $x\in\F_Q^\Delta$ and a uniform $k$-dimensional subspace $U$.

\begin{proof}[Proof of \Cref{thm:random-inner-package}]
Let
\[
    k:=R_{\rm in}\Delta,
    \qquad
    r:=\frac{\Delta-k}{2},
    \qquad
    \rho:=\frac{\Delta-r}{\Delta}=\frac{1+R_{\rm in}}2,
\]
which are integral for admissible $\Delta$.  Sample $(A,B)$ from the orthogonal-pair ensemble with $\dim A=\dim B=r$.

\paragraph{Restriction of scalars.}
Choose trace-dual $\F_2$-bases
$\alpha_1,\ldots,\alpha_{b_{\rm in}}$ and
$\beta_1,\ldots,\beta_{b_{\rm in}}$ of $\F_Q$, so
\[
    \Tr_{\F_Q/\F_2}(\alpha_i\beta_j)=\delta_{ij}.
\]
Let $\iota_X$ expand $\F_Q$-coordinates in the $\alpha$-basis and let $\iota_Z$ expand them in the $\beta$-basis.  For all $x,z\in\F_Q^\Delta$,
\[
    \langle\iota_X(x),\iota_Z(z)\rangle_{\F_2}
      =\Tr_{\F_Q/\F_2}\!\left(\langle x,z\rangle_{\F_Q}\right).
\]
It follows, by nondegeneracy of the trace and a dimension count, that for every $\F_Q$-subspace $V$,
\begin{equation}\label{eq:trace-dual-transfer}
    \iota_X(V)^\perp=\iota_Z(V^\perp),
    \qquad
    \iota_Z(V)^\perp=\iota_X(V^\perp).
\end{equation}
Both maps preserve block support.

Define
\[
    \mcC_Z^\perp:=\iota_X(A),
    \qquad
    \mcC_X:=\iota_X(B^\perp),
    \qquad
    \mcC_X^\perp:=\iota_Z(B),
    \qquad
    \mcC_Z:=\iota_Z(A^\perp).
\]
Equation \eqref{eq:trace-dual-transfer} makes these assignments consistent.  Moreover, $A\perp B$ gives $A\subseteq B^\perp$, hence $\mcC_Z^\perp\subseteq\mcC_X$.  The logical binary dimension is
\[
    b_{\rm in}\bigl((\Delta-r)-r\bigr)
      =b_{\rm in}k,
\]
so the folded rate is $k/\Delta=R_{\rm in}$.

\paragraph{Distance events.}
Let $U$ be a uniform $k'$-dimensional subspace of $\F_Q^\Delta$.  By \eqref{eq:point-in-random-subspace} and the $Q$-ary Hamming-ball estimate,
\begin{equation}\label{eq:random-subspace-distance}
    \Pr\bigl[d(U)\le\tau_{\rm in}\Delta\bigr]
      \le
      Q^{\Delta H_Q(\tau_{\rm in})+k'-\Delta}.
\end{equation}
For $k'=\Delta-r=\rho\Delta$, the exponent divided by $\Delta$ is at most
\[
    \tau_{\rm in}+\frac{\eps_{\rm in}}4+\rho-1
      =-\frac{\eps_{\rm in}}4.
\]
For $k'=r=(1-R_{\rm in})\Delta/2$, it is at most
\[
    \tau_{\rm in}+\frac{\eps_{\rm in}}4
      +\frac{1-R_{\rm in}}2-1
      =-R_{\rm in}-\frac{\eps_{\rm in}}4.
\]
By \Cref{lem:orthogonal-uniform-marginals} and a union bound, with probability $1-Q^{-\Omega(\Delta)}$ all four spaces
$A,B,A^\perp,B^\perp$ have distance greater than $\tau_{\rm in}\Delta$.  Because $\iota_X$ and $\iota_Z$ preserve block support, this simultaneously gives
\begin{itemize}[leftmargin=2em]
    \item nonzero $X$- and $Z$-stabilizer weight greater than $\tau_{\rm in}\Delta$;
    \item $d(\mcC_X),d(\mcC_Z)>\tau_{\rm in}\Delta$, and hence both quotient distances greater than $\tau_{\rm in}$; and
    \item $\delta(\mcC)>\tau_{\rm in}$.
\end{itemize}

\paragraph{Local list bounds.}
The entropy condition rearranges to
\[
    1-H_Q(\tau_{\rm in})-\frac{\eps_{\rm in}}4
      \ge
      1-\tau_{\rm in}-\frac{\eps_{\rm in}}2
      =\rho.
\]
Apply \Cref{thm:ghk-random-linear} at radius $\tau_{\rm in}$ and fixed rate slack $\eps_{\rm in}/4$.  It gives list size
\[
    \ell_{\rm in}=O_{\tau_{\rm in},Q}(1/\eps_{\rm in})
\]
for a uniform subspace of dimension
$\lfloor(1-H_Q(\tau_{\rm in})-\eps_{\rm in}/4)\Delta\rfloor$ with failure probability $Q^{-\Omega(\Delta)}$.  This dimension is at least $\rho\Delta$ for admissible sufficiently large $\Delta$.  By \Cref{lem:flag-coupling}, a uniform $(\Delta-r)$-dimensional subspace has the same list-decoding guarantee.  The uniform-marginal lemma therefore applies it to both $B^\perp$ and $A^\perp$, and the support-preserving maps transfer the result to $\mcC_X$ and $\mcC_Z$ in the folded block metric.

A final union bound shows that all distance and list-decoding properties hold simultaneously with positive probability.  Since $Q$ and $\Delta$ are constants of the eventual family, one may search for a successful pair $(A,B)$ offline, verify every required property exhaustively, and hardwire the resulting inner code and brute-force local decoder.
\end{proof}

\end{document}